\newcolumntype{R}{>{\raggedleft\arraybackslash}X}
\numberwithin{equation}{section}
\numberwithin{figure}{section}
\theoremstyle{plain}
\newtheorem{theorem}{Theorem}[section]
\newtheorem{lemma}{Lemma}[section]
\newtheorem{corollary}{Corollary}[section]
\newtheorem{assumption}{Assumption}[section]
\newtheorem*{lemmaA1}{Lemma A.1}
\newtheorem*{lemmaA2}{Lemma A.2}
\theoremstyle{remark}
\providecommand{\e}[1]{\ensuremath{\times 10^{#1}}}
\DeclareMathOperator{\tr}{tr}
\newcommand{\D}{\mathrm{d}}
\newcommand{\E}{\mathrm{e}}
\newcommand{\mye}{\ensuremath{\mathsf{E}}}
\newcommand{\pa}{{\partial}}
\newcommand{\ignore}[1]{}
\newcommand{\cF}{\mathcal{F}}
\newcommand{\tv}{\tilde{v}}
\newcommand{\tu}{\tilde{u}}
\newcommand{\tw}{\tilde{w}}
\newcommand{\tzeta}{\tilde{\zeta}}
\def \a{\alpha}
\def \b{\beta}
\def \e{\varepsilon}
\def \1{\mathbf 1}
\def\P{\mathbb{P}}
\def\Q{\mathbb{Q}}
\def\R{\mathbb{R}}
\DeclareMathOperator*{\argmax}{arg\,max}
\begin{document}

\begin{frontmatter}

\title{Optimal Transport and Risk Aversion \\in Kyle's Model of Informed Trading}
\runtitle{Optimal Transport and Risk Aversion in the Kyle Model}

\def\date{August 11, 2021}

\begin{aug}
%
%
%
\author[id=au1,addressref={add1}]{\fnms{Kerry}~\snm{Back}\ead[label=e1]{kerry.e.back@rice.edu}}
\author[id=au2,addressref={add2}]{\fnms{Francois}~\snm{Cocquemas}\ead[label=e2]{fcocquemas@fsu.edu}}
\author[id=au3,addressref={add3}]{\fnms{Ibrahim}~\snm{Ekren}\ead[label=e3]{iekren@fsu.edu}}
\author[id=au4,addressref={add4}]{\fnms{Abraham}~\snm{Lioui}\ead[label=e4]{abraham.lioui@edhec.edu}}
\address[id=add1]{%
\orgdiv{Jones Graduate School of Business and School of Social Sciences},
\orgname{Rice University}}

\address[id=add2]{%
\orgdiv{College of Business},
\orgname{Florida State University}}

\address[id=add3]{%
\orgdiv{Department of Mathematics},
\orgname{Florida State University}}

\address[id=add4]{%
\orgname{EDHEC Business School}}
\end{aug}

%
\support{I. Ekren gratefully acknowledges financial support from the NSF Grant DMS-2007826.}

\begin{abstract}
We establish connections between optimal transport theory and the dynamic version of the Kyle model, including new characterizations of informed trading profits via conjugate duality and Monge-Kantorovich duality.  We use these connections to extend the model to multiple assets, general distributions, and risk-averse market makers.  With risk-averse market makers, liquidity is lower, assets exhibit short-term reversals, and risk premia depend on market maker inventories, which are mean reverting.  We illustrate the model by showing that implied volatilities predict stock returns when there is informed trading in stocks and options and market makers are risk averse.  
\end{abstract}


\end{frontmatter}

\section{Introduction}

The \citet{Kyle:1985:ContinuousAuctionsInsider} model has been a workhorse model for understanding the role of asymmetric information and liquidity in financial markets.  The dynamic version of the model reflects the reality that, in most markets, large investors split their orders into small pieces to minimize price impacts.  The continuous-time version is especially tractable.
In this paper, we make a substantial extension of the continuous-time model---to multiple assets, to assets with general distributions, and to risk-averse market makers---by establishing a connection with optimal transport theory.  Optimal transport theory has been used in economics in connection with matching problems and other topics \citep{Galichon:2016:OptimalTransportMethods}, but it has had only limited applications to asymmetric information in financial markets.  Our extension accommodates distributions that are not absolutely continuous, including discrete distributions, so it is an extension even for the single-asset/risk-neutral model.  By applying conjugate duality and Monge-Kantorovich duality, we also obtain characterizations of the gains from informed trading that are new even for the single-asset/risk-neutral model.

Our extension to risk-averse market makers is motivated by intermediary asset pricing theory \citep{He-Krishnamurthy:AER2013, He-Krishnamurthy:ARFE2018}.
We envision an  inter-dealer market in which dealers are price takers and in which there is a representative investor (dealer).   The representative dealer's marginal utility evaluated at aggregate dealer wealth serves as a stochastic discount factor in the inter-dealer market.   Competition between dealers ensures that the price they offer to non-dealer traders is the price at which they can trade in the inter-dealer market and hence is the price determined by the representative dealer's marginal utility.

By including risk-averse market makers in the Kyle model, we merge the two main theories of market liquidity: adverse selection and dealer aversion to inventory risk.  We can quantify the contribution each makes to market illiquidity.  
With either risk-neutral or risk-averse market makers, price changes are driven by orders, and the stochastic matrix (Kyle's lambda) that relates  orders to price changes is always symmetric and positive semidefinite.  We show that market liquidity is lower (lambdas are larger in the partial order of positive definiteness) when market makers are more risk averse. This is consistent with models of the bid-ask spread based on dealer aversion to inventory risk \citep{stoll}.  Furthermore, risk aversion produces mean reversion in market maker inventories, whereas they are a random walk in the risk-neutral model.  The mean reversion in inventories produces expected price changes (risk premia)  via the lambda matrix.

Our model exhibits excess volatility  and short-term return reversals, as in other models of inventory risk \citep{campbell-kyle,jegadeesh-titman}.
 When the informed trader and noise traders have been selling in aggregate, prices fall due to both the informational component of orders and due to the rising risk premia that result from rising dealer inventories.  The reverse is true when the informed trader and noise traders have been buying.  Thus, price changes exceed the changes that are induced by information alone.  The `excess' price changes are reversed on average as risk diminishes over time and risk premia are realized.  The precise expression of the excess volatility phenomenon that we derive is that quadratic variations (cumulative minute-to-minute variation) of price processes  exceed unconditional variances (long-run variation) of terminal prices.  Quadratic variations are higher because they equal risk-neutral variances, which are higher than variances under the real probability measure of the economy because the risk-neutral distribution puts high weights on `bad' states. For market makers, `bad' states are states in which the informed trader has extreme information, so the tails get more weight under the risk-neutral distribution.

As an application, we study informed trading in an underlying asset and an option on the asset.   There is evidence that informed traders do sometimes use options markets \citep{cao-chen-griffin, armental, hu}.  We show that when an informed trader can trade in options and market makers are risk averse, option-implied volatilities should predict the return of the underlying asset: future returns are higher in our model when implied volatilities are higher.  This is consistent with empirical evidence presented by \citet{anetal}.  This type of predictability cannot arise with risk-neutral market makers, because expected returns always equal the risk-free rate when market makers are risk neutral.  We also show that, regardless of whether noise trades in an option and underlying asset are positively or negatively correlated, risk-averse market makers usually end up with hedged positions---for example, long the underlying asset if they are short a call.  This is again contrary to the risk-neutral model.

\section{Literature Review}

 \citet{Cetin:2016:MarkovianNashEquilibrium} and \citet{Ying:2020WP} introduce risk-averse market makers into the dynamic Kyle model.  The difference between Cetin and Danilova's model and our model is roughly the difference between Bertrand and Walrasian competition.  Cetin and Danilova assume that a representative dealer always fulfills the net demand of the informed and noise traders at a price such that the dealer is indifferent about trading.  On the other hand, we assume that the dealer fulfills the demand at a price such that, after fulfilling it, the dealer is indifferent at the margin about making any further trades at that price.  As stated above, our assumption is motivated by the idea that dealers are price takers in an inter-dealer market.  Our assumption is the same as the assumption made in the intermediary asset pricing literature referenced above.   \citet{Ying:2020WP} takes a standard representative investor approach to pricing.  His representative investor consumes the aggregate dividend of the economy.  Apparently, gains and losses from trading with informed and noise traders are shared broadly across investors and have only a negligible effect on the market's pricing kernel.  Instead of evaluating the representative dealer's marginal utility at the economy's aggregate dividend as Ying does, we evaluate the representative dealer's marginal utility at aggregate dealer profits/wealth.  Again, our approach is consistent with the intermediary asset pricing literature in that it is dealer wealth rather than the aggregate dividend that determines asset prices.

Our premise that dealers behave competitively in an inter-dealer market has roots in the intermediary asset pricing literature, and it is also the premise of  other market microstructure studies, including \citet*{Naik:RFS1999} and \citet*{lester_etal}.  They consider models with only a single risky asset and investigate issues that differ from those we study. 

The topics we look at have been analyzed to some extent in single-period models.     \citet*{bollen_etal} combine inventory risk and adverse selection in a model of bid-ask spreads, modeling the inventory cost as the price of an option on an underlying asset the value of which depends on whether the dealer is trading with an informed or with an uninformed trader.  \citet{Subra1991} analyzes risk aversion on the part of the informed trader and market makers in a single-period Kyle model with a normally distributed asset.  Gaussian risk-neutral multi-asset single-period Kyle models are studied by 
 \citet{Caballe:1994:ImperfectCompetitionMultiSecurity}, \citet{Pasquariello:2015:StrategicCrossTradingStock} and \citet{GarciadelMolino:2020:MultivariateKyleModel}.  Single-period models of informed trading in options include \citet{Biais:1994:InsiderLiquidityTrading} and \citet{Easley:1998:OptionVolumeStock}.  The first of these considers an underlying asset that can take only three possible values and assumes, as in the Kyle model, that competitive market makers set prices after seeing an order from either an informed or uninformed trader.  The second considers an underlying asset that can take only two possible values and assumes competitive market makers quote bid and ask prices at which informed and uninformed traders can trade.
The only application of optimal transport theory to informed trading models of which we are aware is 
 \citet{Kramkov:2019:OptimalTransportProblem}.  They study a version of the  \citet{Rochet:1994:InsiderTradingNormality} model, which is a variant of the single-period Kyle model in which the informed trader can condition her order on realized noise trades.  
 
The continuous-time Kyle model with risk-neutral market makers has been applied and extended many times. A short list would include \citet{Back:1992:InsiderTradingContinuous}, \citet*{Back:2000:ImperfectCompetitionInformed}, \citet{Baruch:2002:InsiderTradingRisk}, \citet{Back:ECMA2004}, \citet{Caldentey:2010:InsiderTradingRandom}, \citet*{CCD2013}, \citet{Anderson-Smith:AER2013}, \citet{Collin-Dufresne:2016:InsiderTradingStochastic},  \citet{Cetin:2018:FinancialEquilibriumAsymmetric}, and \citet{Back:2018:ActivismStrategicTrading}.  Especially relevant to our work are the papers that study multiple assets. 
 \citet{Lasserre:2004:AsymmetricInformationImperfect} analyzes multiple assets in what is essentially a Gaussian model.  He assumes the vector of asset values is homeomorphic to a Gaussian vector, so market makers can filter for the Gaussian vector and then compute the conditional distribution of the asset values via the homeomorphism.  This is a special case of our model that excludes the case of options on an underlying asset that we study.  \citet{Back:1993:AsymmetricInformationOptions} and \citet{Back:2015:InformationalRoleStock} also study multi-asset versions of the continuous-time Kyle model.  The first of these studies a normally distributed asset value and a call option on the asset and makes a special parametric assumption to establish the existence of an equilibrium.  The second studies informed trading in a stock and bond but assumes that the informed trader's signal has only two possible values.  Our results include both of these as special cases.
%
\section{Optimal Transport with Risk Neutrality}\label{s:riskneutral}
%

Here, we describe the model with risk-neutral market makers and its equilibrium. We extend prior literature by allowing for multiple assets and distributions that are not absolutely continuous, including distributions supported on lower-dimensional spaces, as occurs with derivative securities.  The only assumption we make regarding the distribution of asset values is that the covariance matrix is finite.  We also give new characterizations of the expected profit of the informed trader.

There is a risk-free asset with risk-free rate normalized to zero.
There are $n$ risky assets.  The vector $\tv$ of risky asset values satisfies $\mye\left[\|\tv\|^{2}\right] < \infty$.  Let $F$ denote the distribution function of $\tv$.  The assets are traded on the time horizon $[0,T]$.  At date~$T$, the vector $\tv$ is publicly revealed.
There is a single informed trader who observes $\tv$ at date 0 and is risk neutral.  There are also noise (or ``liquidity'') traders whose cumulative trades form a vector Brownian motion $Z$ with zero drift and instantaneous covariance matrix $\Sigma$.\footnote{We can extend our results to noise trades satisfying $\D Z_t = \sigma(t)\,\D W_t$ where $W$ is a vector of independent Brownian motions and $\sigma$ is a continuous bounded matrix-valued function of time that is nonsingular for all $t$ and is such that $\sigma^{-1}$ is also a bounded function of $t$.  However, to economize on notation, we take $\sigma$ to be constant.  This produces the constant instantaneous covariance matrix $\Sigma = \sigma \sigma'$.}  Let $G$ denote the distribution function of $Z_T$, which is normal $(0,T\Sigma)$.  Let $X_t$ denote the vector of positions of the informed trader in the risky assets at date $t$.  Due to the risk neutrality of the informed trader, it is without loss of generality to take $X_0=0$.  We are going to allow the informed trader to use a mixed strategy.  We denote a random vector used for mixing by $\tu$.  It takes values in $\R^n$ and has an absolutely continuous distribution.  We require $X$ to be a continuous semimartingale relative to the filtration generated by $\tu$, $\tv$, and $Z$.\footnote{It is not necessary to assume that the informed trader observes the noise trades directly, because in equilibrium the informed trader can infer them by observing prices.  Also, we could allow jumps (discrete trades) in $X$, but it is  straightforward to show that jumps in $X$ are suboptimal, as in \citet{Back:1992:InsiderTradingContinuous}, so we exclude them for the sake of brevity.}  Set $Y=X+Z$.  The differential $\D Y_t$ of $Y$ at each date~$t$ is the net market order at~$t$.

Risk-neutral market makers observe the net order process $Y$ and compete to fill the orders.  Competition and risk neutrality force prices to equal expected values conditional on the information in orders, which we write as 
\begin{equation}\label{eq1}
    P_t = \mye[\tv \mid \cF^Y_t]\,.
\end{equation}
We look for an equilibrium in which $Y_t$ is a sufficient statistic for $P_t$ and denote the pricing rule by $P_t = H(t,Y_t)$.  In addition to \eqref{eq1}, the other equilibrium condition is that the informed trader's strategy is optimal, given the pricing rule $H$.  The informed trader's realized profit is
\begin{equation}\label{profit1}
\int_0^T (\tv-P_t)'\,\D X_t - \langle P,X\rangle_T\,,
\end{equation}
where $\langle \cdot,\cdot   \rangle$  denotes the sharp bracket process of the continuous vector semimartingales $P$ and $X$.  This formulation of the informed trader's profit follows from the intertemporal budget constraint of \citet{Merton:JET1971} via integration by parts, as shown by \citet{Back:1992:InsiderTradingContinuous} in the univariate case.\footnote{The formula \eqref{profit1} includes any gain or loss at the announcement date $(\tv-P_T)'X_T$ resulting from a jump in prices upon announcement, though such jumps never occur in equilibrium, that is, $P_T=\tv$.  A formula for the sharp bracket is given in Equation \eqref{bidask}.}  In equilibrium, we will have $\D X_t = \theta_t\,\D t$ for a vector process $\theta$, and in this case the realized profit is
$
 \int_0^T (\tv-P_t)'\theta_t\,\D t$,
which is the formulation assumed by \citet{Kyle:1985:ContinuousAuctionsInsider} in the univariate case.  The equilibrium condition is that the informed trader maximizes the expected value of \eqref{profit1}, conditional on $\tv$ and recognizing that $P_t = H(t,Y_t)$ with $Y = X+Z$.

The following theorem from optimal transport theory is key to the construction and characterization of an equilibrium.
Our contribution to the following is to establish the last statement.  We use that statement to construct the equilibrium mixed strategy of the informed trader when $\tv$ does not have an absolutely continuous distribution.  All proofs are in the appendix.  

%
%

\begin{theorem}[Corollary to Brenier's Theorem]\label{thm_brenier}
There exists a unique convex function $\Gamma:\R^n\to \R$ such that $\nabla \Gamma(Z_T)$ has distribution function $F$ and $\mye[\Gamma(Z_T)]=0$. If $F$ is absolutely continuous, then $\nabla \Gamma $ is invertible (on the support of $F$) and $\tzeta := (\nabla \Gamma)^{-1}(\tv)$ has distribution function $G$.  If $F$ is not absolutely continuous, there still exists a random vector $\tzeta$ depending on $\tu$ and $\tv$ such that $\nabla \Gamma(\tzeta)= \tv$ and $\tzeta$ has distribution function $G$.
\end{theorem}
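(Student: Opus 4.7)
The plan is to split the theorem into three assertions: obtain the first two from Brenier's theorem with quadratic cost, and prove the main (third) statement by disintegration combined with a randomization argument that exploits the mixing variable $\tu$.

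For assertion one, I would apply Brenier's theorem to the transport between $G$, which is absolutely continuous with finite second moment since $\mye[\|Z_T\|^2]=T\tr\Sigma<\infty$, and $F$, which also has finite second moment by hypothesis. Brenier's theorem yields a convex potential $\Gamma:\R^n\to\R$, unique up to an additive constant $G$-almost everywhere, whose gradient pushes $G$ forward to $F$. The integrability of $\Gamma(Z_T)$ follows from the subgradient inequality bound $\Gamma(z)\le \Gamma(0)+\nabla\Gamma(z)\cdot z$ and a linear lower bound from any fixed subgradient, together with the quadratic moment conditions on $Z_T$ and on $\nabla\Gamma(Z_T)=\tv$ (via Cauchy--Schwarz). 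Subtracting $\mye[\Gamma(Z_T)]$ then fixes a unique representative with the required normalization.

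For assertion two, when $F$ is absolutely continuous I would invoke Brenier again in the reverse direction to obtain a convex potential $\Gamma^{\ast}$ with $\nabla\Gamma^{\ast}$ pushing $F$ to $G$. The polar factorization conclusion of the theorem then gives $\nabla\Gamma\circ\nabla\Gamma^{\ast}=\mathrm{id}$ and $\nabla\Gamma^{\ast}\circ\nabla\Gamma=\mathrm{id}$ almost everywhere with respect to $F$ and $G$ respectively, so $\tzeta:=(\nabla\Gamma)^{-1}(\tv)=\nabla\Gamma^{\ast}(\tv)$ has law $G$.

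For assertion three, $\nabla\Gamma$ need not be injective, so I would construct $\tzeta$ by disintegration. Let $\pi$ denote the joint law of $(Z_T,\nabla\Gamma(Z_T))$ on $\R^n\times\R^n$, whose first marginal is $G$ and whose second marginal is $F$. Disintegrating along the second coordinate yields a Markov kernel $K$ with $\pi(\D\zeta,\D v)=F(\D v)\otimes K(v,\D\zeta)$, and $K(v,\cdot)$ is concentrated on $\{\zeta:\nabla\Gamma(\zeta)=v\}$ for $F$-a.e.\ $v$. Because $\tu$ is an absolutely continuous $\R^n$-valued random vector independent of $\tv$, a standard noise-outsourcing (Borel isomorphism) argument produces a measurable map $f:\R^n\times\R^n\to\R^n$ such that $f(\tu,v)\sim K(v,\cdot)$ for each $v$. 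Setting $\tzeta=f(\tu,\tv)$ gives a vector satisfying $\nabla\Gamma(\tzeta)=\tv$ almost surely and with marginal law $G$ by the disintegration identity.

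The main obstacle is this last step, which rests on two technical points. First, one must verify that the kernel $K(v,\cdot)$ is genuinely supported on the fibre $(\nabla\Gamma)^{-1}(v)$: this is immediate on the support of $\pi$ in the joint sense, but must be upgraded to a $v$-wise statement by a Fubini-type null-set argument. Second, one must justify the randomization step; this is where the hypothesis that $\tu$ has an absolutely continuous distribution on $\R^n$ is essential, since it provides the atomless source of randomness needed to realize the regular conditional distribution $K(\tv,\cdot)$ in a measurable way. Once these two ingredients are in place the conclusion follows.
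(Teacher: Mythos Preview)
Your proposal is correct and follows essentially the same architecture as the paper: Brenier's theorem for existence and uniqueness up to a constant, the convexity bounds $\Gamma(0)+\nabla\Gamma(0)'z\le\Gamma(z)\le\Gamma(0)+\nabla\Gamma(z)'z$ together with Cauchy--Schwarz for integrability (this is the content of the paper's Lemma~A.1), and disintegration of the optimal coupling for the non-absolutely-continuous case (the paper's Lemma~A.2). The one substantive variation is the randomization device: you invoke a generic noise-outsourcing/Borel-isomorphism lemma to realize the conditional law $K(v,\cdot)$ from the atomless variable $\tu$, whereas the paper applies Brenier's theorem a second time, for each fixed $v$, to transport the law $G$ of $\tu$ onto $K(v,\cdot)$ and thereby obtain $f_v$ explicitly as a monotone map. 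Both routes work; yours is the more standard probabilistic move, while the paper's has the minor aesthetic advantage of staying entirely within optimal-transport machinery, though it leaves the joint measurability of $(u,v)\mapsto f_v(u)$ somewhat implicit.
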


The function $\Gamma$ in this result is called a Brenier potential.  The gradient $\nabla \Gamma$ is called a transport map.  It transports the distribution $G$ to the distribution $F$.  The transport map $\nabla \Gamma$ solves the Monge problem with quadratic objective \citep[][Theorem 6.5]{Galichon:2016:OptimalTransportMethods} and hence is an optimal transport. The potential $\Gamma$ is very useful for describing equilibrium trading profits, in part because of Monge-Kantorovich duality, as we explain below.  The condition $\mye[\Gamma(Z_T)]=0$ in Theorem~\ref{thm_brenier} is a normalization; absent this normalization, the potential is unique only up to an additive constant.  The normalization simplifies the description of equilibrium trading profits.

Let $k(t,y,z)$ denote the transition density of the noise trade process $Z$ from $(t,y)$ to $(T,z)$; that is, $k(t,y,\cdot)$ is the normal density function with mean vector $y$ and covariance matrix $(T-t)\Sigma$.  Consider the following pricing rule: Set $P_0 = \mye[\tv]$ and, for $t>0$ and $y \in \R^n$, set $P_t = H(t,Y_t)$ where
\begin{equation}\label{price1}
H(t,y) := \int_{\R^n}  \nabla \Gamma(z)k(t,y,z)\,\D z\,.
\end{equation}
With some abuse of notation, we set 
\begin{equation}\label{Gammaty}
\Gamma(t,y) := \int_{\R^n}  \Gamma(z)k(t,y,z)\,\D z \,.
\end{equation}
The normalization in Theorem~\ref{thm_brenier} means that $\Gamma(0,0)=0$.  
The following lemma states that $\Gamma$ is sufficiently smooth to apply It\^o's lemma and also that we can interchange differentiation and expectation on the right-hand side of \eqref{Gammaty} to compute the gradient in $y$ of $\Gamma(t,y)$, which we denote as $\nabla \Gamma (t,y)$.  

%
%

\begin{lemma}\label{lem_dGamma}
$\Gamma(t,y)$ is continuously differentiable in $t$ and twice continuously differentiable in $y$, and $\nabla \Gamma(t,y) = H(t,y)$  for $t\in (0,T)$ and $y \in \R^n$.  
\end{lemma}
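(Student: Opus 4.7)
The plan is to establish the regularity of $\Gamma(t,y)$ by differentiating under the integral sign in \eqref{Gammaty}, and then to deduce the identity $\nabla_y \Gamma(t,y) = H(t,y)$ via Gaussian integration by parts (Stein's identity), exploiting the fact that $k(t,y,z)$ depends on $(y,z)$ only through $z-y$.

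First, I would establish the integrability needed to dominate the differentiated integrands. Since $\Gamma$ is convex, at every point of differentiability $|\Gamma(z)| \leq |\Gamma(0)| + |z|\, \|\nabla \Gamma(z)\|$. By Theorem~\ref{thm_brenier}, $\nabla \Gamma(Z_T)$ has the distribution of $\tv$, so $\mye[\|\nabla \Gamma(Z_T)\|^{2}] = \mye[\|\tv\|^{2}] < \infty$, i.e.\ $\nabla \Gamma \in L^{2}(G)$. A direct computation shows that for $t \in (0,T)$ the Radon--Nikodym derivative $k(t,y,\cdot)/G(\cdot)$ is a Gaussian-type density whose quadratic exponent in $z$ has strictly negative coefficient, and is therefore bounded on any compact neighborhood of $(t,y)$ staying away from $\{t=T\}$. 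Hence $L^{2}(G)$-integrability of $\nabla \Gamma$ transfers to $L^{2}(k(t,y,\cdot))$-integrability locally uniformly in $(t,y)$, and the same Radon--Nikodym argument yields $L^{1}(k(t,y,\cdot))$-control of $|\Gamma(z)|(1+|z-y|^{2})$ and of $\|\nabla \Gamma(z)\|(1+|z-y|)$.

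Second, I would differentiate under the integral. A direct computation gives $\partial_{t} k$, $\nabla_{y} k$, and $\nabla_{y}^{2} k$ as $k(t,y,z)$ multiplied by polynomials in $(z-y)$ of degrees $2$, $1$, and $2$ respectively. Combining with the integrability bounds from Step~1, dominated convergence justifies differentiating inside the integral and yields continuous $t$-differentiability and twice continuous $y$-differentiability of $\Gamma(t,y)$ (the second $y$-derivative is obtained by rerunning the same argument after identifying the first $y$-derivative with $H$ in the final step).

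Finally, because $k(t,y,z)$ depends on $(y,z)$ only through $z-y$, we have $\nabla_y k(t,y,z) = -\nabla_z k(t,y,z) = \frac{1}{T-t}\Sigma^{-1}(z-y)k(t,y,z)$. Differentiating under the integral therefore gives $\nabla_y \Gamma(t,y) = \frac{1}{T-t}\Sigma^{-1}\int_{\R^{n}}(z-y)\Gamma(z)k(t,y,z)\,\D z$, and Gaussian integration by parts (Stein's identity) converts the right-hand side into $\int_{\R^{n}}\nabla \Gamma(z) k(t,y,z)\,\D z = H(t,y)$. Stein's identity is established for our $\Gamma$ by a standard mollification $\Gamma * \phi_{\eps}$: each mollified potential is smooth with controlled polynomial growth so that Stein's identity applies classically, and the integrability bounds from Step~1 let one pass to the limit as $\eps \downarrow 0$. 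The main obstacle is Step~1: only the second moment of $\tv$ is assumed, so the Gaussian integrability of all polynomial weights appearing in the derivatives of $k$ has to be bootstrapped from this single hypothesis via convexity and the explicit form of the ratio $k/G$.
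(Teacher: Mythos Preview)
Your approach is correct and differs from the paper's in how you establish the identity $\nabla_y \Gamma(t,y) = H(t,y)$. The paper rewrites $\Gamma(t,y) = \int \Gamma(y+z)\,k(t,0,z)\,\D z$ and differentiates $\Gamma(y+z)$ directly, justifying the interchange by uniform integrability of the family $\{z \mapsto \nabla \Gamma(y'+z)k(t,0,z) : \|y'-y\|<\eps\}$ via the Gaussian covering argument of Lemma~A.1. You instead leave the derivative on the kernel and then invoke Stein's identity (with mollification, since $\Gamma$ is only convex) to transfer it back to $\Gamma$. Both routes rest on the single hypothesis $\nabla \Gamma \in L^2(G)$ and need a locally-uniform-in-$(t,y)$ dominator; the paper's route is shorter (no approximation step), while yours isolates the key computation as a single well-known identity and gives a cleaner integrability bookkeeping via the ratio $k/G$. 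Two minor slips to fix: your convexity bound should read $|\Gamma(z)| \leq |\Gamma(0)| + |z|\bigl(\|\nabla\Gamma(0)\| + \|\nabla\Gamma(z)\|\bigr)$, since the lower estimate $\Gamma(z)\geq \Gamma(0)+\nabla\Gamma(0)\cdot z$ brings in $\nabla\Gamma(0)$; and the quadratic coefficient in the exponent of $k(t,y,\cdot)/G(\cdot)$ equals $-t/\bigl(2T(T-t)\bigr)$, which also vanishes at $t=0$, so your local uniformity should be over compacta in $(0,T)\times\R^n$, not merely away from $t=T$. Neither affects the argument, since the lemma concerns only $t\in(0,T)$.
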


We can now prove the existence of equilibrium.  We need a mild restriction on trading strategies, stated as \eqref{restriction} below, which is described as a `no doubling strategies' condition in \cite{Back:1992:InsiderTradingContinuous}. It  holds, for example, if $\mye\int_0^T \|H(t,X_t+Z_t)\|^2\,\D t < \infty$.  The function $\Gamma^*$ in the following theorem is the convex conjugate (Fenchel transform) of the convex function $\Gamma$, defined as
$\Gamma^*(v) = \sup\,\{v'y - \Gamma(y) \mid y \in \R^n\}$.

%
%

\begin{theorem}\label{thm:riskneutral}
Let $\tzeta$ be the random vector given in Theorem~\ref{thm_brenier}.
Given the pricing rule \eqref{price1}, the strategy $\D X_t = \theta_t\,\D t$ where
\begin{equation}\label{brownbridge}
\theta_t = \frac{1}{T-t}(\tzeta - Y_t)
\end{equation}
maximizes the informed trader's profit
in the class of continuous semimartingales $X$ such that
\begin{equation}\label{restriction}
\mye \int_0^T H(t,X_t+Z_t)'\,\D Z_t = 0\,.
\end{equation}
Furthermore, given the trading strategy \eqref{brownbridge}, the pricing rule \eqref{price1} satisfies the equilibrium condition \eqref{eq1}.  The maximum expected profit of the informed trader, as of date~0 and conditional on $\tv$, is $\Gamma^*(\tv)$, and, for each $(t,v,y)$,
\begin{align}
J(t,v,y) :&= \sup_X \; \mye \left[\left.\int_t^T (v-P_u)'\,\D X_u - \int_t^T \D \langle P,X \rangle_u\,\right|\, \tv=v, Y_t=y\right] \notag\\
&= \Gamma^*(v) + \Gamma(t,y) - y'v\,, \label{valuefunction}
\end{align}
where the supremum is taken over trading strategies $X$ satisfying \eqref{restriction}.  The price vector $P$ evolves as $\D P_t = \Lambda_t\,\D Y_t$, where $\Lambda_t = \nabla^2 \Gamma(t,Y_t)$, and the matrix $\Lambda_t$ is symmetric and positive semidefinite at each date~$t$ and in each state of the world. The net order process $Y$ is a $(0,\Sigma)$--Brownian motion relative to market makers' information.
\end{theorem}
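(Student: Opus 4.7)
The plan is to verify that $J(t,v,y)=\Gamma^*(v)+\Gamma(t,y)-y'v$ is the informed trader's value function by a standard It\^o/martingale argument. Two facts underpin everything: (i) the representation \eqref{Gammaty} identifies $\Gamma(t,y)$ as the solution of the backward heat equation $\partial_t\Gamma+\tfrac{1}{2}\mathrm{tr}(\Sigma\nabla^2\Gamma)=0$ with terminal data $\Gamma(y)$, obtained by differentiating under the integral via Lemma~\ref{lem_dGamma}; and (ii) $\Gamma(t,\cdot)$ is convex, since it is an average of translates of the convex $\Gamma$, so $\nabla^2\Gamma(t,y)\succeq 0$.

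For optimality, fix any continuous semimartingale $X$ obeying \eqref{restriction} and set $\Phi_t := J(t,\tv,Y_t)+\int_0^t(\tv-P_u)'\,\D X_u-\langle P,X\rangle_t$. Applying It\^o's formula to $J(t,\tv,Y_t)$, using $\nabla_y J = P_t-\tv$ and the heat equation to cancel the $\Sigma\,\D t$ piece of $\D[Y]_t$, and computing $\D\langle P,X\rangle_t = \mathrm{tr}\bigl(\nabla^2\Gamma(t,Y_t)\,\D\langle Y,X\rangle_t\bigr)$ from $P_t=\nabla\Gamma(t,Y_t)$, the cross terms involving $\D\langle X,Z\rangle_t$ cancel by symmetry of $\nabla^2\Gamma$ and we obtain
\[
\D\Phi_t = (P_t-\tv)'\,\D Z_t - \tfrac{1}{2}\mathrm{tr}\bigl(\nabla^2\Gamma(t,Y_t)\,\D[X]_t\bigr).
\]
The drift is non-positive by convexity; \eqref{restriction} together with the independence of $\tv$ and $Z$ discharges the expectation of the $\D Z_t$ integral after a standard localization; and Fenchel's inequality gives $J(T,\tv,Y_T)=\Gamma^*(\tv)+\Gamma(Y_T)-Y_T'\tv\ge 0$. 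Combining these, $\mye[\mathrm{profit}\mid\tv]\le\Gamma^*(\tv)$. Both inequalities become equalities for the candidate strategy in \eqref{brownbridge}: $\D X_t=\theta_t\,\D t$ has $\D[X]_t=0$, and the Brownian-bridge SDE drives $Y_T$ to $\tzeta$ almost surely, at which point $\nabla\Gamma(\tzeta)=\tv$ (Theorem~\ref{thm_brenier}) makes Fenchel an equality and $J(T,\tv,\tzeta)=0$.

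For the rational-expectations condition \eqref{eq1} and the Brownian motion property of $Y$ under market makers' information, I would use the explicit solution of the bridge SDE, $Y_t=(t/T)\tzeta+B_t$ with $B_t$ a Brownian bridge from $0$ to $0$ independent of $\tzeta$, together with $\tzeta\sim G=\cN(0,T\Sigma)$. Standard Gaussian conditioning then yields
\[
\tzeta\mid Y_t \;\sim\; \cN\bigl(Y_t,(T-t)\Sigma\bigr),
\]
and the Markov property of the bridge upgrades conditioning on $Y_t$ to conditioning on $\cF^Y_t$. Integrating $\nabla\Gamma$ against this conditional density reproduces exactly $H(t,Y_t)$, verifying $\mye[\tv\mid\cF^Y_t]=P_t$. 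The same identity gives $\mye[\D Y_t\mid\cF^Y_t]=0$; combined with $\D\langle Y\rangle_t=\Sigma\,\D t$, L\'evy's characterization shows that $Y$ is a $(0,\Sigma)$-Brownian motion in the market makers' filtration. Applying It\^o componentwise to $P_t=\nabla\Gamma(t,Y_t)$ (each component likewise solving the heat equation) gives $\D P_t=\nabla^2\Gamma(t,Y_t)\,\D Y_t$, with $\Lambda_t=\nabla^2\Gamma(t,Y_t)$ symmetric as a Hessian and positive semidefinite by convexity.

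The main delicate step will be making the It\^o/localization argument rigorous for arbitrary continuous-semimartingale strategies rather than only the absolutely continuous candidate: one must justify applying It\^o on $[0,T]$ despite possible blow-up of $\nabla^2\Gamma$ near $T$ when $F$ is not absolutely continuous (here Lemma~\ref{lem_dGamma} combined with a localization away from $T$ is used), and control the $\D Z_t$ integral in expectation. The condition \eqref{restriction} is precisely tailored for this, and together with the non-positivity of the $\D[X]_t$ term it pushes the supermartingale inequality through to expectations, yielding the claimed upper bound on profit for every admissible strategy and identifying \eqref{valuefunction} as the value function.
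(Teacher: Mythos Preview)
Your proposal is correct and follows essentially the same route as the paper: apply It\^o to $\Gamma(t,Y_t)$ (equivalently to $J$), use the heat equation to kill the $\Sigma$--drift, obtain the nonpositive penalty $-\tfrac{1}{2}\tr\bigl(\nabla^2\Gamma\,C_t\Sigma C_t'\bigr)$ from convexity, invoke \eqref{restriction} to discharge the $\D Z$ integral, and achieve both inequalities with the Brownian--bridge strategy because $\D[X]=0$ and $\nabla\Gamma(Y_T)=\tv$. The only point you leave implicit that the paper spells out is that the candidate bridge strategy itself satisfies \eqref{restriction}; the paper closes this loop after establishing \eqref{eq1}, since then $H(t,Y_t)=\mye[\tv\mid\cF^Y_t]$ has $\mye\int_0^T\|H(t,Y_t)\|^2\,\D t<\infty$, making $\int H'\,\D Z$ a true martingale.
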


To explain the construction of the equilibrium, and the sense in which it is unique, consider trading strategies of the form $\D X_t = \theta_t\,\D t$.  It is easy to calculate that the maximization problem in the Hamilton-Jacobi-Bellman (HJB) equation has no solution unless each element $H_i$ for $i=1,\ldots,n$ of the pricing rule $H$ satisfies the heat equation: 
\begin{align}\label{heateqn}
\frac{\partial H_i}{\partial t} + \frac{1}{2}\tr(\Sigma \nabla^2 H_i) = 0\,.
\end{align}
This argument is the same as in \citet{Back:1992:InsiderTradingContinuous} for the univariate model and is based on the linearity of the HJB equation in the control $\theta$.  Also, given the heat equation, it is possible to show (we do this in the proof of Theorem~\ref{thm:riskneutral}) that a trading strategy is optimal if and only it is of finite variation and pushes the price vector to $\tv$ at the end of trading, meaning $H(T,Y_T)=\tv$.  The multiplicity of optimal strategies is due to risk neutrality and the fact that the informed trader can continuously move up and down the inverse supply curve $H(t,\cdot )$ posted by market makers, like a perfectly discriminating monopolist/monopsonist---mathematically, this takes the form that the HJB maximization problem is solved by any $\theta$ when \eqref{heateqn} holds (and by no $\theta$ when it does not hold).  By the Feynman-Kac theorem, the heat equation \eqref{heateqn} is equivalent to 
\begin{equation}
    H_i(t,y) = \mye[H_i(T,Z_T) \mid Z_t=y]\,.
\end{equation}
On the other hand, the equilibrium condition \eqref{eq1} in conjunction with $H(T,Y_T)=\tv$ implies
\begin{equation}
    H_i(t,y) = \mye[H_i(t,Y_T) \mid Y_t=y]\,.
\end{equation}
This suggests that $Y$ and $Z$ must have the same distribution.  Now, we observe that the conditions $Y \sim Z$ and $H(T,Y_T)=\tv$ imply that $H(t,Z_T)$ must have the same distribution as $\tv$.  In other words, \textit{$H(T,\cdot)$ must transport $G$ to $F$.}  Thus, we are led to the definition $H(T,y) = \nabla \Gamma(y)$ and, due to the heat equation and Feynman-Kac theorem, to the pricing rule \eqref{price1}.\footnote{In the univariate case with a continuous strictly increasing distribution function $F$, the unique monotone map that transports $G$ to $F$ is $F^{-1}\circ G$; hence $\nabla \Gamma = F^{-1}\circ G$.  This produces the pricing rule in \citet{Back:1992:InsiderTradingContinuous}.}  

We show in the proof of Theorem~\ref{thm:riskneutral} that, for any finite-variation strategy satisfying the regularity condition \eqref{restriction}, the informed trader's expected profit is 
$\mye[\tv'Y_T - \Gamma(Y_T)]$.  Thus, a finite-variation strategy is optimal if it results in $Y_T \in \argmax \{\tv'y - \Gamma(y)\}$, which is equivalent to $\tv$ being in the subdifferential of $\Gamma(Y_T)$, which, on a set of full Lebesgue measure, is equivalent to $\nabla \Gamma(Y_T)=\tv$.  Given that $P_T = H(T,Y_T) = \nabla \Gamma(Y_T)$, this is equivalent to $P_T = \tv$.  Thus, as stated above, any finite-variation strategy that pushes $P_t$ to $\tv$ as $t \rightarrow T$ is optimal.  Furthermore, this shows that the maximum expected profit conditional on $\tv$ is $\Gamma^*(\tv)$ as stated in the theorem.  This result, and the more general formula \eqref{valuefunction} for the value function, are new results even for the univariate model.

The strategy \eqref{brownbridge} is the drift of a Brownian bridge ending at $\tzeta$.  A Brownian bridge is a Brownian motion conditioned on the ending value, so, conditional on $\tv$, $Y$ is a Brownian motion conditioned to end at $\tzeta$.  In other words, the informed trader knows in advance the ending point of $Y$, which she controls through the drift, and sees $Y$ as a Brownian bridge.  Because market makers do not observe $\tzeta$ and because $\tzeta$ has the same distribution as $Z_T$, market makers see $Y$ as a Brownian motion.  They attempt to forecast its ending value and thereby to forecast $H(T,\tzeta) = H(T,Y_T)=\tv$.  This induces the pricing rule \eqref{price1}.

In the univariate Gaussian model studied by \citet{Kyle:1985:ContinuousAuctionsInsider}, the price evolves as $\D P = \lambda \,\D Y$ for a constant $\lambda$ that is universally known as Kyle's lambda.  Theorem~\ref{thm:riskneutral} shows that price changes are also linearly related to orders in multivariate non-Gaussian models, with the Kyle lambda matrix being symmetric and positive semidefinite.  
Symmetry and positive semidefiniteness of the $\Lambda$ matrix is shown in a single-period Kyle model by \citet{Caballe:1994:ImperfectCompetitionMultiSecurity}.  In Theorem~\ref{thm:riskneutral}, symmetry follows from the matrix being the Hessian of the Brenier potential, and positive semidefiniteness follows from the convexity of the potential.  

We can further characterize the unconditional expected profit of the informed trader via Monge-Kantorovich duality.  We have
\begin{equation}\label{mk}
\sup_{\Phi\in \Pi(F,G)} \int v'y \,\Phi(\D v,\D y) = \mye[\Gamma^*(\tv)] = \inf_{f,g} \; \mye[f(\tv)] + \mye[g(Y_T)]\,,
\end{equation}
where $\Pi(F,G)$ is the set of probability measures $\Phi$ on $\R^n \times \R^n$ for which the marginal distributions are the exogenously given $F$ and $G$, where the infimum is taken over all functions $f$ and $g$ with the property that $f(x) + g(y) \ge x'y$ for all $(x,y)$, and where the expectations are taken over the distributions $F$ for $\tv$ and $G$ for $Y_T$.  For this result, see Sections~2.3 and~6.2 of \citet{Galichon:2016:OptimalTransportMethods}.
The first equality in \eqref{mk} provides the interpretation that it is as if the informed trader earns $v'y$ and can choose any joint distribution for $v$ and $y$ with the given marginals.  Such a joint distribution is called a coupling.  A map $v \mapsto y$ in concert with the distribution of $\tv$ determines a coupling, provided the induced distribution of $y$ is~$G$.  Thus, it is as if the informed trader chooses $Y_T$ as a function of $\tv$ to maximize $\mye[\tv'Y_T]$ subject to the constraint that $Y_T$ has the same distribution as $Z_T$.\footnote{The second equality in \eqref{mk} provides another characterization of the informed trader's expected profit, but it seems somewhat less meaningful.  The equality shows that the expected profit is what the informed trader would achieve if her profit were additively separable as $f(v)+g(y)$ with the given distributions $F$ and $G$ for $v$ and $y$, and if market makers could choose $f$ and $g$ to minimize the  expected profits, with the proviso that the profit $f(v)+g(y)$ could never be smaller than $v'y$.  It turns out that the functions $f$ and $g$ that achieve this minimum are $\Gamma^*(v)$ and $\Gamma(y)$ \citep[][Proposition 6.4]{Galichon:2016:OptimalTransportMethods}.}

The expected profit of the informed trader is related to  the Wasserstein-2 distance between $F$ and $G$. Indeed, the unconditional expected profit \eqref{mk} can also be written as
$$\frac{\| \tv \|^2 + \| Z_T \|^2 - W^2_2(F,G)}{2},$$
where $\| \cdot \|$ denotes the $\mathcal{L}^2$ norm and where
$$W^2_2(F,G) := \inf_{\Phi\in \Pi(F,G)} \int |v-y|^2\Phi(dv,dy)$$
is the square of the Wasserstein-2 distance between $F$ and $G$.
Thus, the expected profit depends on the amount of noise trading as measured by $\| Z_T \|^2$ and, given the distribution $G$ of noise trading, depends on the amount of private information as measured by $\| \tv \|^2 - W_2^2(F,G)$.  

To set the stage for the next section, we compute the aggregate dealer profits in the equilibrium of Theorem~\ref{thm:riskneutral}.  Aggregate dealer profits equal the negative of the sum of the informed and noise traders' profits.  The profits of noise traders are  given in \eqref{profit1}, replacing $X$ with $Z$.  Thus, realized dealer profits are
\begin{equation}\label{dealerprofits}
\tw := \langle P,Y \rangle_T - \int_0^T (\tv - P_t)'\,\D Y_T\,.
\end{equation}
This formula can also be derived directly from Merton's intertemporal budget constraint, viewing the dealers as investors with position $-Y_t$ in the risky asset.
Extending our earlier discussion of expected informed trader profits in terms of the Brenier potential $\Gamma$ and its conjugate, we can show the following.

%
%

\begin{corollary}\label{cor21} 
Given the pricing rule \eqref{price1}, if the informed trader follows a finite-variation strategy, then dealer profits \eqref{dealerprofits} equal
$\tw = \langle P,Y \rangle_T + \Gamma(Y_T) - \tv'Y_T$.
When the informed trader follows an optimal strategy, dealer profits are
$\tw =  \langle P,Y \rangle_T - \Gamma^*(\tv)$.
\end{corollary}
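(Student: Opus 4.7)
The plan is to apply It\^o's formula to the function $\Gamma(t,y)$ defined in \eqref{Gammaty} along the order flow process $Y$, and then rearrange the resulting identity using the definition \eqref{dealerprofits} of dealer profits.

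First, Lemma~\ref{lem_dGamma} ensures that $\Gamma(t,y)$ is $C^{1,2}$ on $[0,T)\times\R^n$ with gradient $\nabla\Gamma(t,y)=H(t,y)$, which equals $P_t$ along $Y$. Since $\Gamma(t,y)=\mye[\Gamma(Z_T)\mid Z_t=y]$ by \eqref{Gammaty}, the Feynman-Kac representation (equivalently, direct differentiation under the integral sign) yields the backward heat equation $\partial_t\Gamma+\tfrac12\tr(\Sigma\nabla^2\Gamma)=0$.

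Because the informed trader uses a finite-variation strategy $X$, one has $\langle Y\rangle_t=t\Sigma$, and It\^o's lemma applied to $\Gamma(t,Y_t)$ collapses, thanks to the heat equation, to $\D\Gamma(t,Y_t)=\nabla\Gamma(t,Y_t)'\,\D Y_t=P_t'\,\D Y_t$. Integrating over $[0,T]$, using $Y_0=0$ and the normalization $\Gamma(0,0)=0$, and letting $t\uparrow T$ so that $\Gamma(t,Y_t)\to\Gamma(Y_T)$, gives $\Gamma(Y_T)=\int_0^T P_t'\,\D Y_t$. Substituting into \eqref{dealerprofits} yields
\[
\tw=\langle P,Y\rangle_T-\int_0^T(\tv-P_t)'\,\D Y_t=\langle P,Y\rangle_T+\Gamma(Y_T)-\tv'Y_T,
\]
which is the first formula. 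For the second formula, Theorem~\ref{thm:riskneutral} tells us that an optimal strategy produces $P_T=\tv$, i.e.\ $\nabla\Gamma(Y_T)=\tv$, which is exactly the first-order condition that $Y_T$ attains the supremum defining $\Gamma^*(\tv)$. Hence $\tv'Y_T-\Gamma(Y_T)=\Gamma^*(\tv)$, and the desired identity $\tw=\langle P,Y\rangle_T-\Gamma^*(\tv)$ follows.

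The main obstacle I foresee is the regularity argument justifying the passage $t\uparrow T$ in the It\^o calculation, since the transition density $k(t,y,z)$ becomes singular as $t\to T$ and $\nabla^2\Gamma(t,\cdot)$ need not remain uniformly bounded. I expect to handle this by first applying It\^o's formula on $[0,T-\delta]$, then letting $\delta\downarrow 0$: the $L^2$ bound $\mye\|\tv\|^2<\infty$ (equivalently on $\nabla\Gamma(Z_T)=P_T$) together with the no-doubling condition \eqref{restriction} should enable passage to the limit in the stochastic integral, while continuity of $\Gamma$ handles the limit of $\Gamma(t,Y_t)$.
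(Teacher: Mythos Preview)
Your proposal is correct and follows essentially the same route as the paper: the paper's proof simply cites equation~\eqref{app2:profit} from the proof of Theorem~\ref{thm:riskneutral}, which was derived exactly by applying It\^o's formula to $\Gamma(t,Y_t)$ and using the heat equation, and then specializes to $C_t=0$ for finite-variation $X$ to obtain $\int_0^T(\tv-P_t)'\,\D Y_t=\tv'Y_T-\Gamma(Y_T)$ before substituting into~\eqref{dealerprofits}. Your treatment of the second formula via $\nabla\Gamma(Y_T)=\tv$ and the Fenchel identity is likewise identical to the paper's.
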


The sharp bracket quantity in \eqref{dealerprofits} and in the corollary is
\begin{equation}\label{bidask}
\langle P,Y\rangle_T = \sum_{i=1}^n \sum_{j=1}^n \sigma_{ij}\int_0^T \frac{\partial^2 \Gamma(t,Y_t)}{\partial y_i \partial y_j}\,\D t = \text{tr}\left(\Sigma\int_0^T \nabla^2\Gamma(t,Y_t)\,\D t\right)\,,
\end{equation}
where $\sigma_{ij}$ is the $(i,j)$th element of the instantaneous covariance matrix $\Sigma$ of $Z$.   This is the `bid-ask spread' cost that noise traders pay market makers, which, on average, offsets the market makers' losses $\Gamma^*(\tv)$ to the informed trader.
It is noteworthy that dealer profits depend on the path of noise trading only via the bid-ask spread costs.  If noise traders make bad trades, then the informed trader will make more money by subsequently reversing them, but there is no net gain or loss to market makers.  Noise traders are as likely to make good trades as bad trades, so on average this is also a wash for the informed trader, whose expected profit is simply $\mye[\Gamma^*(\tv)]$.

%
\section{Risk Aversion}\label{s:riskaverse}
%

We now assume there is a representative dealer with CARA utility.  Assets are priced by the representative dealer's marginal utility evaluated at aggregate dealer wealth. Our strategy is to work under the risk-neutral probability. 
It is common in derivative security pricing to take the risk-neutral distribution of the underlying asset as given and then to derive the values of derivative securities.  
We are going to follow that approach and take the risk-neutral distribution of $\tv$ as given.  
The risk-neutral distribution will be consistent with a unique actual probability measure for the economy, which for brevity we call the physical probability.  Thus, we will set up a map from risk-neutral distributions of $\tv$ to physical distributions of $\tv$.  Our approach proves the existence of equilibrium for all physical distributions in the range of the map.  We show in Section~\ref{s:normal} that the range of the map includes all normal distributions, and we analytically invert the map to obtain the equilibrium quantities in terms of the physical distribution whenever the risk-neutral distribution is normal.  We show how to compute the physical distribution numerically in other cases.

Due to risk aversion, any initial inventory held by dealers will affect pricing.  We let $\beta$ denote the number of shares held by market makers at date~0.  We continue to set $Y=X+Z$, where $X$ is the number of shares purchased by the informed trader and $Z$ is the number of shares purchased by noise traders.  This implies that the number of shares held by market makers at any date~$t$ is $\beta-Y_t$.
The equilibrium condition for prices can be expressed as
\begin{equation}\label{eq2}
P_t = \mye^\Q[\tv \mid \mathcal{F}^Y_t]\,,
\end{equation}
where $\Q$ denotes the risk-neutral probability and $\mathcal{F}^Y_t$ denotes market makers' information (the history of $Y$ prior to $t$).
We continue to assume that the informed trader is risk neutral, so she maximizes expected profits under the physical probability.  The link between the risk-neutral and physical probabilities is that
\begin{equation}\label{dQdP}
\frac{\D \Q}{\D \P} = \frac{\E^{-\alpha \tw}}{\mye[\E^{-\alpha \tw}]}\end{equation} 
where $\P$ denotes the physical probability, $\alpha$ denotes the absolute risk aversion of the representative dealer, and $\tw$ denotes aggregate dealer wealth at date~$T$.  The expression \eqref{dQdP} is the stochastic discount factor (SDF).  Define the SDF process
$M_t = \mye\left[\D \Q/\D \P \mid \cF_t^Y\right]$.

It may be useful to describe our basic approach here before getting into details.  In the risk-neutral case, market orders $\D Y$ always have a zero mean given market makers' information.  In other words, $Y$ is a $(\P,\cF^Y)$-martingale.  In the risk-averse case, $Y$ will be a $(\Q,\cF^Y)$-martingale.  Equilibrium prices will be as in the previous section, but under the risk-neutral probability; that is,  $P_t = \nabla \Gamma(t,Y_t)$ where $\Gamma(y)$ is the Brenier potential for transporting the distribution of $Z_T$ to the risk-neutral distribution of $\tv$ and where $\Gamma(t,y)$ is defined from the potential as in the previous section.  Kyle's lambda matrix also has the same form as in the previous section: it is the Hessian of $\Gamma(t,y)$.   Girsanov's theorem, martingale representation, and some related results allow us to construct a function $\gamma$ and a  $(\P,\cF^Y)$-Brownian motion $\hat{Y}$ so that $\D M_t/M_t = - \gamma(t,Y_t)'\,\D \hat Y_t$.
The vector $\gamma(t,Y_t)$ is the vector of `prices of risk.'    It follows from standard asset pricing theory that risk premia equal minus the covariances of returns with the SDF process, so the drifts of prices under the physical probability, relative to market makers' information, are $-\D \langle M,P\rangle /M$. 
Using the formula for Kyle's lambda matrix, we compute that the drifts are
$\nabla^2\Gamma(t,Y_t)\Sigma\gamma(t,Y_t)\,\D t$.
Thus, we obtain the equilibrium risk premia when market makers are risk averse.

For clarity, we  assume, without loss of generality, that the underlying probability space is a product space $\Omega := \R^n \times \R^n \times C[0,T]^n$, with generic element $(u,v,z)$, where $u \in \R^n$ denotes the variable used for mixing, $v \in \R^n$ denotes the vector of asset values, and $z \in C[0,T]^n$ denotes the path of cumulative noise trades.  We continue to let $G$ denote the normal $(0,T\Sigma)$ distribution. Let $\nu$ denote the $(0,\Sigma)$--Wiener measure on $C[0,T]^n$---i.e., the distribution of a Brownian motion with zero drift and instantaneous covariance matrix $\Sigma$. We will construct $\P$, the physical probability measure on $\Omega$, as the product of three measures: the distribution $G$ for the mixing variable $\tu$, a distribution $\hat F$ for the vector $\tv$ of asset values, and the Wiener measure $\nu$ on $C[0,T]^n$, so the mixing variable, asset values, and noise trades will all be independent under $\P$.  The random vectors $\tu$ and $\tv$ and the vector Brownian motion $Z$ are defined on this product space as the projection maps: $\tu(u,v,z) = u$, $\tv(u,v,z)=v$, and $Z(u,v,z)=z$. We will indicate the measures under which expectations are taken with superscripts; for example, $\mye^\nu$ denotes expectation with respect to the Wiener measure $\nu$. 
Because $G$ and $\nu$ are fixed,  we need to specify $\hat F$, the $\P$-distribution of $\tv$, in order to determine $\P$.

Let ${F}$ be a distribution function on $\R^n$.  Under some assumptions, we will construct a physical distribution $\hat F$ for $\tv$ so that there exists an equilibrium with $F$ being the risk neutral distribution of $\tv$.  
Let $\Gamma$ denote the Brenier potential such that $\nabla \Gamma$ transports $G$ to ${F}$.
Define a pricing rule $H(t,y)$ as in \eqref{price1} and define $\Gamma(t,y)$ from $\Gamma$ as in \eqref{Gammaty}.    As shown in  Lemma~\ref{lem_dGamma}, we have
$H(t,y) = \nabla \Gamma(t,y)$.
We choose the arbitrary additive constant in the potential so that $\Gamma(0,0)=0$ as in Section~\ref{s:riskneutral}, which is equivalent to the mean of $\Gamma(Z_T)$ being zero when $Z_T$ is normal $(0,T\Sigma)$.  The  informed trader's optimization problem is the same as in the risk-neutral model, so the informed trader's expected profit conditional on $\tv$ is $\Gamma^*(\tv)$ as in Theorem~\ref{thm:riskneutral}, and aggregate dealer profits are as stated in Corollary~\ref{cor21}.  We repeat the formula from Corollary~\ref{cor21} here but including the value of the initial position of~$\beta$ shares:
\begin{equation}\label{wealth3}
    \tw = \beta' \tv - \Gamma^*(\tv) + \langle P,Y \rangle_T\,.
\end{equation}

To identify $\hat F$ and the equilibrium informed trading strategy, we ask that $\hat F$ and the distribution for $Y$ on $(\P,\cF^Y)$ be consistent with (i) $Y$ is a $(0,\Sigma)$ Brownian motion on $(\Q,\cF^Y)$, (ii) $\D \P/\D \Q$ is as specified in \eqref{dQdP}, (iii) market makers' terminal wealth is as stated in \eqref{wealth3}, and (iv) $\nabla \Gamma(Y_T) = \tv$.  To find $\hat F$ and the distribution for $Y$ on $(\P,\cF^Y)$, we need to make some calculations involving a $(0,\Sigma)$ Brownian motion.  For this purpose, we use the $(0,\Sigma)$ Brownian motion $Z$ under $\nu$.  Any other $(0,\Sigma)$ Brownian motion would serve as well for this purpose.  
It follows from \eqref{dQdP} that $\D \P/\D \Q  = \E^{\alpha \tw}/\mye^{\Q}[\E^{\alpha \tw}]$.
Motivated by this formula, the formula $\nabla \Gamma(Y_T) = \tv$, and the formula \eqref{wealth3} for dealers' terminal wealth, define 
\begin{align}\label{eq:deftx}
    \tilde\chi  &= \exp\left\{-\alpha(Z_T-\beta)'\nabla\Gamma(Z_T)+\alpha \Gamma(Z_T) + \alpha\tr\left(\Sigma \int_0^T \nabla^2\Gamma(t,Z_t)\,\D t\right)\right\}\,.
\end{align}
We need the following regularity condition, which will ensure sufficient integrability of the representative dealer's marginal utility and its reciprocal.  We later deduce this from conditions on primitives (Theorem~\ref{thm:riskaverse2}).

%
%

\begin{assumption}\label{integrability}
$\tilde \chi^2$ and $\tilde \chi^{-1}$ have finite $\nu$ means, 
and the function $\phi:[0,T]\times \R^n\mapsto \R$ defined by
\begin{multline}
 \label{defphi}
\E^{\phi(t,z)} = \mye^\nu\bigg[\exp\bigg\{ -\alpha(Z_T-\beta)'\nabla\Gamma(Z_T)+\alpha \Gamma(Z_T) \\ \left. \left. \left. +\alpha \tr\left(\Sigma \int_t^T \nabla^2 \Gamma(u,Z_u)\,\D u  \right)\right\}\,\right|\, Z_t=z\right]\,
\end{multline}
is finite and continuous on $[0,T)\times \R^n.$
\end{assumption}

Define $\tilde \xi = \tilde \chi / \mye^\nu[\tilde \chi]$ and set $\xi_t =\mye^\nu [\tilde\xi \mid \cF^Z_t]$ for $t \in [0,T]$.  Define a change of measure $\D\hat\nu/\D \nu = \tilde \xi$. We fix $\hat F$ to be the $\hat\nu$--distribution of $\nabla \Gamma (Z_T)$. 
The equilibrium distribution of $Y$ on $(\P,\cF^Y)$ will be the $\hat\nu$--distribution of $Z$.  This is described in the following lemma.

%
%

\begin{lemma}\label{gmarkov}
Under Assumption 3.1, there exists a measurable function $\gamma:(0,T)\times \R^n\mapsto\R^n$ such that $\D \xi_t/\xi_t = \gamma(t,Z_t)'\,\D Z_t$ and $\D Z_t = \Sigma \gamma(t,Z_t)\,\D t + \D \hat Z_t$, where $\hat Z$ is a $(0,\Sigma)$--Brownian motion on $(\hat \nu, \cF^Z)$.
\end{lemma}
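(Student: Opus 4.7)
The plan exploits the Markov structure of $Z$ under $\nu$ together with It\^o's lemma and Girsanov's theorem. Set $A_t := \alpha\,\tr(\Sigma\int_0^t \nabla^2\Gamma(u, Z_u)\,\D u)$, $h(z) := -\alpha(z-\beta)'\nabla\Gamma(z) + \alpha\Gamma(z)$, and $\psi(t, z) := \E^{\phi(t, z)}$, so that $\tilde\chi = \E^{h(Z_T) + A_T}$. Splitting $A_T = A_t + (A_T - A_t)$ and conditioning on $\cF^Z_t$, the Markov property of $Z$ under $\nu$ together with definition \eqref{defphi} yields $\mye^\nu[\tilde\chi\mid\cF^Z_t] = \E^{A_t}\psi(t, Z_t)$, whence
\begin{equation*}
\xi_t \;=\; \frac{1}{\mye^\nu[\tilde\chi]}\,\E^{A_t}\,\psi(t, Z_t).
\end{equation*}

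Next I would argue that $\psi$ is $C^{1,2}$ on $(0, T)\times\R^n$. Since $\nabla^2\Gamma(t, z)$ is continuous by Lemma~\ref{lem_dGamma} and the Brownian transition density under $\nu$ is $C^\infty$ in its spatial arguments, the integrability granted by Assumption~\ref{integrability} allows derivatives to be passed under the expectation defining $\psi$ (equivalently, one invokes parabolic regularity for the Feynman--Kac PDE derived below). Applying It\^o to $\psi(t, Z_t)\E^{A_t}$, and noting that $A_t$ is of finite variation so no cross-bracket appears, gives
\begin{equation*}
\D\bigl[\psi(t, Z_t)\E^{A_t}\bigr] = \E^{A_t}\bigl[\partial_t\psi + \tfrac{1}{2}\tr(\Sigma\nabla^2\psi) + \alpha\,\tr(\Sigma\nabla^2\Gamma(t, Z_t))\,\psi\bigr]\,\D t + \E^{A_t}\nabla\psi(t, Z_t)'\,\D Z_t.
\end{equation*}
Because $\xi_t$ is a $\nu$-martingale, the $\D t$-term must vanish---this is the Feynman--Kac PDE satisfied by $\psi$---and dividing the surviving martingale part by $\xi_t$ yields $\D\xi_t/\xi_t = \bigl(\nabla\psi(t, Z_t)/\psi(t, Z_t)\bigr)'\,\D Z_t = \nabla\phi(t, Z_t)'\,\D Z_t$, so I set $\gamma(t, z) := \nabla\phi(t, z)$, which is measurable on $(0, T)\times \R^n$.

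Finally, since $\xi$ is the $(\nu, \cF^Z)$-density process of $\hat\nu$, one has $\D\langle \xi, Z\rangle_t = \xi_t\,\Sigma\,\gamma(t, Z_t)\,\D t$, and Girsanov's theorem gives that $\hat Z_t := Z_t - \int_0^t \Sigma\,\gamma(s, Z_s)\,\D s$ is a $(0, \Sigma)$-Brownian motion on $(\hat\nu, \cF^Z)$, yielding the claimed decomposition $\D Z_t = \Sigma\,\gamma(t, Z_t)\,\D t + \D\hat Z_t$. The main obstacle is the regularity claim in step two: upgrading the mere finiteness and continuity of $\phi$ granted by Assumption~\ref{integrability} to the $C^{1,2}$ smoothness of $\psi$ needed for It\^o. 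This hinges on the smoothing effect of the Gaussian heat kernel combined with the $L^2$ and reciprocal integrability of $\tilde\chi$ in Assumption~\ref{integrability}, which controls the moments that arise when differentiating under the expectation or when applying Schauder-type estimates to the Feynman--Kac PDE.
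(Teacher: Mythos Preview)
Your argument establishes the stronger conclusion $\gamma=\nabla\phi$, which the paper itself states only \emph{conditionally} (``If $\phi$ is smooth enough to apply It\^o's formula\ldots'') after Theorem~\ref{thm:riskaverse}, and proves only later under the extra hypotheses of Theorem~\ref{thm:riskaverse2}. The lemma, by contrast, asserts only the existence of \emph{some} measurable $\gamma(t,z)$ under Assumption~\ref{integrability} alone. Your step two is therefore the crux, and it is a genuine gap: Assumption~\ref{integrability} gives only that $\phi$ is finite and continuous and that $\tilde\chi\in L^2$, $\tilde\chi^{-1}\in L^1$. Passing $\nabla_z$ under the expectation defining $\psi$ requires uniform-in-$z$ integrability of the derivative of the integrand, which involves $\nabla^3\Gamma$ and the exponential weight; none of this is controlled by the stated moment bounds on $\tilde\chi$. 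Likewise, Schauder estimates for the PDE $\partial_t\psi+\tfrac12\tr(\Sigma\nabla^2\psi)+\alpha\tr(\Sigma\nabla^2\Gamma)\psi=0$ need H\"older bounds on the potential $\alpha\tr(\Sigma\nabla^2\Gamma(t,z))$, and without (i) or (ii) of Theorem~\ref{thm:riskaverse2} this potential is not known to be bounded in $z$ or integrable in $t$ near $T$.

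The paper avoids this regularity issue entirely. It first uses the martingale representation theorem together with a quadratic BSDE result of Barrieu--El~Karoui (which needs exactly the $L^2$ and reciprocal integrability of Assumption~\ref{integrability}) to write $\D\xi_t/\xi_t=\gamma_t'\,\D Z_t$ for an \emph{adapted} process $\gamma_t$; Girsanov then gives the $\hat Z$ decomposition. The remaining work is to show $\gamma_t$ is Markovian, i.e.\ $\gamma_t=\gamma(t,Z_t)$, and for this the paper recasts $\phi$ as the value function of a stochastic control problem with running cost $\alpha\tr(\Sigma\nabla^2\Gamma)-\tfrac12 x'\Sigma x$ and invokes \citet{haussmann1990existence} to conclude that the optimal control---which equals $\gamma$---can be taken in feedback form. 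This route never applies It\^o to $\phi$ and never needs $\phi\in C^{1,2}$; it uses only the continuity of $\phi$ and convexity in the control. Your direct It\^o computation is cleaner when it works, but as written it proves Theorem~\ref{thm:riskaverse2}'s identification $\gamma=\nabla\phi$ rather than Lemma~\ref{gmarkov}, and the smoothness it requires is precisely what separates those two results.
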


When $\nabla \Gamma$ is not invertible, we will allow the informed trader to mix as in the previous section.  The following establishes the existence of the desired mixing variable. 

%
%

\begin{lemma}\label{lem_mix}
There exists a function $f : \R^{n} \times \R^n \rightarrow \R^n$ such that, setting $\tzeta = f(\tu,\tv)$, where the distribution of $(\tu,\tv)$ is $G \otimes \hat F$, we have (i) $\nabla \Gamma (\tzeta) = \tv$ almost surely, and (ii) the distribution of $\tzeta$ is the $\hat{\nu}$--distribution of $Z_T$.
\end{lemma}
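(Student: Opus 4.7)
The plan is to realize $\tzeta$ as a randomization on top of $\tv$, using $\tu$ as an independent source of randomness to sample from the conditional distribution of $Z_T$ under $\hat\nu$ given the value of $\nabla\Gamma(Z_T)$. Let $\mu$ denote the $\hat\nu$-distribution of $Z_T$ on $\R^n$. By the construction of $\hat F$ in the paragraph preceding the lemma, $\hat F$ is the pushforward of $\mu$ under $\nabla\Gamma$. Since $\R^n$ is a standard Borel space, the disintegration theorem yields a Borel measurable kernel $v\mapsto\mu_v$ of probability measures on $\R^n$ such that
\begin{equation*}
\mu(A) \;=\; \int_{\R^n}\mu_v(A)\,\hat F(\D v)\quad\text{for every Borel } A\subset\R^n,
\end{equation*}
and $\mu_v\bigl(\{z:\nabla\Gamma(z)=v\}\bigr)=1$ for $\hat F$-almost every $v$.

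Next I would invoke the standard randomization lemma for Borel kernels on Polish spaces (e.g.\ Kallenberg, \emph{Foundations of Modern Probability}): there exists a jointly Borel function $h:\R^n\times[0,1]\to\R^n$ such that, for each $v$, the pushforward of Lebesgue measure on $[0,1]$ by $h(v,\cdot)$ equals $\mu_v$. To feed this machine from $\tu$, note that the first marginal of $G$ has a continuous distribution function $\Phi_1$, so $U(u):=\Phi_1(u_1)$ is Borel, $U(\tu)$ is uniformly distributed on $[0,1]$, and $U(\tu)$ is independent of $\tv$ under $G\otimes\hat F$. Define
\begin{equation*}
f(u,v) \;:=\; h\bigl(v,U(u)\bigr),\qquad \tzeta \;:=\; f(\tu,\tv).
\end{equation*}

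For verification, condition on $\tv=v$: the independence of $\tu$ and $\tv$ together with the defining property of $h$ gives that $\tzeta$ has law $\mu_v$. Since $\mu_v$ is concentrated on $\nabla\Gamma^{-1}(\{v\})$, this yields property (i). Integrating against $\hat F$ gives the unconditional law $\int\mu_v\,\hat F(\D v)=\mu$, which is the $\hat\nu$-distribution of $Z_T$, giving property (ii). The only point requiring a little care is that the Brenier gradient $\nabla\Gamma$, as the gradient of a convex function, is a priori only defined Lebesgue-a.e.; one extends it arbitrarily on the exceptional set to obtain a Borel map on all of $\R^n$. Because $\mu$ is absolutely continuous (its density relative to Lebesgue is $\tilde\xi$ times the Gaussian density of $Z_T$ under $\nu$), the exceptional set is $\mu$-null, so this extension does not affect the disintegration or (i)--(ii). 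The main obstacle is thus purely measure-theoretic bookkeeping; the rest is structural.
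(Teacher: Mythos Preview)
Your proof is correct and follows the same overall architecture as the paper's (which defers to Lemma~A.2): disintegrate the $\hat\nu$-law of $Z_T$ against its pushforward $\hat F$ under $\nabla\Gamma$, obtaining conditional laws $\mu_v$ supported on $(\nabla\Gamma)^{-1}(\{v\})$, and then use the independent variable $\tu$ to sample from $\mu_v$. The only point of departure is the mechanism for that last sampling step. The paper invokes Brenier's theorem a second time, transporting the Gaussian $G$ to each conditional $\mu_v$ to obtain maps $f_v$ and setting $f(u,v)=f_v(u)$; you instead call the Kallenberg transfer lemma to produce a jointly Borel $h:\R^n\times[0,1]\to\R^n$ with $h(v,\cdot)$ pushing Lebesgue measure to $\mu_v$, and extract the needed uniform from one coordinate of $\tu$. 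Your route is slightly more elementary---no second-moment hypothesis is needed on the conditionals, and joint Borel measurability of $f$ comes for free from the transfer lemma---while the paper's route keeps the entire construction inside the optimal-transport toolkit and uses the full $n$-dimensional $\tu$. Either way the verification of (i) and (ii) is identical.
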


Lemma~\ref{gmarkov} implies that $Z$ is a Markov process under $\hat \nu$. We denote  the $\hat \nu$ transition density of $Z$ from $(t,z)$ to $(T,b)$ by $h(t,z,b)$. Then, $h(t,Z_t,b)$ is a $\hat \nu$-martingale that can be represented as a $\hat Z_t$ stochastic integral.  We make the following regularity assumption for $h$, which we derive from conditions on primitives in Theorem~\ref{thm:riskaverse2}.

%
%

\begin{assumption}\label{assumeh0}
For all $t<T$, $h$ is differentiable in $z$, $\int | \nabla_z h(t,z,b)|\,\D b < \infty$, and
\begin{align}\label{assumeh}
\D h(t,Z_t,b)=\nabla_z h(t,Z_t,b)'(\D Z_t-\Sigma \gamma (t,Z_t)\,\D t)\,.
\end{align}
\end{assumption}

We remark that equation \eqref{assumeh} follows from  $h(t,Z_t,b)$ being a $\hat \nu$--martingale and It\^o's formula when $h$ is $C^{1,2}$ in $(t,z)$.
Our candidate for an equilibrium informed trading strategy is
\begin{equation}\label{newtheta}
    \theta_t = \Sigma\gamma(t,Y_t) + \Sigma \nabla_y \log h(t,Y_t,\tilde\zeta)\,.
\end{equation}
Given this strategy, the aggregate orders are\footnote{This defines $Y$ as a Doob $h$-transform of $Z$.  See, for example, \citet[][IV.39-40]{rogers2000diffusions}.  The process $Y$ defined by
$$\D Y_t = \Sigma\gamma(t,Y_t)\,\D t + \Sigma \nabla_y \log h(t,Y_t,b) \,\D t + \D Z_t$$
has on its own filtration the $\hat{\nu}$ distribution conditioned on ending at $b$.  Because $\tzeta$ has the terminal $\hat \nu$ distribution and is independent of $Z$, the process \eqref{eqy} has the $\hat \nu$ distribution on its own filtration and satisfies $Y_T = \tzeta$. We provide a proof in the appendix.}
\begin{equation}\label{eqy}
    \D Y_t = \Sigma\gamma(t,Y_t)\,\D t + \Sigma \nabla_y \log h(t,Y_t,\tilde \zeta) \,\D t + \D Z_t\,.
\end{equation}
We now show that our construction defines an equilibrium, under Assumptions~\ref{integrability} and~\ref{assumeh0} and assuming a unique strong solution to \eqref{eqy}.  We later provide conditions on primitives that guarantee these regularity conditions hold.

%
%

\begin{theorem}\label{thm:riskaverse}
  Let $F$ be a given distribution on $\R^n$ such that $\int \| v\|^2 \D F(v) < \infty$ and let $\Gamma(y)$ be the Brenier potential for transporting $G$ to ${F}$. Define $\Gamma(t,y)$ by \eqref{Gammaty}.  Assume Assumption~\ref{integrability} holds, and define $\tilde \xi$ by \eqref{eq:deftx} and $\hat\nu$ by $\D \hat \nu/\D \nu=\tilde \xi$.  Assume the physical distribution of $\tv$ is  $\hat F$ (the $\hat{\nu}$--distribution of $\nabla \Gamma (Z_T)$).  Then, the mean of $\tv$ is finite under the physical distribution.  Let $h$ denote the $\hat \nu$ transition density of $Z$ and assume  Assumption~\ref{assumeh0} holds.  Assume the stochastic differential equation \eqref{eqy} admits a unique strong solution.
Given the pricing rule $H(t,y) = \nabla \Gamma(t,y)$ and informed trading strategy \eqref{newtheta}, aggregate dealer wealth at date $T$ is given by \eqref{wealth3}.  Define the risk-neutral probability $\Q$ by \eqref{dQdP}.  Then, the risk-neutral distribution of $\tv$ is ${F}$, the equilibrium pricing condition \eqref{eq2} holds, and the informed 
trading strategy \eqref{newtheta} is optimal in the class of strategies satisfying \eqref{restriction}.   Furthermore, $Y$ is a $(0,\Sigma)$--Brownian motion on $(\Q,\cF^Y)$ and satisfies
$\D Y_t = \Sigma \gamma(t,Y_t)\,\D t + \D \hat Y_t$,
where $\hat Y$ is a $(0,\Sigma)$--Brownian motion on $(\P,\cF^Y)$.
\end{theorem}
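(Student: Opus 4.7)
The plan is to chain together a few change-of-measure identifications, organized around the assertion that the $\P$-law of the path of $Y$ on $\cF^Y$ coincides with the $\hat\nu$-law of $Z$; the SDF weighting in \eqref{dQdP} then exactly inverts this distortion so that the $\Q$-law of $Y$ equals the $\nu$-law of $Z$, after which most remaining claims reduce to results already in place (Theorem~\ref{thm:riskneutral}, Corollary~\ref{cor21}, Lemmas~\ref{gmarkov}--\ref{lem_mix}). First I would recognize the SDE \eqref{eqy} as the Doob $h$-transform of $\hat Z$, where $\hat Z$ is the $(0,\Sigma)$-Brownian motion under $\hat\nu$ produced by Lemma~\ref{gmarkov}. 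Conditional on $\tzeta=b$, this Doob construction gives $Y$ the $\hat\nu$-law of $Z$ conditioned on $Z_T=b$; because $\tzeta$ has the $\hat\nu$-marginal law of $Z_T$ under $\P$ (Lemma~\ref{lem_mix}) and is independent of the driving $Z$ in \eqref{eqy}, integrating out $\tzeta$ yields the unconditional $\hat\nu$-law for the $\cF^Y$-path. Lemma~\ref{lem_mix} simultaneously delivers $\tv=\nabla\Gamma(\tzeta)=\nabla\Gamma(Y_T)$ $\P$-almost surely.

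Next, I would exploit $\tv=\nabla\Gamma(Y_T)$ and the duality identity $\Gamma^*(\tv)=Y_T'\nabla\Gamma(Y_T)-\Gamma(Y_T)$ together with the bid-ask formula \eqref{bidask} to rewrite \eqref{wealth3} as
$$\tw = -(Y_T-\beta)'\nabla\Gamma(Y_T)+\Gamma(Y_T)+\tr\left(\Sigma\int_0^T \nabla^2\Gamma(t,Y_t)\,\D t\right),$$
so that $e^{-\alpha\tw}=1/\tilde\chi(Y)$ with $\tilde\chi$ as in \eqref{eq:deftx}. For any bounded $\cF^Y$-measurable functional $f$, the law identification from the previous paragraph combined with $\D\hat\nu/\D\nu=\tilde\xi=\tilde\chi/\mye^\nu[\tilde\chi]$ then gives
$$\mye^\Q[f(Y)] = \frac{\mye^\P[f(Y)/\tilde\chi(Y)]}{\mye^\P[1/\tilde\chi(Y)]} = \frac{\mye^\nu[f(Z)\tilde\xi/\tilde\chi]}{\mye^\nu[\tilde\xi/\tilde\chi]} = \mye^\nu[f(Z)],$$
where the last equality uses that $\tilde\xi/\tilde\chi=1/\mye^\nu[\tilde\chi]$ is a positive constant. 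Hence $Y$ is a $(0,\Sigma)$-Brownian motion on $(\Q,\cF^Y)$, and because $\tv=\nabla\Gamma(Y_T)$ with $Y_T$ being $G$-distributed under $\Q$, the risk-neutral distribution of $\tv$ is $F$. The pricing identity \eqref{eq2} then collapses to $\mye^\Q[\nabla\Gamma(Y_T)\mid\cF^Y_t]=H(t,Y_t)$, which is exactly \eqref{price1}.

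The drift of $Y$ under $\P$ transfers from Lemma~\ref{gmarkov} via the law identification, yielding $\D Y_t=\Sigma\gamma(t,Y_t)\,\D t+\D\hat Y_t$ for some $(0,\Sigma)$-Brownian motion $\hat Y$ on $(\P,\cF^Y)$. Optimality of \eqref{newtheta} reuses the argument in the proof of Theorem~\ref{thm:riskneutral}: Lemma~\ref{lem_dGamma} guarantees $H(t,y)=\nabla\Gamma(t,y)$ solves the heat equation \eqref{heateqn}, so any finite-variation strategy satisfying \eqref{restriction} that drives $H(T,Y_T)$ to $\tv$ attains the optimal conditional profit $\Gamma^*(\tv)$, and \eqref{newtheta} does so by forcing $Y_T=\tzeta$. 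The wealth formula \eqref{wealth3} is Corollary~\ref{cor21} applied to this finite-variation optimum, with the $\beta'\tv$ term accounting for the initial inventory. Finiteness of $\mye^\P[\|\tv\|]$ is Cauchy-Schwarz,
$$\mye^{\hat\nu}[\|\nabla\Gamma(Z_T)\|]\le\bigl(\mye^\nu[\|\nabla\Gamma(Z_T)\|^2]\bigr)^{1/2}\bigl(\mye^\nu[\tilde\xi^2]\bigr)^{1/2},$$
where the first factor equals $(\int\|v\|^2\,\D F(v))^{1/2}<\infty$ by hypothesis and the second is finite by Assumption~\ref{integrability}. The main obstacle is the first paragraph: making the Doob $h$-transform identification rigorous requires combining the strong-solution hypothesis for \eqref{eqy} with the absolute continuity encoded in Assumption~\ref{assumeh0} and carefully tracking the independence structure of $(\tu,\tv,Z)$ under $\P$ so that $\tzeta$ is genuinely independent of the driving $Z$; once this is settled, everything downstream is essentially algebraic bookkeeping.
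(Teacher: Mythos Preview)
Your overall strategy matches the paper's: identify the $(\P,\cF^Y)$-law of $Y$ with the $\hat\nu$-law of $Z$, show the SDF exactly inverts this distortion so $Y$ becomes a $(\Q,\cF^Y)$-Brownian motion, and then inherit the equilibrium conditions from the risk-neutral theory. The execution differs in two places.

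For the law identification you invoke the Doob $h$-transform heuristic directly---this is precisely the intuition recorded in the paper's footnote to \eqref{eqy}---whereas the paper makes it rigorous via filtering. It appeals to the Kurtz--Ocone uniqueness theorem for martingale solutions of the filtering equation to show that $h(t,Y_t,\cdot)$ is the conditional density of $\tzeta$ given $\cF^Y_t$, then integrates the drift in \eqref{eqy} against this density (using Assumption~\ref{assumeh0} to justify $\int\nabla_z\log h(t,z,a)\,h(t,z,a)\,\D a=0$) to obtain the $\cF^Y$-decomposition $\D Y_t=\Sigma\gamma(t,Y_t)\,\D t+\D\hat Y_t$. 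Your route is conceptually cleaner but, as you note, completing it under only the stated hypotheses takes work; the filtering argument sidesteps this by never conditioning on $\tzeta$.

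For the $\Q$-Brownian property, your computation---rewriting $e^{-\alpha\tw}=1/\tilde\chi(Y)$ via Fenchel duality and \eqref{bidask}, transferring $\mye^\P$ to $\mye^{\hat\nu}$ on $\cF^Y$, and cancelling the constant $\tilde\xi/\tilde\chi$---is more direct than the paper's. The paper instead builds the density process $M_t$ explicitly as the stochastic exponential of $-\int_0^t\gamma(s,Y_s)'\,\D\hat Y_s$, verifies the martingale property by pulling back to the BSDE representation \eqref{eq:rho} under $\nu$, and then identifies $M_T$ with $e^{-\alpha\tw}/\mye^\P[e^{-\alpha\tw}]$ using a BSDE uniqueness result of Barrieu--El~Karoui.

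One genuine omission: you assert optimality of \eqref{newtheta} by reusing Theorem~\ref{thm:riskneutral}, but you do not verify that this strategy satisfies the no-doubling restriction \eqref{restriction} under $\P$. The paper does this separately: it bounds $\mye^\P[\sup_t|\int_0^t H'\,\D Z|]$ via the Burkholder--Davis--Gundy inequality, passes to $\nu$ by inserting $\tilde\xi$, and applies Cauchy--Schwarz to split off $\mye^\nu[\tilde\xi^2]^{1/2}$ (finite by Assumption~\ref{integrability}) and $\mye^\nu[\int_0^T\|H(s,Z_s)\|^2\,\D s]^{1/2}$ (finite because $H(s,Y_s)=\mye^\Q[\tv\mid\cF^Y_s]$ and $\tv$ has finite second moment under $F$).
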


We can obtain more explicit formulas for the vector of prices of risk and the equilibrium informed trading strategy.
The definition of $\xi$ and the definition \eqref{defphi} of $\phi$ imply that
\begin{align}\label{eq:phirho}
\xi_t = \exp\left\{\phi(t,Z_t) -\phi(0,0)+ \alpha\tr\left(\Sigma \int_0^t \nabla^2\Gamma(s,Z_s)\,\D s\right)\right\}\,.
\end{align}
If $\phi$ is smooth enough to apply It\^o's formula, then this formula and the fact that $\xi$ is a $\nu$--martingale imply
$\D \xi_t/\xi_t = \nabla \phi(t,Z_t)'\,\D Z_t$
and consequently 
\begin{align}\label{eq:gp}
\gamma(t,y) = \nabla \phi(t,y)\,.
\end{align}
The martingale property also implies the following partial differential equation (PDE) for $\phi$:
\begin{equation}\label{pde}
\frac{\partial \phi}{\partial t} + \frac{1}{2}\tr(\Sigma \nabla^2 \phi) + \frac{1}{2}\nabla \phi ' \Sigma \nabla \phi + \alpha \tr (\Sigma \nabla^2\Gamma) = 0\,
\end{equation}
with boundary condition $\phi(T,z)=-\alpha(z-\beta)'\nabla\Gamma(z)+\alpha \Gamma(z)$.  We can use this to calculate $\nabla \phi$.  Furthermore, 
Assumption~\ref{integrability} implies that we can define a function $\psi(t,z,b)$ by
\begin{multline}\label{defpsi}
\E^{\psi(t,z,b)} = \E^{-\alpha(b-\beta)'\nabla\Gamma(b)+\alpha \Gamma(b)} \\ \times \mye^\nu\left[\left.\exp\left\{ \alpha \tr\left(\Sigma \int_t^T \nabla^2 \Gamma(u,Z_u)\,\D u  \right)\right\}\,\right|\, Z_t=z, Z_T=b\right]\, .  
\end{multline}
Using the definition of $\hat \nu$, we have
\begin{align}\label{eq:h}
    h(t,z,b) = k(t,z,b)\E^{-\phi(t,z) + \psi(t,z,b)}\,,
\end{align}
where, as before, $k(t,z,b)$ denotes the normal $(z,(T-t)\Sigma)$ density function evaluated at~$b$.
If $\phi$ and $\psi$ are both sufficiently smooth to apply It\^o's formula, then we obtain
\begin{equation}\label{informed2}
\theta_t = \frac{1}{T-t}(\tzeta-Y_t) +\Sigma \nabla_y  \psi(t,Y_t,\tilde \zeta)\,.
\end{equation}
Here are some specific cases in which we can verify the regularity conditions assumed in Theorem~\ref{thm:riskaverse} and also verify the smoothness needed to obtain the formulas \eqref{eq:gp} and \eqref{informed2}.  We remark that case (ii) in the following is obtained from Caffarelli's contraction theorem  \citep{Caffarelli:2000:MonotonicityPropertiesOptimal}, which implies that $\nabla^2 \Gamma(y)$ and $\nabla \Gamma^2 (t,y)$ are bounded. The matrix $I$ in the following theorem is the $n \times n$ identity matrix.

%
%

\begin{theorem}\label{thm:riskaverse2}
Assume that either (i) $F$ has bounded support or (ii) $F$ is absolutely continuous with respect to Lebesgue measure and its density is $\E^{-U(v)}$ for some $U$ satisfying $\nabla^2 U\geq \kappa I$ for some $\kappa>0$.  Then, Assumptions~\ref{integrability} and~\ref{assumeh0} hold, and there is a unique strong solution of \eqref{eqy}.  Therefore, the assumptions of Theorem~\ref{thm:riskaverse} hold.  Furthermore, $ \gamma(t,Y_t)=\nabla\phi(t,Y_t)$ is the vector of prices of risk, \eqref{informed2} is the equilibrium informed trading strategy, and
\begin{equation}\label{newdY2}
    \D Y_t = \Sigma \nabla \phi(t,Y_t)\,\D t + \D \hat Y_t
\end{equation}
where $\hat Y$ is a $(0,\Sigma)$--Brownian motion on $(\P,\cF^Y)$.
\end{theorem}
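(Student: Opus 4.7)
The plan is to handle both cases by first establishing regularity of the Brenier potential $\Gamma$ and its Gaussian-smoothed version $\Gamma(t,y)$, then verifying Assumption~\ref{integrability}, then upgrading $\phi$ and $\psi$ to classical smoothness via Feynman--Kac so that $\gamma=\nabla\phi$ and \eqref{informed2} follow from It\^o's formula, and finally obtaining strong existence and uniqueness for \eqref{eqy}. In case~(ii), Caffarelli's contraction theorem provides a uniform bound on $\nabla^2\Gamma(y)$ which, by interchanging differentiation with the Gaussian expectation in \eqref{Gammaty}, transfers to a uniform bound on $\nabla^2\Gamma(t,y)$ on all of $[0,T]\times\R^n$. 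In case~(i), bounded support of $F$ forces $\|\nabla\Gamma\|_\infty<\infty$, and integration by parts in \eqref{Gammaty} yields
\begin{equation*}
\nabla^2\Gamma(t,y)=\int_{\R^n}\bigl(\nabla\Gamma(z)-\nabla\Gamma(y)\bigr)\,\frac{(z-y)'\Sigma^{-1}}{T-t}\,k(t,y,z)\,\D z;
\end{equation*}
combined with the Gaussian moment $\mye|Z-y|=O(\sqrt{T-t})$ this gives $\|\nabla^2\Gamma(t,\cdot)\|_\infty\le C/\sqrt{T-t}$, which is integrable in $t$ on $[0,T]$.

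With these bounds, Assumption~\ref{integrability} reduces to Gaussian exponential-moment calculations: $\alpha\,\tr(\Sigma\int_0^T\nabla^2\Gamma(t,Z_t)\,\D t)$ is uniformly bounded in both cases, while $-\alpha(Z_T-\beta)'\nabla\Gamma(Z_T)+\alpha\Gamma(Z_T)$ has at most linear growth in case~(i) and at most quadratic growth with Hessian dominated (via Caffarelli) by $\Sigma^{-1}$ in case~(ii); the leading quadratic term cancels partially against the $\alpha(Z_T-\beta)'\nabla\Gamma(Z_T)$ term, leaving a concave exponent so that $\mye^\nu[\tilde\chi^2]$ and $\mye^\nu[\tilde\chi^{-1}]$ are finite. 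Continuity of $\phi$ then follows from dominated convergence. To upgrade $\phi$ to $C^{1,2}([0,T)\times\R^n)$ I would apply the Hopf--Cole change of variable $\phi=\log v$, which turns \eqref{pde} into a linear parabolic equation with smooth source $\alpha\,\tr(\Sigma\nabla^2\Gamma)$ and terminal data $\exp\{-\alpha(z-\beta)'\nabla\Gamma(z)+\alpha\Gamma(z)\}$; interior parabolic regularity for this linear equation gives the required smoothness, and an analogous argument applied to \eqref{defpsi} yields smoothness of $\psi$ on $[0,T)\times\R^n\times\R^n$. It\^o's formula applied to \eqref{eq:phirho}, matched against the martingale representation $\D\xi_t/\xi_t=\gamma(t,Z_t)'\,\D Z_t$ from Lemma~\ref{gmarkov}, then identifies $\gamma(t,y)=\nabla\phi(t,y)$.

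For Assumption~\ref{assumeh0}, the explicit factorization \eqref{eq:h}, $h(t,z,b)=k(t,z,b)\exp\{-\phi(t,z)+\psi(t,z,b)\}$, combined with the smoothness of $\phi,\psi$ and the Gaussian decay of $k$, delivers differentiability in $z$, the integrability $\int|\nabla_z h(t,z,b)|\,\D b<\infty$, and, via It\^o applied to the $\hat\nu$--martingale $h(t,Z_t,b)$, the representation \eqref{assumeh}. Substituting \eqref{eq:h} into the drift of \eqref{eqy} yields
\begin{equation*}
\Sigma\,\nabla_y\log h(t,y,\tzeta)=\frac{\tzeta-y}{T-t}-\Sigma\,\nabla\phi(t,y)+\Sigma\,\nabla_y\psi(t,y,\tzeta),
\end{equation*}
so that the full informed-trader drift collapses to the Brownian-bridge drift $(\tzeta-Y_t)/(T-t)$ plus the smooth perturbation $\Sigma\,\nabla_y\psi(t,Y_t,\tzeta)$, which is exactly \eqref{informed2}. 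Strong existence and uniqueness for \eqref{eqy} then follow by a Girsanov change of measure that cancels the smooth perturbation and reduces \eqref{eqy} to the standard Brownian-bridge SDE terminating at $\tzeta$, whose unique strong solution is known. Finally, inserting $\gamma=\nabla\phi$ into the decomposition $\D Y_t=\Sigma\gamma(t,Y_t)\,\D t+\D\hat Y_t$ guaranteed by Theorem~\ref{thm:riskaverse} gives \eqref{newdY2}.

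The main technical obstacle I anticipate is propagating the $C^{1,2}$ regularity of $\phi$ (and $\psi$) with enough control up to the terminal time $T$, especially in case~(i) where $\nabla^2\Gamma(t,\cdot)$ has the integrable but unbounded singularity $(T-t)^{-1/2}$ and $\nabla\Gamma$ is only Lipschitz on bounded sets. This requires combining interior Schauder estimates for the linear equation satisfied by $v=\E^\phi$ with careful handling of the terminal data, and ensuring that the Brownian-bridge singularity appearing in \eqref{eqy} can be separated cleanly from the smooth perturbation $\nabla_y\psi$ before re-assembling the drift via the Girsanov step.
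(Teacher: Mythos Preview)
Your proposal follows the same architecture as the paper's proof: bound $\nabla^2\Gamma(t,y)$ (Caffarelli in case~(ii), integration by parts giving $C/\sqrt{T-t}$ in case~(i)), deduce the integrability in Assumption~\ref{integrability}, upgrade $\phi,\psi$ to $C^{1,2}$ on $[0,T)\times\R^n$, read off $\gamma=\nabla\phi$ from It\^o's formula applied to \eqref{eq:phirho}, and verify Assumption~\ref{assumeh0} through the factorization \eqref{eq:h}. Two steps differ in method.

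\emph{Regularity of $\phi$ and $\psi$.} The paper does not linearize via Hopf--Cole. Instead it writes the expectation defining $\psi$ through the Brownian-bridge representation of $(Z\mid Z_t=z,Z_T=b)$ and differentiates under the integral, producing explicit formulas for $\partial_{z_i}\E^{\psi}$ and $\partial^2_{z_iz_j}\E^{\psi}$ in terms of $\nabla^3\Gamma$ and $\nabla^4\Gamma$; Malliavin-type integration by parts then gives $|\nabla^3\Gamma(u,\cdot)|\le C/(T-u)$ and $|\nabla^4\Gamma(u,\cdot)|\le C/(T-u)^{3/2}$, which are integrable against the bridge weight $(T-u)/(T-t)$. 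Regularity of $\phi$ is obtained by integrating $\psi$ in $b$. Your Schauder route is legitimate and arguably cleaner, but the paper's approach yields the concrete growth control (linear growth of $\nabla_z\psi$ in $z$ on $[0,T-\epsilon]$) that feeds directly into the SDE argument.

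\emph{Strong well-posedness of \eqref{eqy}.} Here your Girsanov step, as stated, has a gap: a change of measure that removes the perturbation $\Sigma\nabla_y\psi$ shows that the \emph{law} of $Y$ is the bridge law, hence gives weak existence and uniqueness in law, but it does not by itself deliver a solution adapted to the driving noise $Z$. The paper argues differently: the drift $\Sigma\gamma+\Sigma\nabla_y\log h$ is locally Lipschitz in $y$ with linear growth on each $[0,T-\epsilon]$ (from the explicit derivative bounds on $\psi$), so standard SDE theory gives a unique strong solution on $[0,T)$; the extension to $t=T$ then comes from identifying the $(\P,\cF^Y)$-law of $Y$ with the $\hat\nu$-law of $Z$, which is continuous on $[0,T]$. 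You can repair your argument by using local Lipschitz for strong existence and pathwise uniqueness on $[0,T)$ and reserving Girsanov (or the $h$-transform identification) only for non-explosion and the continuous extension to $T$.
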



We provide an example of condition (i) in Theorem~\ref{thm:riskaverse2} in the online appendix.  It illustrates the role of the market makers' initial endowment $\beta$. In that example, we calculate the physical distribution $\hat F$ numerically.  A particular instance of case (ii) in Theorem~\ref{thm:riskaverse2} is a normal distribution.  We provide analytic results for normal distributions in the next section.

%
\section{Risk Aversion and Normal Distributions}\label{s:normal}
%

When the risk-neutral distribution of $\tv$ is normal, we can explicitly compute the equilibrium in Theorem~\ref{thm:riskaverse}, including the physical distribution, which is also normal.  The map from normal risk-neutral distributions $F$ to normal physical distributions $\hat F$ is bijective, so the equilibrium exists whenever the physical distribution is normal.  Moreover, it can be explicitly described in terms of the physical distribution.  We give formulas for the matrices $S$, $\Lambda$, and $A_t$ that appear in Theorem~\ref{thm:normal} at the end of this section.

%
%

\begin{theorem}\label{thm:normal}
Consider any $\hat m \in \R^n$ and any symmetric positive-definite $n \times n$ matrix $\hat S$.  Set $m = \hat m - \alpha \hat S \beta$.  There exists a symmetric positive-definite matrix $S$ such that, if $F$ is the normal $(m,S)$ distribution function, then the assumptions of Theorem~\ref{thm:riskaverse} hold and the physical distribution $\hat F$ defined in Theorem~\ref{thm:riskaverse} is normal with mean $\hat m$ and covariance matrix $\hat S$.  There is a random vector $\tzeta$ and a symmetric positive-definite matrix $\Lambda$ such that the equilibrium informed trading strategy is
\begin{equation}\label{informedgaussian}
 \theta_t = \frac{1}{T-t}(\tzeta - Y_t)\,.
\end{equation}  
and the equilibrium pricing rule is 
\begin{equation}
    P_t = m + \Lambda Y_t\,.
\end{equation}
There is a nonrandom symmetric positive-definite matrix $A_{t}$ such that
\begin{align}
    \D Y_t & = - \Sigma^{1/2} A_{t} \Sigma^{-1/2}(Y_t-\beta)\,\D t + \D \hat Y_t\,,\label{corY}\\
    \D P_t &= - \Sigma^{-1/2}A_{t} \Sigma^{1/2}(P_t - m - \Lambda \beta)\,\D t + \Lambda\,\D \hat Y_t\,,\label{corP}
\end{align}
where $\hat Y$ is a $(0,\Sigma)$--Brownian motion on $(\P,\cF^Y)$.
Holding the physical distribution of $\tv$ fixed, denote the dependence of the matrices $S$, $\Lambda$, and $A_t$ on $\alpha$ by writing them as $S_\a$, $\Lambda_\a$, and $A_{\alpha t}$.  The matrices are all increasing in $\alpha$ in the sense that, if $\alpha'>\alpha$, then $S_{\alpha'}-S_\alpha$, $\Lambda_{\alpha'}-\Lambda_\a$ and $A_{\alpha't}-A_{\alpha t}$ for all $t$ are positive definite matrices. 
The equilibrium expected profit of the informed trader, conditional on $\tv$, is
\begin{equation}\label{cor:expprofit}
   \frac{1}{2} (\tv - \hat m - \a\hat S \beta)'\Lambda^{-1}(\tv-\hat m- \a\hat S \beta)  + \frac{1}{2}\tr(T\Sigma \Lambda)\,.
\end{equation}
The unconditional expected profit of the informed trader is
\begin{equation}\label{cor:expprofit2}
   \tr(T\Sigma\Lambda)  - \frac{\alpha}{2}\tr(T\Sigma \hat S) + \frac{\a^2}{2}\beta'\hat S \Lambda^{-1} \hat S \beta \,,
\end{equation}
and it is an increasing function of $\alpha$.  The expected loss of the noise traders is $\tr\left(T\Sigma \Lambda\right)$, and  the expected terminal wealth of market makers is 
$\beta'\hat m +
   \a\tr(T\Sigma \hat S)/2 -\a^2\beta'\hat S \Lambda^{-1} \hat S \beta/2$.
\end{theorem}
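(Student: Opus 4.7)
The plan is to exploit the fact that when $F=N(m,S)$ and $G=N(0,T\Sigma)$ are both Gaussian, the Brenier potential transporting $G$ to $F$ is quadratic. Explicitly, $\nabla\Gamma(y)=m+\Lambda y$ where $\Lambda$ is the unique symmetric positive-definite matrix with $\Lambda(T\Sigma)\Lambda=S$, so $\Gamma(y)=m'y+\frac{1}{2}y'\Lambda y+c$ and, by the heat-equation formula \eqref{Gammaty}, $\Gamma(t,y)=m'y+\frac{1}{2}y'\Lambda y+\frac{1}{2}(T-t)\tr(\Sigma\Lambda)+c$ with $c$ fixed by $\Gamma(0,0)=0$. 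The pricing rule $P_t=\nabla\Gamma(t,Y_t)=m+\Lambda Y_t$ and the constant Kyle matrix $\nabla^2\Gamma(t,y)=\Lambda$ follow immediately. The hypotheses of Theorem~\ref{thm:riskaverse2}(ii) hold for $F$ since its density is Gaussian, so the equilibrium construction of Theorem~\ref{thm:riskaverse} applies.

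To identify the physical distribution, I would substitute the quadratic $\Gamma$ into \eqref{eq:deftx}. Since $\nabla^2\Gamma(t,Z_t)=\Lambda$ is deterministic, the trace term in $\tilde\chi$ contributes only a multiplicative constant, and algebraic simplification collapses the remainder to $\tilde\chi\propto\exp\{-\frac{\alpha}{2}(Z_T-\beta)'\Lambda(Z_T-\beta)\}$. Under $\hat\nu$, the variable $Z_T$ is therefore Gaussian with covariance $\Sigma_T=((T\Sigma)^{-1}+\alpha\Lambda)^{-1}$ and mean $\alpha\Sigma_T\Lambda\beta$, so $\tv=m+\Lambda Z_T$ is normal with covariance $\hat S=\Lambda\Sigma_T\Lambda$ and mean $\hat m=m+\alpha\hat S\beta$, confirming $m=\hat m-\alpha\hat S\beta$. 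Rearranging the covariance identity yields $\hat S^{-1}=S^{-1}+\alpha\Lambda^{-1}$, which combined with $S=\Lambda(T\Sigma)\Lambda$ produces a quadratic matrix equation for $A=\Lambda^{-1}$, namely $A(T\Sigma)^{-1}A+\alpha A=\hat S^{-1}$. Conjugating by $(T\Sigma)^{-1/2}$ with $B=(T\Sigma)^{-1/2}A(T\Sigma)^{-1/2}$ and $C=(T\Sigma)^{-1/2}\hat S^{-1}(T\Sigma)^{-1/2}$ and completing the square converts this into $(B+\frac{\alpha}{2}I)^2=C+\frac{\alpha^2}{4}I$, whose unique symmetric positive-definite solution $B=(C+\frac{\alpha^2}{4}I)^{1/2}-\frac{\alpha}{2}I$ determines $\Lambda$ and $S$ uniquely from $(\hat m,\hat S)$.

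For the dynamics, Gaussian integration in \eqref{defphi} gives $\phi(t,z)=-\frac{1}{2}(z-\beta)'K_t(z-\beta)+g(t)$ with $K_t=((\alpha\Lambda)^{-1}+(T-t)\Sigma)^{-1}$, so $\gamma(t,z)=-K_t(z-\beta)$ and Theorem~\ref{thm:riskaverse2} delivers \eqref{corY} with $A_t=\Sigma^{1/2}K_t\Sigma^{1/2}$; applying $\D P_t=\Lambda\,\D Y_t$ yields \eqref{corP}. To collapse the general strategy \eqref{informed2} to the Brownian bridge \eqref{informedgaussian}, I would compute the $\hat\nu$--transition density $h(t,z,b)$ of $Z$, which is Gaussian because $\tilde\chi$ depends only on $Z_T$, and obtain $\Sigma\nabla_y\log h(t,y,\tzeta)=\frac{1}{T-t}(\tzeta-\mu_t(y))$ for an explicit Gaussian conditional mean $\mu_t(y)$. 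The matrix identity $(T-t)\Sigma K_t=\Sigma_t\alpha\Lambda$, verified from the definitions using $M[I+NM]^{-1}=[I+MN]^{-1}M$, then causes the two drift contributions $\Sigma\gamma(t,Y_t)+\Sigma\nabla_y\log h(t,Y_t,\tzeta)$ to telescope into $\frac{1}{T-t}(\tzeta-Y_t)$.

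Monotonicity follows from functional calculus: because $B$ is a scalar function of $C$, in the eigenbasis of $C$ the scalar $b_i(\alpha)=\sqrt{c_i+\alpha^2/4}-\alpha/2$ is decreasing in $\alpha$, while elementary one-variable computations show that $\alpha b_i(\alpha)$ is increasing and $b_i(\alpha)/\alpha$ is decreasing. Pulling back through the congruence by $(T\Sigma)^{1/2}$, this yields $\Lambda_\alpha=A_\alpha^{-1}$ increasing, $S_\alpha^{-1}=\hat S^{-1}-\alpha\Lambda_\alpha^{-1}$ decreasing (hence $S_\alpha$ increasing), and $(\alpha\Lambda_\alpha)^{-1}$ decreasing (hence $K_t$ and $A_t=\Sigma^{1/2}K_t\Sigma^{1/2}$ increasing in the Loewner order). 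Finally, the conditional expected profit \eqref{cor:expprofit} follows from $\Gamma^*(v)=\frac{1}{2}(v-m)'\Lambda^{-1}(v-m)+\frac{1}{2}\tr(T\Sigma\Lambda)$ by Legendre transform of the quadratic $\Gamma$ (using the normalization $\Gamma(0,0)=0$) after substituting $m=\hat m-\alpha\hat S\beta$; taking Gaussian expectations against $\hat F=N(\hat m,\hat S)$ gives \eqref{cor:expprofit2}, and the dealer and noise-trader quantities drop out of Corollary~\ref{cor21} together with $\mye\langle P,Y\rangle_T=T\,\tr(\Sigma\Lambda)$. The main obstacle is the matrix algebra in the second step—extracting $\Lambda$ in closed form from the quadratic matrix equation and verifying that the principal branch of the matrix square root yields the correct symmetric positive-definite $\Lambda$—rather than any conceptual novelty beyond the general theorems already proved.
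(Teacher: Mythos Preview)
Your proposal is correct and follows essentially the same route as the paper's proof: both exploit the quadratic Brenier potential for the Gaussian-to-Gaussian transport, diagonalize the resulting quadratic matrix equation linking $\Lambda$ to $\hat S$ in a common eigenbasis (the paper works with eigenvalues $\hat d_i$ of $T\Sigma^{1/2}\hat S\Sigma^{1/2}$, you with eigenvalues $c_i=1/\hat d_i$ of its inverse $C$), compute $\phi$ as an explicit quadratic in $z-\beta$, and deduce the Loewner monotonicity statements by scalar calculus in that fixed basis. The one place you take an unnecessary detour is the reduction of the informed strategy to the Brownian bridge: because $\nabla^2\Gamma\equiv\Lambda$ is constant, the function $\psi(t,z,b)$ in \eqref{defpsi} is independent of $z$, so $\nabla_z\psi=0$ and \eqref{informed2} collapses to \eqref{informedgaussian} in one line---you do not need to compute the Gaussian transition density $h$ explicitly or invoke the matrix identity $(T-t)\Sigma K_t=\alpha\Sigma_t\Lambda$, though that route also works.
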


Theorem~\ref{thm:normal} provides explicit descriptions of phenomena that we expect to hold qualitatively for non-normal distributions as well.  For example, it illustrates the effect of market makers' inventories on equilibrium prices, the mean reversion in their inventories, and the consequent mean reversion in prices that result from risk aversion.  Market makers' inventories equal $\beta-Y_t$.  Equation~\eqref{corY} shows that the vector $\Sigma^{-1/2}(\beta-Y_t)$ is mean reverting to zero with the symmetric mean-reversion matrix $A_{t}$.  Likewise, equation~\eqref{corP} shows that the vector $\Sigma^{1/2}(P_t-m-\Lambda \beta)$ is mean reverting to zero also with mean-reversion matrix $A_{t}$.  
We have $P_t-m-\Lambda \beta = \Lambda (Y_t-\beta)$, so \eqref{corP} shows that risk premia depend on market maker inventories and generally (modulo the off-diagonal elements of the matrix in \eqref{corP}) have the same signs as inventories.  At date 0, inventories equal endowments $\beta$, and expected price changes from 0 to $T$ equal $\mye[\tv]-P_0 = \alpha \hat S \beta$.
This is the same formula that arises in a competitive CARA/normal model, except that, in the competitive model, the total number of shares outstanding appears in place of the market makers' inventory~$\beta$.
The risk premia and mean reversion go hand in hand with `excess volatility.' 
The excess volatility is manifested in quadratic variations being larger than variances.  The quadratic variation of $P$ over $[0,T]$ equals the 
risk-neutral covariance matrix $S$, because $P$ is a risk-neutral martingale.  Quadratic variations are therefore increasing in risk aversion.

Risk aversion has two effects on the informed trader's expected profit.  First, it makes the market less liquid in the sense that $\Lambda$ is larger, so $\Lambda^{-1}$ in \eqref{cor:expprofit} is smaller.  The reduction in liquidity reduces the informed trader's profit.  For example, the term 
$ (\tv - \hat m)'\Lambda^{-1}(\tv-\hat m)/2$
in \eqref{cor:expprofit} is decreasing in risk aversion.  However, there is another important effect: risk-averse market makers are willing to trade at prices different from expected values in order to reduce inventory risk, which is advantageous to the informed trader.  This can be seen, for example, in the term
$\alpha (\tv - \hat m)' \Lambda^{-1}\hat S \beta$
in \eqref{cor:expprofit}, which shows that the informed trader makes more money if her information is in the same direction as the market makers' inventory $\beta$---for example, if she wants to buy when market makers are already long.  The second effect is the most important, so, as the corollary states, the unconditional expected profit of the informed trader is increasing in $\alpha$, even if market makers start with zero inventory.

As in the risk-neutral model, we can say that the expected losses 
 $\tr\left(T\Sigma \Lambda\right)$ of noise traders, which are increasing in $\alpha$, are transfers to the informed trader.  However, when~$\beta$ is small, the informed trader makes less than  $\tr\left(T\Sigma \Lambda\right)$, because market makers must be compensated for the inventory risk that is created by making the market.  On the other hand, when the initial inventory $\beta$ is large, informed traders make more than  $\tr\left(T\Sigma \Lambda\right)$, because market makers are willing to incur expected losses to shed the inventory risk.
 
 The matrices $S$, $\Lambda$, and $A_t$ that appear in Theorem~\ref{thm:normal} are defined as follows.  
Diagonalize the symmetric positive-definite matrix $T\Sigma^{1/2}\hat S\Sigma^{1/2}$ as $ V' \hat D V$ where $\hat D$ is the diagonal matrix of eigenvalues and $V$ is the orthogonal matrix of eigenvectors.  
 Let $\hat d_1 ,\ldots,\hat d_n$ denote the eigenvalues.  For each $i$, define $d_i$ by 
\begin{align}\label{DDhat}d_i = \hat d_i + \frac{1}{2}\alpha \hat d_i\left(\alpha \hat d_i + \sqrt{\alpha^2\hat d_i^2 + 4\hat d_i}\right)  \quad \Rightarrow \quad \sqrt{d_i} = \frac{\alpha }{2}\hat d_i + \sqrt{\frac{\alpha^2 \hat d_i^2}{4} +  \hat d_i}
\,.
\end{align}
For each $t \in [0,T]$ and each $i=1,\ldots,n$, define
$$d_{it} = \frac{\alpha T \sqrt{d_i}}{T + (T-t)\alpha \sqrt{d_i}}\,.$$
Set $D = \text{diag}(d_1,\ldots,d_n)$ and $D_t = \text{diag}(d_{1t},\ldots,d_{nt})$.  The matrices in Theorem~\ref{thm:normal} are:
\begin{align}
S &=T^{-1}\Sigma^{-1/2} V'DV\Sigma^{-1/2}\,,\\
\Lambda &= T^{-1} \Sigma^{-1/2}V'D^{1/2}V\Sigma^{-1/2}\,,\\
A_t &= T^{-1}V'D_tV\,.
\end{align}
The matrices are increasing in $\alpha$ because each $d_i$ and each $d_{it}$ is increasing in $\alpha$.  

It is instructive to consider the univariate case.  Suppose $\tv$ is a scalar and, under the physical distribution, is normal with mean $\hat{m}$ and variance $\hat \sigma_v^2$.  Denote the cumulative variance $T \Sigma$ of noise trades by $\sigma_z^2$.  The risk-neutral distribution is normal with mean $m = \hat m - \alpha \beta \hat \sigma_v^2$ and variance $\sigma_v^2$.  After some simplification, we see from the above formulas that Kyle's lambda is $\lambda = \sigma_v/\sigma_z$, which is the same as Kyle's formula but using the risk-neutral standard deviation $\sigma_v$ instead of the physical standard deviation of $\tv$.  In terms of the physical standard deviation, we have 
$$ \lambda = \frac{\alpha}{2}\hat\sigma_v^2 + \sqrt{\frac{\alpha^2 \hat\sigma_v^4}{4} + \frac{\hat\sigma_v^2}{\sigma_z^2}}\,.$$
Keeping the physical distribution fixed, $\lambda$ is increasing in $\alpha$, so dealer aversion to inventory risk reduces market liquidity.  Because $\lambda$ is increasing in $\alpha$, the risk-neutral standard deviation $\sigma_v = \lambda \sigma_z $ is also increasing in $\alpha$.   The mean-reversion coefficient $A_t$ increases over time, starting at
$$\frac{1}{T} \cdot \frac{\alpha\lambda\sigma_z^2}{1+\alpha\lambda\sigma_z^2}$$
at date $t=0$ and rising to $\alpha\lambda\sigma_z^2/T$ at date $t=T$.
It is also increasing in $\alpha$ for each $t$.

%
\section{Risk Aversion and Options}\label{s:options}
%

We now apply the model with risk-averse market makers to study informed trading in options.  In particular, we investigate the extent to which information from the options market can be used to predict the return of the underlying asset.  We consider an asset and a European call option on the asset that matures at the announcement date~$T$.  This is the model studied by \citet{Back:1993:AsymmetricInformationOptions}, except that we allow market makers to be risk averse---and hence for the return of the underlying asset to be predictable---and we do not need to make special parametric assumptions to solve the model.  By put-call parity, the model is equivalent to one in which an underlying asset and a put option are traded or to one in which an underlying asset and a straddle are traded.  For the sake of brevity, we call the underlying asset a stock.  We use subscripts $s$ and~$o$ to denote `stock' and `option.'

We take the initial endowment of the market makers to be $\beta=0$, and we set absolute risk aversion to be $\alpha=0.2$.\footnote{The qualitative results do not depend on the magnitude of risk aversion.  Our choice of $\alpha=0.2$ is prompted by the following reasoning.  We have not specified the units of wealth, but since we will take the standard deviation of stock noise trading to be 2 in all examples in this section, a reasonable unit would be \$100 million.  An investor with an absolute risk aversion of 0.2 when wealth is measured in units of \$100 million would pay \$1,000 to avoid a coin toss for \$1,000,000.  This seems like a not unreasonable level of risk aversion.}   We assume the stock is lognormally distributed: $\log \tv_s = \mu + \sigma \tilde x$, where $\tilde x$ is a standard normal under the risk-neutral distribution.  We set $\sigma = 0.2$ and $\mu = \log (100) - \sigma^2/2$. In our numerical solution, which is described in the online appendix, we work with $\log \tv_s$ on a grid of $\mu \pm 3.3 \sigma$, so we can also view this as an example of a truncated lognormal distribution and hence as satisfying condition (i) of Theorem~\ref{thm:riskaverse2}.  The risk-neutral mean of $\tv_s$ is 100, and we set the strike of the option to be 100.  We take $T=1$, and we consider four different $\Sigma$ matrices---varying the sign of the correlation between stock and option orders and varying the relative standard deviations of stock and option orders---and solve the model for each.  We then simulate 10,000 sample paths for each choice of $\Sigma$. 
Figures~\ref{fig:1} and~\ref{fig:sdf_density} present aspects of the solution for one choice of $\Sigma$; Figures~\ref{fig:bidensity} and~\ref{fig:mmprofits} present the solution for two values of $\Sigma$; and Table~\ref{tab:regr} presents results for all four values of $\Sigma$.    The qualitative features shown in the figures are very similar for the other values of $\Sigma$.  

Figure~\ref{fig:1} illustrates the solution of the model at $t=0.5$.  The state vector is $Y_t$, which is the vector of stock and option order imbalances.  The option is in the money (the stock price is above 100) in the top-right part of each panel in Figure~\ref{fig:1}.  Panels (a) and (b) show that, when the option is in the money, the stock and option prices are approximately functions of the sum of stock and option imbalances.  This reflects the fact that the assets are informationally equivalent when the option is certain to finish in the money.  On the other hand, in the bottom left part of the panels, the option is out of the money.  When the option is deeply out of the money, the stock price is virtually independent of the option imbalance, reflecting the fact that option orders are almost certainly pure noise in that circumstance.  These features become more prominent as $t \rightarrow T$, and at $t=T$, the stock price depends solely on $y_s+y_o$ when the option is in the money, depends solely on $y_s$ when the option is out of the money, and is discontinuous in $(y_s,y_o)$ at the locus of points such that the option is exactly at the money.  

\begin{figure}
    \centering
    \includegraphics[scale=0.6]{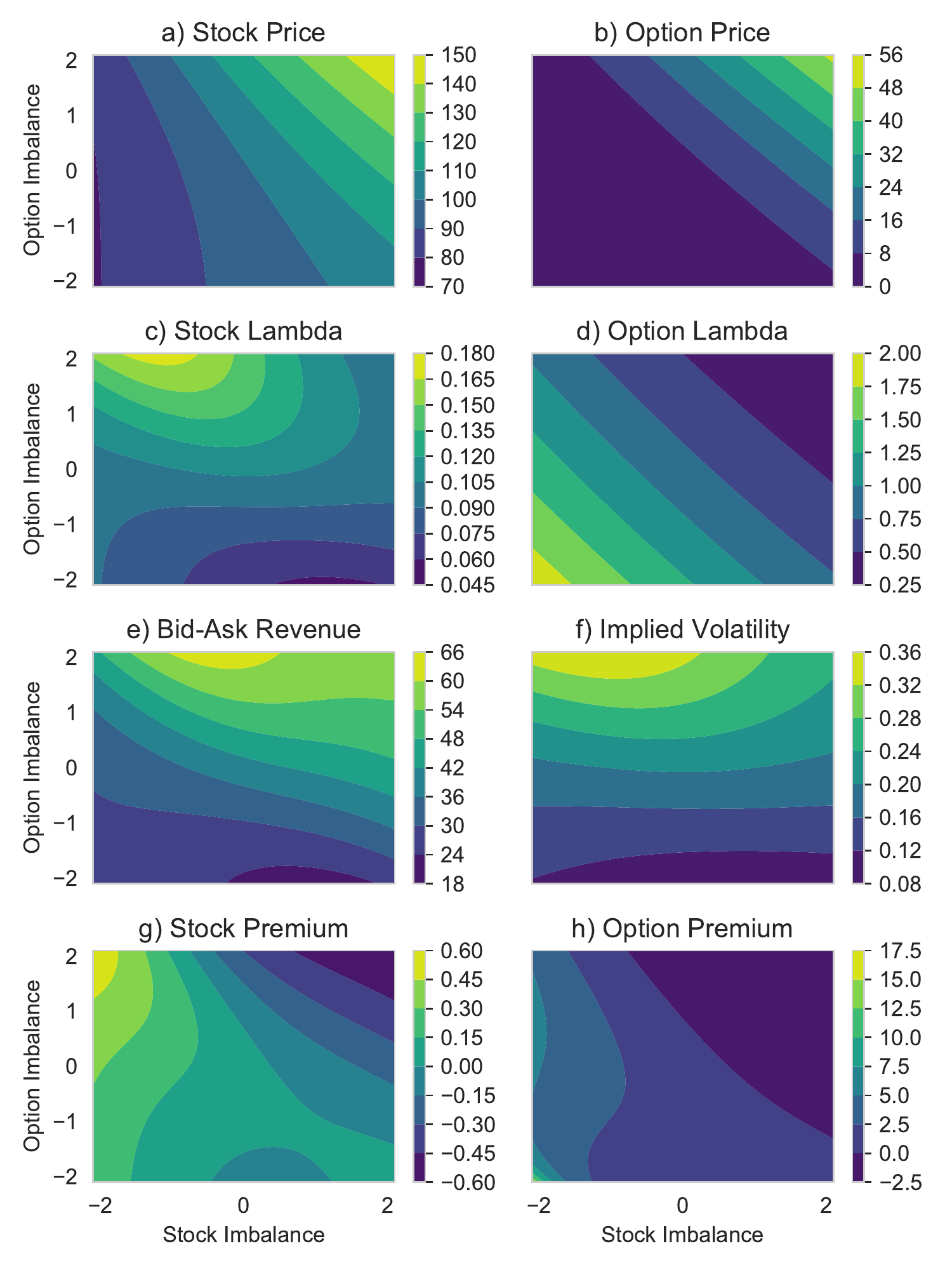}
    \caption{\textbf{Solution of the Model at $t=0.5$} \newline The figure presents properties of the model at the midpoint $t=0.5$ as a function of the state variables $Y_s=\,$ Stock Imbalance and $Y_o = \,$ Option Imbalance.  The plots are for $\Sigma = [[4,-2],[-2,4]]$.}
    \label{fig:1}
\end{figure}

Information flow in the market is affected in a qualitative way by the presence of option trading, which introduces a strong nonlinearity.  For example,  conditional variances can rise over time in nonlinear models, unlike linear Gaussian models.   In this model, the conditional variance of $\tv_s$  increases at many state/dates.  Uncertainty can also shrink very fast, depending on stock and option orders.  This phenomenon influences many of the variables plotted in Figure~\ref{fig:1}.  When the call has been purchased and the stock sold, market makers become very uncertain about the value of the stock.  It is quite likely that the trades in one of the assets are noise trades, but market makers do not know which is which.  On the other hand, when the call has been sold and the stock purchased, it is quite likely that the call will finish out of the money, but the stock value is not too low.  This narrows the range of possibilities considerably.  These two alternatives correspond to the top left and bottom right parts of the panels. This phenomenon also becomes more prominent as $t\rightarrow T$ and mirrors the discontinuity of the stock price in $(y_s,y_o)$ at $t=T$, which occurs roughly along the off-diagonal and is larger in the top left than in the bottom right in the $(y_s,y_o)$ plane.

The stock and option lambdas in Panels (c) and (d) of Figure~\ref{fig:1} are the diagonal elements of the $\Lambda$ matrix, each divided by its respective price, so they are in relative terms.  The option lambda is considerably larger than the stock lambda, consistent with the empirical fact that bid-ask spreads are higher and liquidity generally lower in options markets than in markets for the underlying assets.  The option lambda is decreasing in moneyness.  The stock lambda is influenced by the phenomenon discussed in the previous paragraph.  It is high in the top left where market makers are very uncertain about the stock value, and it is low in the bottom right where uncertainty is much lower.  Moreover, it is generally increasing in the option order imbalance.

Panel (e) in Figure~\ref{fig:1} plots  the bid-ask spread revenue received by market makers from noise traders per unit time.  This is $\tr(\Sigma \nabla^2 \Gamma(t,y))$.  Panel (f) plots the Black-Scholes implied volatility of the option.  Both plots have a similar pattern to the stock lambda: they are generally increasing in the option imbalance and in particular are high in the upper left and low in the bottom right.  

Panels (g) and (h) in Figure~\ref{fig:1}  plot the risk premia of the stock and option.  These are the elements of $\Lambda(t,y) \Sigma \nabla \phi(t,y)$, each divided by its respective price, so they are the expected rates of return per unit time.  Both are generally inversely related to the stock price, though the patterns are complex.  An interesting feature of the stock premium is that it is especially high in the upper left, where the implied volatility is also high.  This induces a positive correlation between the implied volatility and subsequent stock returns, as we will see.

Figure~\ref{fig:sdf_density} presents some data from a simulation of the model, using the same $\Sigma$ as in Figure~\ref{fig:1}.  We simulate the path of $Y$ and $\D P_t = \Lambda (t,Y_t)\,\D Y_t$, using the physical dynamics of $Y$ given in Theorem~\ref{thm:riskaverse2}.  The terminal value of $P$ is $P_T = (\tv_s, \tv_o)$, where $\tv_o = (\tv_s-K)^+$.  Panel (a) presents the realized values of the SDF $\E^{-\alpha \tw}/\mye[\E^{-\alpha \tw}]$, replacing the expectation with the mean across simulations, plotted against the realized values of $\tv_s$.  We expect the SDF to be high for extreme values of $\tv_s$, because the informed trader makes more money and market makers typically lose money when the informed trader has unexpected information.  This same phenomenon appears in Theorem~\ref{thm:normal}.  It is somewhat surprising that the SDF is also often high for values of $\tv_s$ near the option strike.  This occurs when there are large buy orders and option sell orders early during the trading period.  These push the market to the bottom right in the panels in Figure~\ref{fig:1}, where uncertainty is low.  Competition between dealers is then much like risk-neutral competition, pushing down the lambdas and the bid-ask revenue from noise traders.  Consequently, the bid-ask revenue is insufficient to cover the losses to the informed trader from the initially overvalued option.  

Panel (b) of Figure~\ref{fig:sdf_density} shows the assumed lognormal risk-neutral density and a kernel estimate of the physical density of $\tv_s$, based on the simulated values of $P_T$.  The risk-neutral  density is the physical density multiplied by the SDF, so high values of the SDF produce high risk-neutral probabilities.  Because the SDF is high for extreme values of $\tv_s$, the risk-neutral distribution has extra weight in the tails and a larger variance than the physical distribution, as in  Theorem~\ref{thm:normal}.  The high SDF near the option strike  also adds weight to that part of the distribution.  If the physical distribution were unimodal, this would produce a spike in the risk-neutral distribution.  However, we start with a lognormal risk-neutral distribution, so the added weight in the risk-neutral distribution is possible only if the physical density is small near the option strike.  Consequently, the physical density that is consistent with the given risk-neutral density is bimodal.  This particular feature of the model seems to be an artifact of assuming there is only a single option traded.  Trading multiple strikes should smooth out the spikes in the risk-neutral density or equivalently the troughs in the physical density.

\begin{figure}
    \centering
    \includegraphics[scale=0.6]{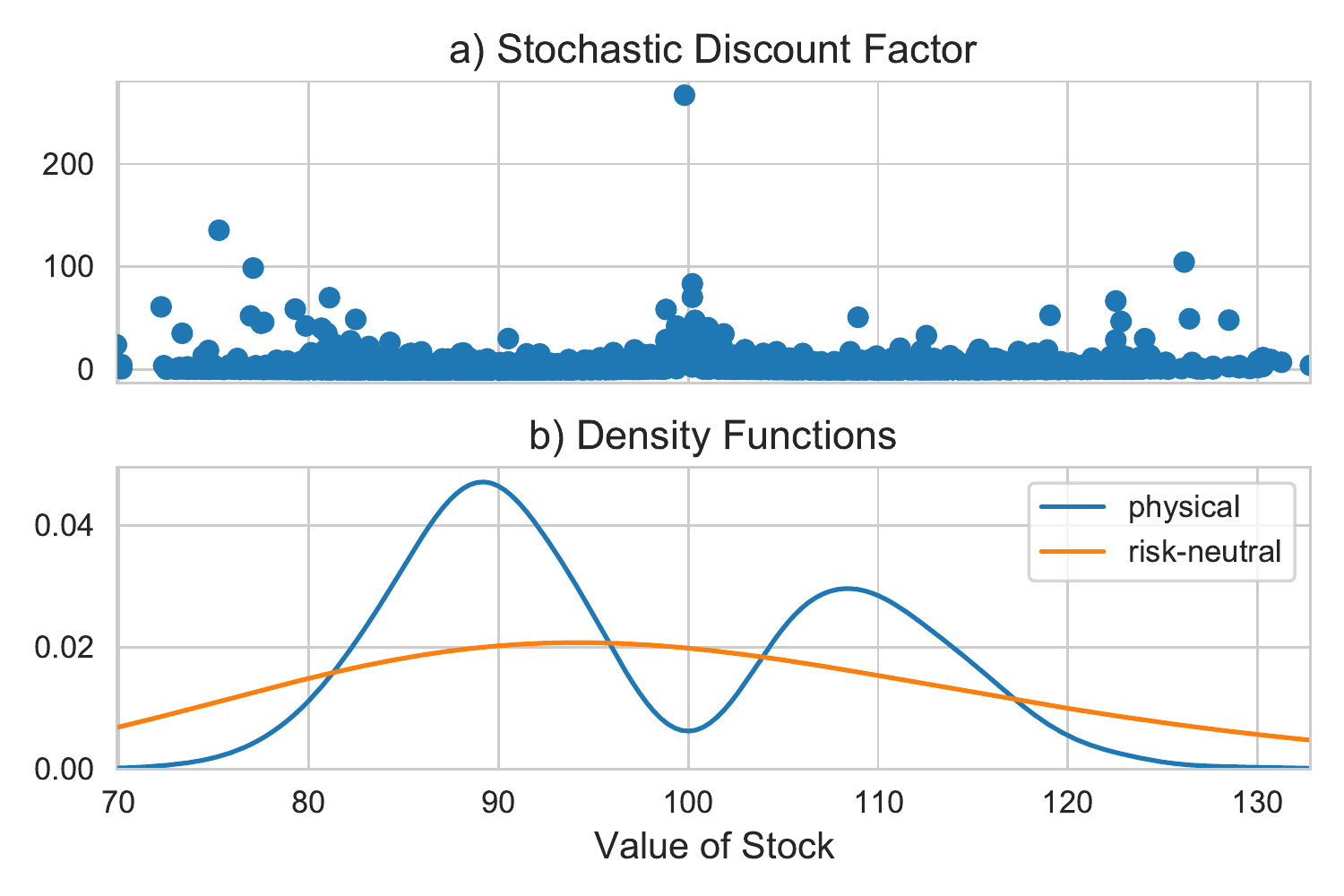}
    \caption{\textbf{Risk Neutral and Physical Densities} \newline
    Panel (a) is a scatter plot of the SDF versus realized values of $\tv_s$.  Panel (b) presents a kernel estimate of the physical density based on realized values of $\tv_s$ and also presents the assumed lognormal risk-neutral density.  The scatter plot and kernel estimate are based on 10,000 simulations with $\Sigma = [[4,-2],[-2,4]]$.}
    \label{fig:sdf_density}
\end{figure}

Figure~\ref{fig:bidensity} shows the physical distribution of $Y_T$ for two choices of $\Sigma$.  Panel (a) uses the same value of $\Sigma$ as in Figures~\ref{fig:1} and~\ref{fig:sdf_density}.  Panel (b) uses the same $\Sigma$ except that the correlation of $Z_s$ and $Z_o$ is switched from negative to positive.  The figure shows that the correlation of net imbalances $Y_s$ and $Y_o$ from date 0 to $T$ is negative, regardless of the correlation of noise trades.  This means that market makers usually end up with hedged positions---long the stock if they are short calls and long calls if they are short the stock.  This is an intuitive consequence of risk aversion.  One can see from the figure that market makers are short the call and long the stock at the modes of the distributions.  The same is true for the means of the distributions, so on average market makers in this model sell options and use the stock to hedge, consistent with what actual option market makers usually do (for example, \citet{ni_etal} show that option market makers  on average hold portfolios with negative gammas, meaning that they are net short options).

\begin{figure}
    \centering
    \includegraphics[scale=0.6]{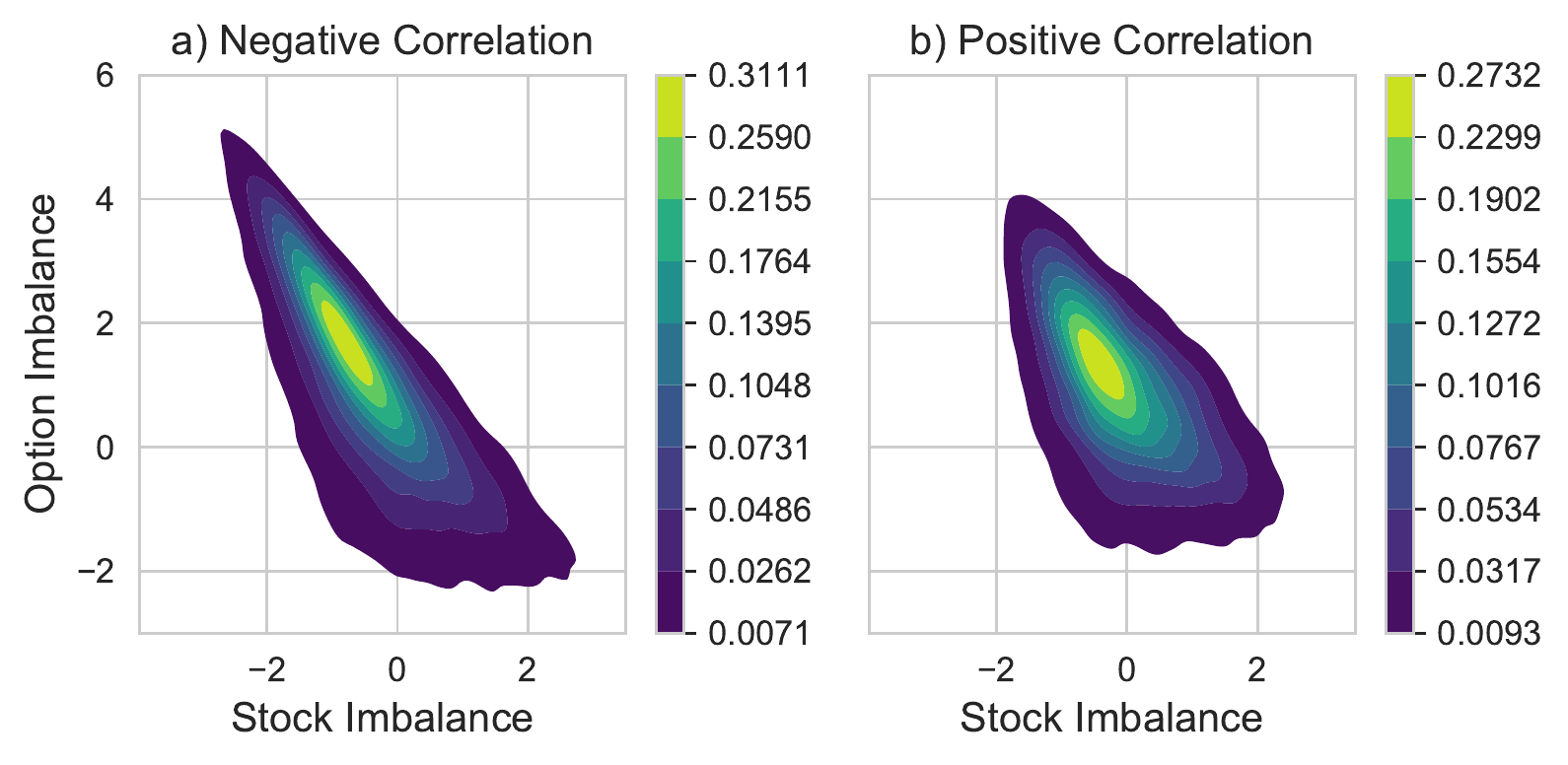}
    \caption{\textbf{Physical Density of Order Imbalances at date $t=1$}\newline
    The plots are kernel estimates of the bivariate density of $Y_s = \,$ Stock Imbalance and $Y_o=\,$ Option Imbalance at the terminal date $t=1$.  Panel (a) is based on 10,000 simulations with $\Sigma = [[4,-2],[-2,4]]$, and Panel (b) is based on 10,000 simulations with $\Sigma = [[4,2],[2,4]]$.}
    \label{fig:bidensity}
\end{figure}

Figure~\ref{fig:mmprofits} plots market maker profits as a function of $Y_T$ for the same two values of $\Sigma$ as in Figure~\ref{fig:bidensity}.  Market maker profits depend on the path of $Y$, not just on $Y_T$, so the figure shows an estimate of expected market maker profits conditional on $Y_T$.  The region in which profits are highest is the upper left, which, as discussed in connection with Figure~\ref{fig:1}, is the area in which market makers are most uncertain about the value of the stock.  The combination of stock sells and option buys amplifies uncertainty for market makers.  To compensate for this extra risk, price impacts and bid-ask revenue are higher, as shown in Figure~\ref{fig:1}, and market maker profits are higher on average.

\begin{figure}
    \centering
    \includegraphics[scale=0.6]{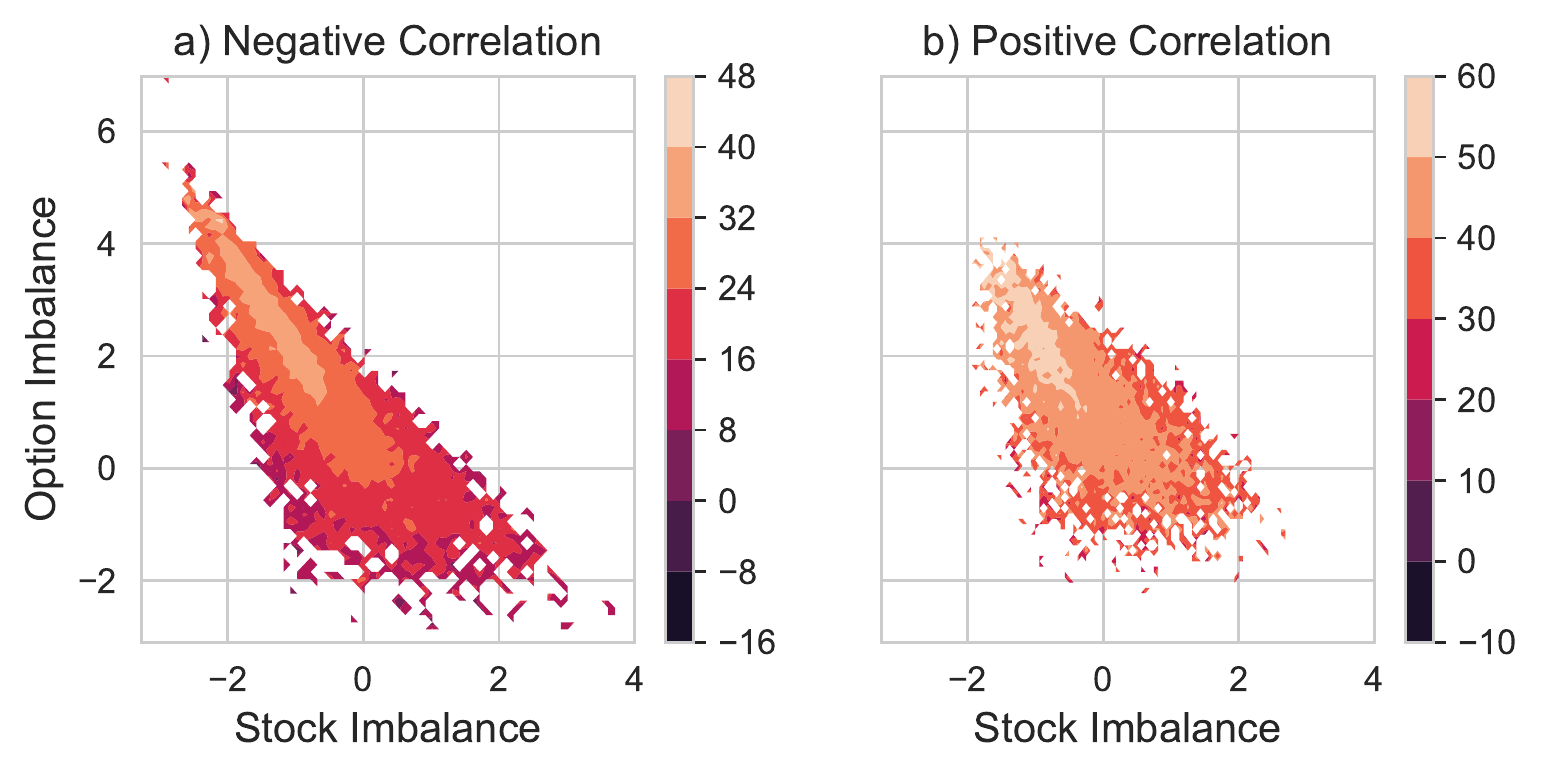}
    \caption{\textbf{Market Maker Profits}\newline
    Realized market maker profits are averaged within bins in a $100 \times 100$ grid.
     Panel (a) is based on 10,000 simulations with $\Sigma = [[4,-2],[-2,4]]$, and Panel (b) is based on 10,000 simulations with $\Sigma = [[4,2],[2,4]]$.}
    \label{fig:mmprofits}
\end{figure}

We can use the simulated model to estimate moments and conditional moments for which we do not have analytic expressions.  As an illustration, we look at the correlation between the implied volatility and future stock returns.  \citet{anetal} document empirically that higher implied volatilities predict higher future stock returns.  We find the same correlation in our model: the conditional risk premium is higher when the implied volatility is higher.  This is a robust finding across all values of $\Sigma$ that we examine.  It is driven by the fact that implied volatilities are highest in the upper left corners of the panels in Figure~\ref{fig:1}, where the stock has been sold and the option purchased.  The stock risk premium is a decreasing function of stock orders, and is also especially high in the upper left corner in Figure~\ref{fig:1}; hence, the implied volatility and risk premium are positively correlated.   Table~\ref{tab:regr} reports regressions of the risk premium on the implied volatility at $t=0.5$ for all four values of $\Sigma$.\footnote{We look only at options with time remaining to maturity equal to 0.5 to mirror An et al.'s study of options with a fixed time to maturity (30 days).  The implied volatility is initially 20\% for all observations, so regressing on the change is equivalent to regressing on the level.}  The coefficient is positive and significant in all of the univariate regressions.

\begin{table}
    \caption{Regression of Stock Risk Premium on Implied Volatility at $t=0.5$}
    \label{tab:regr}
    
    \footnotesize\vspace{1em}
    The stock risk premium (defined as the stock-element of the vector $\Lambda(t,y) \Sigma \nabla \phi(t,y)$  divided by the stock price) is regressed at the midpoint $t=0.5$ on the change in the implied volatility from $t=0$ to $t=0.5$. The data points are independent simulations of the model.  Regressions (2), (4), (6), and (8) include a control for the continuously compounded stock return from $t=0$ to $t=0.5$. Under each coefficient is the 95\% confidence interval (between square brackets) and standard error.

    \begin{tabularx}{\textwidth}{lRRRR}\toprule
    & \multicolumn{2}{c}{$\Sigma=[4,-2 ; -2, 4]$} & \multicolumn{2}{c}{$\Sigma=[4, -1; -1, 1]$} \\
        \cmidrule(r){2-3}\cmidrule(lr){4-5}
& \multicolumn{1}{c}{(1)} & \multicolumn{1}{c}{(2)} & \multicolumn{1}{c}{(3)} & \multicolumn{1}{c}{(4)} \\
\midrule
Intercept & -0.003 & 0.075 & 0.007 & 0.037\\
 & {}[-0.011, 0.005] & {}[0.072, 0.078] & {}[0.002, 0.012] & {}[0.036, 0.038]\\
 & s.e. = 0.004 & s.e. = 0.001 & s.e. = 0.003 & s.e. = 0.001\\[1.1ex]
$\Delta$ Implied vol & 0.444 & 0.491 & 0.354 & 0.547\\
 & {}[0.384, 0.503] & {}[0.472, 0.511] & {}[0.277, 0.430] & {}[0.527, 0.568]\\
 & s.e. = 0.030 & s.e. = 0.010 & s.e. = 0.039 & s.e. = 0.010\\[1.1ex]
Stock return &  & -2.187 &  & -1.774\\
 &  & {}[-2.202, -2.172] &  & {}[-1.783, -1.764]\\
 &  & s.e. = 0.008 &  & s.e. = 0.005\\
\midrule
Obs. & 10000 & 10000 & 10000 & 10000\\
R$^2$ & 0.021 & 0.894 & 0.008 & 0.932\\
\midrule
 & \multicolumn{2}{c}{$\Sigma=[4, 2; 2, 4]$} & \multicolumn{2}{c}{$\Sigma=[4, 1; 1, 4]$} \\
        \cmidrule(r){2-3}\cmidrule(lr){4-5}
& \multicolumn{1}{c}{(5)} & \multicolumn{1}{c}{(6)} & \multicolumn{1}{c}{(7)} & \multicolumn{1}{c}{(8)}\\
\midrule
Intercept & 0.041 & 0.346 & 0.037 & 0.183\\
 & {}[0.035, 0.048] & {}[0.343, 0.348] & {}[0.032, 0.041] & {}[0.182, 0.184]\\
 & s.e. = 0.003 & s.e. = 0.001 & s.e. = 0.002 & s.e. = 0.001\\[1.1ex]
$\Delta$ Implied vol & 0.381 & -0.603 & 0.253 & -0.188\\
 & {}[0.280, 0.482] & {}[-0.630, -0.575] & {}[0.135, 0.372] & {}[-0.217, -0.160]\\
 & s.e. = 0.052 & s.e. = 0.014 & s.e. = 0.060 & s.e. = 0.014\\[1.1ex]
Stock return &  & -2.360 &  & -1.830\\
 &  & {}[-2.373, -2.347] &  & {}[-1.839, -1.821]\\
 &  & s.e. = 0.007 &  & s.e. = 0.004\\
\midrule
Obs. & 10000 & 10000 & 10000 & 10000\\
R$^2$ & 0.005 & 0.928 & 0.002 & 0.943\\
\bottomrule
    \end{tabularx}

\end{table}

Table~\ref{tab:regr} also reports regressions of the risk premium on the implied volatility and the prior stock return.  This is a partial replication of Table VI of \citet{anetal}, which controls for prior returns (`short term reversal') and other stock return predictors studied in the literature.  The coefficient on the prior return is consistently negative in Table~\ref{tab:regr}, as it is in An et al.'s Table VI.  This is a consequence of excess volatility and mean reversion as discussed previously.  The sign of the coefficient on the implied volatility, unlike the univariate coefficient, is sensitive to the model parameters.  It is positive (as in the data) when the correlation between stock and option noise trades is negative (Columns 2 and 4) but negative when the correlation between stock and option noise trades is positive (Columns 6 and 8).  The reason that the sign varies across these cases in the multivariate regressions is that the correlation between the regressors changes sign across the cases.  The prior return and implied volatility are positively correlated in Columns 2 and 4 (thus, the multivariate regression coefficient is larger than the univariate coefficient) but negatively correlated in Columns 6 and 8 (thus, the multivariate regression coefficient is smaller than the univariate coefficient and in fact is negative).  Additional empirical investigation of this phenomenon would be useful.

%
\section{Conclusion}
%

The solution of the Kyle model is much simpler when described in terms of optimal transport theory.  By recognizing this fact, we are able to extend the solution to a much broader class of models.  The extension to risk-averse market makers combines two important literatures that have heretofore developed primarily in parallel: the adverse selection and the inventory risk explanations of market illiquidity.  We are able to quantify the contribution each makes to illiquidity.  As one example of the applicability of this new approach, we derived a novel result regarding the predictive power of implied volatilities for stock returns.  We anticipate that many additional applications will be made in the future.   

%
\begin{appendix}
\section*{Proofs}

%
%
Some proofs rely on Lemmas~A.1 and~A.2.  Proofs of those are provided in the online appendix.

\begin{lemmaA1}\label{lem_A1}
Let $\Gamma$ be a Brenier potential as described in Theorem~\ref{thm_brenier}, where $\|\tv\|^2$ is integrable.
For each $y \in \R^n$, the random variable $\Gamma(Z_T)$ and random vector $\nabla \Gamma(Z_T)$ are integrable.  Furthermore, for each $y\in \R^n$ and each $t \in (0,T)$, the random variable $\Gamma(t,y+Z_T-Z_t)$ and random vector $\nabla \Gamma(t,y+Z_T-Z_t)$ are integrable.  
\end{lemmaA1} 

%
%

\begin{lemmaA2}\label{lem_A2}
Let $L$ be an absolutely continuous distribution function on $\R^n$ and $F$ a second distribution function on $\R^n$.  Set $\mu = G \otimes F$ and assume that $L$ and $F$ have finite second moments.  Let $\Gamma$ be the Brenier potential such that $\nabla \Gamma$ transports $L$ to $F$.
Then, there exists a function $f:\R^{2n}\to \R^n$ such that
 $\mu(\{(u,v) \mid \nabla \Gamma(f(u,v))=v\}) = 1$, and
 $\mu(\{(u,v) \mid f(u,v) \le a\}) = L(a)$ for all $a \in \R^n$.
\end{lemmaA2} 

%
%

\begin{proof}[Proof of Theorem~\ref{thm_brenier}]
The first statement can be found in \citet[Proposition 3.1]{Brenier:1991:PolarFactorizationMonotone}, with the exception that Brenier states that $\Gamma$ is unique up to an additive constant rather than specifying $\mye[\Gamma(Z_T)] = 0$.  Clearly, we can identify the constant uniquely with the requirement $\mye[\Gamma(Z_T)] = 0$ provided $\Gamma(Z_T)$ is integrable.  The second statement is also given in  
\citet[][Proposition 3.1]{Brenier:1991:PolarFactorizationMonotone}.  The integrability of $\Gamma$ and the last statement of the theorem are consequences of Lemmas~A.1 and~A.2. 
\end{proof}

%
%

\begin{proof}[Proof of Lemma~\ref{lem_dGamma}]
See the online appendix.
\end{proof}

%
%

\begin{proof}[Proof of Theorem~\ref{thm:riskneutral}]
Fix a trading strategy $X$ of the informed trader.  Because $X$ is a continuous semimartingale, we have $X=A+L$ where $A$ is a continuous finite-variation process, and $L$ is a continuous local martingale.  Moreover, $\D L_t = C_t \,\D Z_t$ for some matrix-valued process $C$.  Thus,
$\D Y_t = \D A_t + (C_t+I)\,\D Z_t$.
By It\^o's formula, substituting $\nabla \Gamma = H$, we have
$$\D \Gamma (t,Y_t) = \frac{\partial \Gamma (t,Y_t)}{\partial t}\,\D t + H(t,Y_t)'\,\D Y_t + \frac{1}{2}\tr \big( (C_t+I)\Sigma(C_t+I)'\nabla^2 \Gamma(t,Y_t)\big)\,\D t\,.
$$
Because $\Gamma$ is defined via the expectation \eqref{Gammaty}
and is smooth thanks to Lemma~\ref{lem_dGamma}, it solves the heat equation 
\begin{align}\label{pde:gamma}
\frac{\partial \Gamma}{\partial t}+\frac{1}{2}\tr\left(\Sigma \nabla \Gamma^2\right)=0\,.
\end{align}
Furthermore, a direct computation yields
$$\D \sum_{i=1}^n \langle  X^i, H^i(\cdot,Y_\cdot)\rangle_t={\tr }\big(C_t\Sigma(I+C_t)'{\nabla^2\Gamma(t,Y_t)}\big)\,\D t\,.$$
Hence,
\begin{multline*}
   \frac{1}{2}\tr \big( (C+I)\Sigma(C+I)'\nabla^2 \Gamma\big)\,\D t = \tr (C\Sigma\nabla^2\Gamma)\,\D t + \frac{1}{2} \tr (C\Sigma C' \nabla^2\Gamma)\,\D t + \frac{1}{2} \tr (\Sigma \nabla^2\Gamma) \, \D t\\
   = \D \sum_{i=1}^n \langle  X^i, H^i(\cdot,Y_\cdot)\rangle_t - \frac{1}{2} \tr (C\Sigma C' \nabla^2\Gamma)\,\D t - \frac{\partial \Gamma}{\partial t}\,\D t
\end{multline*}
Making this substitution, we obtain
$$\D \Gamma (t,Y_t) = H(t,Y_t)'\,\D Y_t + \D \sum_{i=1}^n \langle  X^i, H^i(\cdot,Y_\cdot)\rangle_t - \frac{1}{2} \tr \big(C_t\Sigma C_t' \nabla^2\Gamma(t,Y_t)\big)\,\D t\,.
$$
Integrating and rearranging and using the fact that $\Gamma(0,0)=0$ yields
\begin{multline}\label{app2:profit}
   \int_0^T (\tv-H(t,y_t))'\,\D Y_t - \sum_{i=1}^n \langle  X^i, H^i(\cdot,Y_\cdot)\rangle_T \\ = \tv'Y_T - \Gamma(Y_T) - \frac{1}{2}\tr\left(\int_0^T C_t\Sigma C_t' \nabla^2\Gamma(t,Y_t)\,\D t\right)\,.
\end{multline}
Under the restriction \eqref{restriction} for the strategy $X$, and using the independence and integrability of $\tv$ and $Z$, 
$$\mye\int_0^T (\tv - H(t,Y_t))'\,\D Z_t = 0\,.$$
Therefore, taking expectations in \eqref{app2:profit} yields
\begin{multline}\label{app2:expprofit}
   \mye\left[\int_0^T (\tv-H(t,Y_t))'\,\D X_t - \sum_{i=1}^n \langle  X^i, H^i(\cdot,Y_\cdot)\rangle_T \right]\\ = \mye\left[\tv'Y_T - \Gamma(Y_T) - \frac{1}{2}\tr\left(\int_0^T C_t\Sigma C_t' \nabla^2\Gamma(t,Y_t)\,\D t\right)\right]\,.
\end{multline}
The left-hand side is the definition of the informed trader's expected profit.  On the right-hand side, the trace is nonnegative due to the convexity of $\Gamma$.  Therefore, the expected profit of the informed trader is bounded above by $\mye[\tv'Y_T - \Gamma(Y_T)]$ and the bound is achieved by any finite-variation strategy such that, almost surely, 
$Y_T\in \argmax_{y\in \R^n}{\tv'y-\Gamma(y)}$, equivalently, $\nabla \Gamma (Y_T)=\tv$.  The strategy \eqref{brownbridge} implies
$$\D Y_t=\frac{\tzeta-Y_t}{T-t}\,\D t+\D Z_t\,,$$
which implies $Y_T=\tzeta$; hence $\nabla \Gamma(Y_T)=\tv$. The strategy achieves the maximum expected profit of
$$\mye[\sup \; \{ \tv'y - \Gamma(y) \mid y \in \R^n\}] = \mye[\Gamma^*(\tv)]\,.$$
To show that it is optimal, it remains only to show that it satisfies the restriction \eqref{restriction}.  First, we turn to the other equilibrium condition. 

Given the fact that the unconditional distribution of $\tilde \zeta$ is $G$, the strategy \eqref{brownbridge} implies that $Y$ is $(0,\Sigma)$-Brownian motion on its own filtration. Thus, conditionally on $\cF^Y_t$, $\tilde \zeta=Y_T$ has the normal $(Y_t,(T-t)\Sigma)$ distribution.  Consequently, 
$$H(t,Y_t)=\mye[\nabla \Gamma(Y_T)|\cF_t^Y]  = \mye[\tilde v|\cF_t^Y]\,.$$
This establishes the equilibrium condition \eqref{eq1}.

The quadratic variation of the process $\int_0^\cdot H(t,Y_t)'\,\D Z_t$ at $T$ is bounded above by a constant times $\int_0^T \| H(t,Y_T) \|^2\,\D t$.  Since the condition \eqref{eq1} holds and $\tv$ is square-integrable, $$\mye\int_0^T \| H(t,Y_T) \|^2\,\D t = \mye \int_0^T \parallel \mye[\tilde v|\cF_t^Y]\parallel^2 \,\D t < \infty\,.$$ 
Thus, $\int_0^\cdot H(t,Y_t)'\,\D Z_t$ is a square integrable martingale and the restriction \eqref{restriction} holds. This completes the proof of optimality of the informed trading strategy, so both equilibrium conditions hold.

To verify the formula \eqref{valuefunction} for the value function, note that, by following the same reasoning as before but starting the integration of $\D \Gamma$ at $t$, we obtain 
\begin{multline*}
   \int_t^T (\tv-H(t,y_t))'\,\D Y_t - \int_t^T \D \sum_{i=1}^n \langle  X^i, H^i(\cdot,Y_\cdot)\rangle_t \\ = \Gamma(t,Y_t) + \tv'(Y_T-Y_t) - \Gamma(Y_T) - \frac{1}{2}\tr\left(\int_t^T C_t\Sigma C_t' \nabla^2\Gamma(t,Y_t)\,\D t\right)\,.
\end{multline*}
Taking expectations as before, we see that the conditional expected profit is bounded above by 
$$\Gamma(t,Y_t) - \tv'Y_t + \mye \left[\tv'Y_T - \Gamma(Y_T) \mid \cF^Y_t\right]\,.$$
Substituting 
$\sup\, \{ \tv'y - \Gamma(y) \mid y \in \R^n \} = \Gamma^*(\tv)$
yields the claim.

The definition \eqref{price1} of $H$ implies that the heat equation \eqref{heateqn} holds for each element $H_i$.   Combining this with It\^o's formula and the fact that $H(t,y) = \nabla \Gamma (t,y)$, we obtain
$\D P_t = \Lambda_t\,\D Y_t$ where $\Lambda_t = \nabla^2\Gamma(t,Y_t)$.  Thus, $\Lambda_t$ is symmetric, and the convexity of $\Gamma(t,y)$ implies that it is positive semidefinite. 
\end{proof}

%
%

\begin{proof}[Proof of Corollary~\ref{cor21}]
When $X$ has finite variation, \eqref{app2:profit} implies
$$\int_0^T (\tv-P_t)'\,\D Y_t = \tv'Y_T - \Gamma(Y_T)\,.$$
Substituting this into \eqref{dealerprofits} yields the first formula for dealer profits.  Optimization by the informed trader and the definition of the convex conjugate then yields the second formula.
\end{proof}

%
%

\begin{proof}[Proof of Lemma~\ref{gmarkov}]
See the online appendix.
\end{proof}

%
%

\begin{proof}[Proof of Lemma~\ref{lem_mix}]
This follows directly from Lemma A.2.
\end{proof}

%
%

\begin{proof}[Proof of Theorem~\ref{thm:riskaverse}]
{\it Step 1: The physical mean of $\tv$.}
From the definitions, we have
$$\mye^\P\big[|\tv|\big] = \mye^{\hat \nu}\big[|\nabla \Gamma (Z_T)|\big] = \mye^\nu \big[\xi_T |\nabla \Gamma (Z_T)|\big]\,,$$
which is finite because $\xi_T$ is $\nu$--square integrable by assumption, and because $\nabla \Gamma$ transports $G$ to $F$ which has finite second moments by assumption.

{\it Step 2: Filtering problem.}
If the informed trader uses the strategy \eqref{newtheta}, then $Y$ satisfies \eqref{eqy}. 
By assumption, this equation admits a unique strong solution on $[0,T]$. 
Thanks to \citet[Proposition 1]{yamada1971uniqueness}, this SDE also admits a unique martingale solution on $[0,T]$. Additionally, the domain of the infinitesimal generator of $(\tilde \zeta, Y)$ is the set of twice continuously differentiable functions which is dense in the set of continuous functions. Denoting $\hat p(t,a)$ the density of $\tilde \zeta$ given $\cF^Y_t$, \citet[Proposition 2.2, Theorem 3.3]{kurtz1988unique} leads to the uniqueness of martingale solutions for the filtering equation on $[0,T]$ which is
\begin{align*}
    \frac{\D \hat p(t,a)}{\hat p(t,a)}=&\left( \nabla_z  \log h(t,Y_t,a)-\int\nabla_z  \log h(t,Y_t,\tilde a)\hat p(t,\tilde a)\D \tilde a\right)'\\
    &\times \left(\D Y_t-\Sigma\gamma(t,Y_t)\D t-\Sigma \int\nabla_z  \log h(t,Y_t,\tilde a)\hat p(t,\tilde a)\D \tilde a\D t\right).
\end{align*}
Due to the choice of $\tilde \zeta$, the initial condition of this equation is $\hat p(0,a)=h(0,0,a)$.
Under $\hat \nu$, 
$\D Z_t-\Sigma\gamma(t,Z_t)\D t$
defines a martingale on $[0,T)$.
Given that $h$ is the $\hat \nu$ transition densities of $Z$, $h(t,Z_t,a)$
is a $\hat \nu$ martingale. 
Due to our assumption, the measures $h(t,Z_t,\cdot)$ satisfies
$$\D h(t,Z_t,a)=\nabla_z h(t,Z_t,a)'(\D Z_t-\Sigma\gamma(t,Z_t)\D t)$$
with initial condition $ h(0,0,a)=\hat p(0,a)$.
Thus, we have 
$$\D h(t,Y_t,a)=\nabla_z h(t,Y_t,a)'(\D Y_t-\Sigma \gamma(t,Y_t)\D t)$$
Additionally, given the integrability assumption on $\nabla_z h$, we have
$$0=\nabla_z 1=\nabla_z\int h(t,z,a)\D a =\int\nabla_zh(t,z,a)\D a =\int\nabla_z(\log h)(t,z,a) h(t,z,a)\D a .$$
Injecting these identities to the filtering equation, we can now conclude that $h(t,Y_t,a)$ is the unique solution to the filtering equation on $[0,T]$ and is therefore the density of $\tilde \zeta$ conditional on $\cF^Y_t.$
Finally, integrating against this density, we conclude that on its own filtration, $Y$ satisfies
\begin{align}
    \D Y_t&=\int \Sigma\left(\nabla_z (\log h)(t,Y_t,a)+ \gamma(t,Y_t)\right)h(t,Y_t,a)\D a \D t+\D\hat Y_t\notag\\
    &=\Sigma \gamma(t,Y_t)\D t+\D\hat Y_t\label{eq:YunderP}
\end{align}
where $\hat Y_t$ is a $(\P,\cF^Y)$ Brownian motion with covariance matrix $\Sigma$ on $[0,T]$. 
Therefore, the $(\P,\cF^Y)$ distribution of $Y$ is the $\hat \nu$-distribution of $Z$. 

We fix $\tilde \zeta$ belonging to the (full measure) set of differentiability of $\nabla \Gamma$. Thanks to the differentiability of $\nabla \Gamma$, and the integrability of $\tilde \xi$ we have that 
$$\phi(t,z)\to \a\beta'\nabla \Gamma(\tilde \zeta)-\a\Gamma^*(\nabla\Gamma(\tilde \zeta))\mbox{ as }(t,z)\to (T,\tilde \zeta)$$
and 
$$\psi(t,z,\tilde \zeta)\to \a\beta'\nabla \Gamma(\tilde \zeta)-\a\Gamma^*(\nabla\Gamma(\tilde \zeta))\mbox{ as }(t,z)\to (T,\tilde \zeta).$$
Thus, 
\begin{align}\label{eq:hh}
    \frac{h(t,z,\tilde \zeta)}{k(t,z,\tilde \zeta)}\to 1 \mbox{ as }(t,z)\to (T,\tilde \zeta).
\end{align}
We define on $C[0,T]^n$ a measure $\hat \nu^{\tilde \zeta}$ by
$$\left.\frac{\D \hat \nu^{\tilde \zeta}}{\D \hat \nu}\right|_{\cF^Z_t}=h(t,Z_t,\tilde \zeta).$$
Thanks to strong uniqueness of solutions of \eqref{eqy} and \citet[Expression (39.4)]{rogers2000diffusions}, $\hat \nu^{\tilde \zeta}$ is the $(\P,\cF^{\tilde \zeta,Z})$ distribution of $Y.$
Thanks to the convergence \eqref{eq:hh}, we have that $Y_t\to \tilde \zeta$ a.s. as $t\to T. $

{\it Step 3: Defining the risk neutral probability measure.}  Set
$$M_t = \exp\left(-\frac{1}{2}\int_0^t \gamma(s,Y_s)'\Sigma\gamma(t,Y_s)\,
\D s - \int_0^t \gamma(s,Y_s)'\,\D \hat{Y}_s \right)\,$$
where $\hat Y$ is the $(\P,\cF^Y)$ Brownian motion with covariance matrix $\Sigma$ defined by the equality \eqref{eq:YunderP}. $M$ is clearly a $(\P,\cF^Y)$-local martingale.  The pricing rule is $\nabla\Gamma(t,Y_t)$. Thus, the realized wealth $\tw$ of the market makers is as stated in \eqref{wealth3}.  
We want to show that $M$ is a martingale and 
\begin{equation}\label{MT=margutility}
 M_T = \frac{\E^{-\alpha \tw}}{\mye[\E^{-\alpha \tw}]}\,.
\end{equation}
We have
\begin{align*}
 \log M_T &=\log M_t -\frac{1}{2}\int_t^T \gamma(s,Y_s)'\Sigma\gamma(s,Y_s)\,
\D s - \int_t^T \gamma(s,Y_s)'\,\D \hat{Y}_s\\
&= \log M_t+\frac{1}{2}\int_t^T \gamma(s,Y_s)'\Sigma\gamma(s,Y_s)\,
\D s - \int_t^T \gamma(s,Y_s)'\,\D Y_s\,
\end{align*}
and
\begin{align*}
    \mye\left[\frac{M_T}{M_t}|\cF^Y_t\right]&=\mye\left[\E^{\frac{1}{2}\int_t^T \gamma(s,Y_s)'\Sigma\gamma(s,Y_s)\,
\D s - \int_t^T \gamma(s,Y_s)'\,\D Y_s}|\cF^Y_t\right]\\
&=\mye\left[\E^{\frac{1}{2}\int_t^T \gamma(s,Y_s)'\Sigma\gamma(s,Y_s)\,
\D s - \int_t^T \gamma(s,Y_s)'\,\D Y_s}|Y_t\right]
\end{align*}
Since the $(\P,\cF^Y)$ distribution of $Y$ is the $\hat \nu$-distribution of $Z$, to show that $M$ is a martingale, it is sufficient to show that 
$$\mye^{\hat \nu}\left[\E^{\frac{1}{2}\int_t^T \gamma(s,Z_s)'\Sigma\gamma(s,Z_s)\,
\D s - \int_t^T \gamma(s,Z_s)'\,\D Z_s}|Z_t\right]=1.
$$
Recalling $\rho$ defined at \eqref{eq:rho}, we need to show that
$\mye^{ \hat \nu}\left[\E^{-\rho_T+\rho_t}|\cF^Z_t\right]=1.$
Since $\D\hat \nu/\D \nu=\tilde \xi=\E^{\rho_T}$,
we have $$\mye^{ \hat \nu}\left[\E^{-\rho_T+\rho_t}|\cF^Z_t\right]=\frac{\mye^{  \nu}\left[\E^{\rho_T-\rho_T+\rho_t}|\cF^Z_t\right]}{\mye^{  \nu}\left[\E^{\rho_T}|\cF^Z_t\right]}=1\,,$$
and $M$ is a martingale. 
We can define $\Q$ by $\D \Q / \D \P=M_T$. Thanks to the martingality of $M$, by a direct computation the $(\Q,\cF^Y)$-distribution of $Y$ is a $(0,\Sigma)$-Brownian motion, which is the $\nu$-distribution of $Z$. 
Thus, the backward equations \eqref{eq:rho} and 
\begin{align*}
        \tilde \rho_t=\log (\tilde \xi(Y_\cdot))+\frac{1}{2}\int_t^T\tilde \gamma_s'\Sigma \tilde \gamma_s \D s-\int_t^T \tilde \gamma_s'\D Y_s,
\end{align*}
have the same data. 
Thanks to the uniqueness result in \citet[Proposition 5.2]{barrieu2013monotone}, we have that 
$\tilde \rho_t=\rho_t (Y_\cdot)$ and $\tilde \gamma_s=\gamma(s,Y_s)$. Thus, $M_T=\E^{\tilde \rho_T}$ and 
\eqref{MT=margutility} holds and $\Q$ is indeed the risk neutral measure. 

{\it Step 4: $\Q$ dynamics of the price process and equilibrium.}
Because $\nabla \Gamma$ transports the normal $(0,T\Sigma)$ distribution to ${F}$, it follows that the $(\Q,\cF^Y)$--distribution of $\nabla\Gamma(Y_T)$ is ${F}$.  Furthermore, the identity $Y_T=\tilde \zeta$ implies $\nabla\Gamma(Y_T) = \tv$, so the $\Q$--distribution of $\tv$ is ${F}$, as claimed.  Also, the facts that $Y$ is a $(\Q,\cF^Y)$--Brownian motion, $\nabla \Gamma(t,y)$ solves the heat equation, and $\nabla \Gamma (Y_T) = \tv$ imply that the pricing rule \eqref{price1} satisfies the equilibrium condition \eqref{eq2}.  The formula for $\D P$ follows from $P_t = \nabla \Gamma(t,Y_t)$ and the fact that $\nabla \Gamma$ satisfies the heat equation. 
It remains only to show that $\theta$ is an optimal strategy.  The proof of this is the same as for Theorem~\ref{thm:riskneutral}, up to showing that the strategy satisfies the restriction \eqref{restriction}.  First, we note that
$$\mye^\P \int_0^T \tv'\,\D Z_s = \mye^P[\tv' Z_T] = \mye^P[\tv]'\mye^P[Z_T]=0$$
due to the independence of $\tv$ and $Z$, the fact that $\mye^\P[Z_T]=0$, and the fact that $\tv$ has a finite $\P$-mean. 
Set $L_t = \int_0^t H(s,Y_s)'\,\D Z_s$.  It remains only to show that $\mye^{\P}[L_T]=0$, which will follow from showing that $L$ is a $\P$--martingale.  It is a local martingale under $\P$, so, by the dominated convergence theorem, it suffices to show that $\sup_{0 \leq t \leq T} |L_t|$ is $\P$--integrable.   
By the Burkholder-Davis-Gundy inequality, the fact that $\D P/\D Q = M_T^{-1}$, and the fact that the $\Q$--distribution of $Y$ is the same as the $\nu$--distribution of $Z$, we have
\begin{align*}
    \mye^{\P} \left[\sup_{0\leq t\leq T}|L_t|\right]&\leq C \mye^{\P} \left[\left(\int_0^T \| H(s,Y_s)\|^2\, \D s\right)^{1/2}\right]\\
    &= C \mye^\Q \left[M_T^{-1}\left(\int_0^T \|H(s,Y_s)\|^2\, \D s\right)^{1/2}\right]\\
    &= C \mye^\nu \left[\xi_T\left(\int_0^T \|H(s,Z_s)\|^2\, \D s\right)^{1/2}\right]
\end{align*}
for a constant $C$.  The Cauchy–Schwarz inequality applied to the last expression gives us
$$\mye^{\P} \left[\sup_{0\leq t\leq T}|L_t|\right] \leq C \mye^\nu \left[\tilde \xi^2\right]^{1/2}\mye^\nu \left[\int_0^T \|H(s,Z_s)\|^2\, \D s\right]^{1/2}\,.$$
$ \mye^\nu \left[\tilde \xi^2\right]$ is finite by assumption. Additionally, we have
$$\mye^\nu\int_0^T \|H(s,Z_s)\|^2\, \D s = \mye^\Q \int_0^T \|H(s,Y_s)\|^2\, \D s = \mye^\Q \int_0^T \left\|\mye^\Q[\tv \mid \cF^Y_s]\right\|^2\, \D s\,,$$
which is finite by Jensen's inequality and because $\tv$ has finite second moments under the risk-neutral distribution.

\end{proof}

%
%

\begin{proof}[Proof of Theorem~\ref{thm:riskaverse2}]
{\it Case (i):}
If the support of the distribution is bounded, then using the integration by parts formula of Malliavin calculus we obtain
\begin{align}\label{boundtr2}
    \tr\left(\Sigma \nabla^2 \Gamma(t,z)\right)\leq \mye^\nu\left[\frac{|Z_T-Z_t|}{T-t}|\nabla \Gamma(t,z+Z_T-Z_t)|\right]\leq \frac{C}{\sqrt{T-t}}
\end{align}
for some constant $C$ depending only on the bound of the price. Thus, the integral 
$$\int_0^T\tr\left(\Sigma \nabla^2 \Gamma(t,Z_t)\right)\D t$$
is uniformly bounded. 
Since $-\a \Gamma^*(\nabla \Gamma)$ is bounded from above, we have the integrability of $\tilde \xi^2$ and $\tilde \chi^2$.
We can also show that $\Gamma$ is Lipschitz continuous. Thus, $-\a  \Gamma^*(\nabla\Gamma(\cdot))$ is also bounded from below by a function with linear growth. Thus, thanks to the existence of exponential moments of the Gaussian distribution $\tilde \xi^{-1}$ is also integrable. 

For fixed $t,z,b$, using the distribution of the conditioned Brownian motion, we can write 
\begin{align*}
    &\mye^\nu\left[\left.\exp\left\{ \alpha \tr\left(\Sigma \int_t^T \nabla^2 \Gamma(u,Z_u)\,\D u  \right)\right\}\,\right|\, Z_t=z, Z_T=b\right]\\
   &=\mye^\nu\left[\exp\left\{ \alpha \tr\left(\Sigma \int_t^T \nabla^2 \Gamma\left(u,\frac{T-u}{T-t}z+\frac{u-t}{T-t}(b- Z_T+ Z_t)+ Z_u+ Z_t\right)\,\D u  \right)\right\}\right].
\end{align*}
Thanks to the boundedness of $\int_0^T\tr\left(\Sigma \nabla^2 \Gamma(t,Z_t)\right)\D t$, we can differentiate this expression under the integral in $z$ to show that the partial derivatives of $\psi$ admit the following representations
\begin{align*}
\frac{\partial \E^{\psi(t,z,b)} }{\partial z_i} =& \E^{-\alpha(b-\beta)'\nabla\Gamma(b)+\alpha \Gamma(b)} \mye^\nu\left[\exp\left\{ \alpha \tr\left(\Sigma \int_t^T \nabla^2 \Gamma(u,Z_u)\,\D u  \right)\right\}\right.\\ &\times\left.\left.\int_t^T\frac{T-u}{T-t}\Gamma_i(u,Z_u)\D u \,\right|\, Z_t=z, Z_T=b\right]\\
\frac{\partial^2\E^{\psi(t,z,b)}}{\partial z_i\partial z_j}  =&\E^{-\alpha(b-\beta)'\nabla\Gamma(b)+\alpha \Gamma(b)} \mye^\nu\left[ \exp\left\{ \alpha \tr\left(\Sigma \int_t^T \nabla^2 \Gamma(u,Z_u)\,\D u  \right)\right\}\right.\\
&\times\left.\left.\left\{\Psi_{i,j}+\int_t^T\frac{(T-u)^2}{(T-t)^2}\Gamma_{i,j}(u,Z_u)\D u \right\}\,\right|\, Z_t=z, Z_T=b\right]. 
\end{align*}
where $\Psi_{i,j}=\int_t^T\frac{T-u}{T-t}\Gamma_i(u,Z_u)\D u\int_t^T\frac{T-u}{T-t}\Gamma_j(u,Z_u)\D u$ and $\Gamma_i,\Gamma_j$ depends on $\nabla^3 \Gamma$ and $\Gamma_{i,j}$ depends on $\nabla^4 \Gamma$. Similarly to \eqref{boundtr2}, each of these functions admits the bounds 
$$|\nabla^3 \Gamma(u,z)|\leq \frac{C}{T-u}\mbox{ and }|\nabla^4 \Gamma(u,z)|\leq \frac{C}{(T-u)^{3/2}}$$
and we easily obtain that $\psi$ is twice continuously differentiable in $z$ on $[0,T)$ with bounded first and second order derivative on $[0,T-\e]$ for all $\e>0$. By integrating $\psi$ in $b$ this is also true for $\phi$. 
We can also obtain the smoothness of these functions in $t$. 
By Ito's formula, we can directly establish that 
$$\log \xi_t =\phi(t,Z_t)+\int_0^t \a \tr\left(\Sigma \nabla^2\Gamma(u,Z_u)\right)\D u$$ and 
$\gamma(t,z)=\nabla \phi(t,z)$ for $t\in [0,T)$. 

The expression \eqref{eq:h} is a direct consequence of the $\nu$ distribution of $Z$ and the definition $\D \hat \nu/\D \nu=\tilde \xi$.
Given \eqref{eq:h}, we easily obtain that $h$ is twice continuously differentiable in $z$ and continuously differentiable in $t$ for $t<T$ and $\nabla_z \psi$ has linear growth in $z$ for $t<T-\epsilon$ for all $\epsilon>0$. Thus, \eqref{eqy} admits a unique strong solution on $[0,T)$. The uniqueness of the solution on $[0,T]$ is a consequence of the fact that 
$(\P,\cF^Y)$ distribution of $Y$ is the $\hat \nu$ distribution of $Z$. Thus, $Y$ is continuous and therefore can be extended to $[0,T]$ uniquely. 
The expression \eqref{informed2} can now be deduced from \eqref{eqy}.

{\it Case (ii):}  The inequality $\nabla U(v)\geq \kappa I_n$ implies that there exists a constant $C>0$ so that 
$$\E^{-U(v)}\leq C\E^{-\frac{\kappa\|v\|^2}{4}}.$$ Thus, $F$ has finite second moments. The version of Caffarelli's contraction theorem in \citet[Corollary 6.1]{kolesnikov2013sobolev} implies that $\nabla^2 \Gamma(y)$ (and therefore $\nabla^2 \Gamma(t,y)$) is uniformly bounded. This point also means that $\beta'\nabla \Gamma(z)$ has at most linear growth in $z.$ Thus, thanks to the finiteness of exponential moments of the absolute value of the Gaussian distribution, the conclusion of the previous case holds in this case. 

\end{proof}

%
%

\begin{proof}[Proof of Theorem~\ref{thm:normal}] 

Adopt the definitions at the end of Section~\ref{s:normal}.  
Let $F$ denote the normal $(m,S)$ distribution function.  Then, condition (ii) of Theorem~\ref{thm:riskaverse2} holds, so the assumptions of Theorem~\ref{thm:riskaverse} hold.  
The Brenier potential for transporting $G$ to $F$ is $\Gamma(y) = a+ (1/2)y'\Lambda y+ m'y$, where the constant $a$ is determined by normalization, and the transport map is $\nabla \Gamma(y) = m + \Lambda y$.  To see that this transports $G$ to $F$, it suffices to observe that $T\Lambda\Sigma\Lambda=S$.  Applying the definition \eqref{price1} to this transport map implies directly that the equilibrium pricing rule is $P_t = m + \Lambda Y_t$.  To compute the constant $a$ in the potential, we observe that
\begin{equation}\label{normalconstant}
    \mye^G\left[\frac{1}{2}\tilde z'\Lambda \tilde z + m'\tilde z\right] = \frac{1}{2}\tr(T\Sigma\Lambda) = \frac{1}{2}\tr(D^{1/2})\,.
\end{equation}
Thus, $\Gamma(y) = (1/2)y'\Lambda y+ m'y - (1/2)\tr(D^{1/2})$.

We can explicitly compute $\psi$ as
\begin{align*}
    \psi(t,z,a) = \alpha\beta'\nabla\Gamma(a)-\alpha \Gamma^*(\nabla \Gamma(a))+\a(T-t)\tr\left(\Sigma \Lambda  \right)\,.
\end{align*}
This does not depend on $z$, so the formula \eqref{informed2} implies \eqref{informedgaussian}. 
Define
$$C_t = \frac{1}{\alpha T}\Sigma^{-1/2}V'D_tV\Sigma^{-1/2}\,.$$
and
\begin{align*}
    E_t&= \b' m+\left(\frac{T}{2}-t\right)\tr\left(\Sigma \Lambda\right)+\int_t^T  \left\{\frac{\a}{2}\b'C_s'\Sigma C_s\b-\frac{1 }{2}\tr\left(\Sigma C_s\right) \right\}\,\D s\,.
\end{align*}
Then, we can compute that 
$$\phi(t,z)=-\frac{\a}{2}z'C_t z+\a \b'C_tz+\a E_t\,.$$ 
Therefore,
\begin{equation}\label{newdY3}
\D Y_t= -\a \Sigma C_t (Y_t-\b)\, \D t +\D \hat Y_t\,,
\end{equation}
where $\hat Y$ is a $(0,\Sigma)$--Brownian motion on $(\P,\cF^Y)$.  This verifies \eqref{corY}.  Furthermore, \eqref{newdY3} implies that
$$\D V\Sigma^{-1/2}Y_t = -\frac{1}{T}D_t (V\Sigma^{-1/2}Y_t - V\Sigma^{-1/2}\beta)\,\D t + V\Sigma^{-1/2}\,\D \hat Y_t\,.
$$
Note that, due to the orthogonality of $V$, $V\Sigma^{-1/2}\,\D \hat Y_t$ is a  $(\P,\cF^Y)$-standard Brownian motion. Thus, each component $y_{it}$ of the vector $V\Sigma^{-1/2}(Y_t-\beta)$ solves a stochastic differential equation independent from the others:
$$\D y_{it} = -\frac{ d_{it}}{T} y_{it}\,\D t + \D w_{it}\,,
$$
where the $w_{i}$ are independent standard Brownian motions.  The solutions of these equations are
$$y_{it} = \frac{T + \alpha(T-t)\sqrt{d_i}}{T + \alpha T \sqrt{d_i} }\left(y_{i0} + \int_0^t \frac{T + \alpha T \sqrt{d_i}}{T + \alpha(T-s)\sqrt{d_i}}\,\D w_{is} \right)\,.
$$
Let $\tilde D_t$ denote the diagonal matrix with $i$th element equal to
$$\frac{T + \alpha(T-t)\sqrt{d_i}}{T + \alpha T \sqrt{d_i} }\,.$$
It follows that $V\Sigma^{-1/2}(Y_t-\beta)$ is normally distributed with mean  equal to 
$$\tilde D_t V\Sigma^{-1/2}(Y_0-\beta) = - \tilde D_t V\Sigma^{-1/2}\beta\,.$$
Hence, $Y_t$ is normally distributed with mean  equal to
\begin{equation*}
    \beta -\Sigma^{1/2}V'\tilde D_t V\Sigma^{-1/2}\beta = \Sigma^{1/2}V'(I-\tilde D_t)V\Sigma^{-1/2}\beta\,.
\end{equation*}
Therefore, $P_t = m + \Lambda Y_t$ is normally distributed with mean equal to
\begin{equation}\label{meanPt}
    m + \frac{1}{T}\Sigma^{-1/2}V'D^{1/2}(I-\tilde D_t)V\Sigma^{-1/2}\beta
\end{equation}
From the definition of the $d_i$ and $\tilde D_t$, we obtain 
$D^{1/2}(I-\tilde D_T) = \alpha \hat D$.  Substituting $m=\hat m - \alpha \hat S \beta$ in \eqref{meanPt}, we see that the mean of $\tv = P_T $ is
$$\hat m - \alpha \hat S \beta + \frac{\alpha}{T}\Sigma^{-1/2}V'\hat D V \Sigma^{-1/2}\beta = \hat m\,.$$  
To verify that the physical distribution $\hat F$ defined in Theorem~\ref{thm:riskaverse} is normal $(\hat m, \hat S)$, it remains to verify that the covariance matrix of $P_T$ is $\hat S$.  

Thanks to the independence of $w^i$ and $w^j$ for $i\neq j$, the covariance matrix of $(y^1_t,\ldots, y^n_t)$
is diagonal with entries 
$$\int_0^t \left[\frac{T+\alpha(T-t)\sqrt{d_i}}{T+\alpha (T-s)\sqrt{d_i}}\right]^2\,\D s = t \left[\frac{T+\alpha(T-t)\sqrt{d_1}}{T+\alpha T \sqrt{d_i}}\right]\,.$$
At $t=T$, this equals $T/(1+\alpha \sqrt{d_i}) = T \hat{d_i}/d_i$.  Thus, the covariance matrix of $(y^1_T,\ldots, y^n_T)$ is $T \hat D D^{-1}$.  It follows that the covariance matrix of $Y_T$ is
$\Sigma^{1/2}V'\hat D D^{-1} V \Sigma^{1/2}$ and therefore that the covariance matrix of $P_T = m + \Lambda Y_T$ is 
$$T\Lambda \Sigma^{1/2}V'\hat D D^{-1} V \Sigma^{1/2} \Lambda = \frac{1}{T}\Sigma^{-1/2}V' \hat D V \Sigma^{-1/2} = \hat S\,.$$

Now, set $A_t = \alpha V'D_tV/T = \alpha \Sigma^{1/2}C_t \Sigma^{1/2}$. Then, $\alpha \Sigma C_t = \Sigma^{1/2}A_t\Sigma^{-1/2}$, so \eqref{newdY3} implies \eqref{corY}.  Furthermore, 
\begin{align*}
    \D P_t = \Lambda\,\D Y_t &= - \Lambda \Sigma^{1/2} A_{ t} \Sigma^{-1/2}(Y_t-\beta)\,\D t + \Lambda\,\D \hat Y_t\\
    &= - \Lambda \Sigma^{1/2} A_{ t} \Sigma^{-1/2}\Lambda^{-1}(P_t-m-\Lambda\beta)\,\D t + \Lambda\,\D \hat Y_t\\
    &= -\Sigma^{-1/2}A_t \Sigma^{1/2}(P_t-m-\Lambda\beta)\,\D t + \Lambda\,\D \hat Y_t\,,
\end{align*}
which verifies \eqref{corP}.
The monotonicity in $\alpha$ of $S$, $\Lambda$, and $A_t$ follows from the fact that the elements of the diagonal matrices $D$, $D^{1/2}$ and $\alpha D_t$ are increasing in $\alpha$.

The expected loss of noise traders is the sum of the covariances of the prices $P^i$ with the noise trades $Z^i$, which, as in \eqref{normalconstant}, is $\mye^G[\tilde z'\Lambda \tilde z] = \tr(T\Sigma\Lambda) = \tr(D^{1/2})$.   As in Theorem~\ref{thm:riskneutral}, the expected profit of the informed trader conditional on $\tv$ is $\Gamma^*(\tv)$.  Given the expression for $\Gamma$, we have 
\begin{align*}
   \Gamma^*(v)=\sup_{y\in \R^n} \{y'v-\Gamma(y)\} &=\frac{1}{2}(v-m)'\Lambda^{-1} (v-m)+\frac{1}{2}\tr(T\Sigma\Lambda) \\ 
   &=\frac{1}{2}(v-\hat m - \alpha \hat S \beta)'\Lambda^{-1} (v-\hat m - \alpha \hat S \beta)+\frac{1}{2}\tr(T\Sigma\Lambda)\,,
\end{align*}
which verifies \eqref{cor:expprofit}. The expectation of $\Gamma^*(\tv)$ under the physical distribution is
\begin{multline}\label{expprofitagain}
\frac{1}{2} \mye^{\hat F}[(\tv-\hat m)'\Lambda^{-1}(\tv - \hat m)] +
\frac{\alpha^2}{2}\beta'\hat S \Lambda^{-1} \hat S \beta + \frac{1}{2}\tr(T\Sigma\Lambda) \\= \frac{1}{2}\tr(\Lambda^{-1}\hat S) +
\frac{\alpha^2}{2}\beta'\hat S \Lambda^{-1} \hat S \beta + \frac{1}{2}\tr(T\Sigma\Lambda)\,.
\end{multline}
A direct calculation gives
$\Lambda^{-1}\hat S = \Sigma^{1/2}V'D^{-1/2}\hat D V \Sigma^{-1/2}$,
so $\tr(\Lambda^{-1}\hat S) = \tr(D^{-1/2}\hat D)$.  From \eqref{DDhat}, we have $D^{-1/2}\hat D = D^{1/2}-\alpha \hat D$, so \eqref{expprofitagain} equals
$$\frac{\alpha^2}{2}\beta'\hat S \Lambda^{-1} \hat S \beta + \tr(D^{1/2}) - \frac{\alpha}{2}\tr(\hat D)\,.$$
The formula \eqref{cor:expprofit2} follows from observing that $\tr(D^{1/2}) = \tr(T\Sigma\Lambda)$ and $\tr(\hat D) = \tr(T\Sigma \hat S)$. The formula for the market makers' expected gains/losses follows from the expected loss of the noise traders and the expected gain of the informed trader. 
The term $(\alpha^2/2)\beta'\hat S \Lambda^{-1} \hat S \beta$ in  \eqref{cor:expprofit2} is increasing in $\alpha$, because 
$\alpha^2 \Lambda^{-1} = \Sigma^{1/2}V'(\alpha^2D^{-1/2})V\Sigma^{1/2}$,
and the elements of the diagonal matrix $\alpha^2D^{-1/2}$ are
$$\frac{2\alpha}{ \hat d_i + \sqrt{ \hat d_i^2 + 4 \hat d_i/\a^2}}\,,$$
which are increasing in $\alpha$.
The remaining part of \eqref{cor:expprofit2} is $\tr(D^{1/2}-(\alpha/2)\hat D)$, which is also increasing in $\alpha$, because the elements of the diagonal matrix $D^{1/2}-(\alpha/2)\hat D$ are 
$\sqrt{\alpha^2 \hat d_i^2 + 4 \hat d_i}/2$,
which are increasing in $\alpha$.  
\end{proof}

\end{appendix}

\bibliographystyle{elsarticle-harv}
\bibliography{zotero-export2}

\newpage
\begin{appendix}
\setcounter{page}{1}
\section*{ONLINE APPENDIX}
The online appendix consists of three parts: proofs of lemmas, a description of numerical methods used for the options model, and an example of a single-asset model with risk-averse market makers that illustrates the role of the market makers' endowments.

\subsection{Proofs of Lemmas}

%
%

\begin{proof}[Proof of Lemma A.1]
By the definition of $\Gamma$, 
$\nabla \Gamma(Z_T)$ has the distribution of $\tv$.  Hence, $\nabla \Gamma(Z_T)$ is integrable.  By conditioning, it follows that, for each $t \in (0,T)$, $\nabla \Gamma(t,y+Z_T-Z_t)$ is integrable for Lebesgue almost all $y$.  We will show that this extends to all $y$ in a moment.  We turn now to the integrability of $\Gamma(Z_T)$.

By the convexity of $\Gamma$, we have
$$\Gamma(0) + \nabla \Gamma(0)'Z_T \le \Gamma(Z_T) \le \Gamma(0) + \nabla \Gamma(Z_T)'Z_T\,.$$
Because all moments of $Z_T$ exist and $\|\nabla \Gamma(Z_T)\|^2$ is integrable, the H\"{o}lder inequality implies that $\nabla \Gamma(Z_T)'Z_T$ is integrable.  Thus, $\Gamma(Z_T)$ is bounded above and below by integrable random variables and hence is itself integrable.  By conditioning, it follows that, for all $t>0$, $\Gamma(y+Z_T-Z_t)$ is integrable for Lebesgue almost all $y$.

We now want to show that $\nabla \Gamma(y+Z_T-Z_t)$ and $\Gamma(y+Z_T-Z_t)$ are integrable for all $y$.  Fix $t$, and let $k_y$ denote the normal $(y,(T-t)\Sigma)$ density function.  We need to show that
\begin{subequations}\label{toshow}
\begin{equation}\label{toshow1}
\int \left\| \nabla \Gamma(z) \right\| k_y(z)\,\D z < \infty
\end{equation}
and 
\begin{equation}\label{toshow2}
\int \left\| \Gamma(z) \right\| k_y(z)\,\D z < \infty
\end{equation}
\end{subequations}
for all $y$, knowing that these conditions hold for Lebesgue almost all $y$.  Fix an arbitrary $y^*$ and set
$B = \{y : (y-y^*)'\Sigma (y-y^*) \le 1\}$.
This is the unit ball centered at $y^*$ in the norm defined by $\Sigma$.
There exists $y$ in the ball $B$ for which \eqref{toshow} holds, and the ratio $k_{y^*}(z)/k_y(z)$ is bounded over $z\in B$, so \eqref{toshow} holds for $y^*$ when we restrict the integrals to the ball $B$.  Now, let $y_1, \ldots, y_n$ and $\epsilon<1$ be such that the $n$ $\epsilon$--balls 
$\left\{y : (y-y_i)'\Sigma (y-y_i) < \epsilon^2\right\}$
cover the boundary of $B$.  Because \eqref{toshow} holds for Lebesgue almost all $y$, we can choose each $y_i$ so that \eqref{toshow} holds for it.  For each $z$ not in the unit ball, there exists by the triangle inequality some $i$ such that $y_i$ is closer to $z$ than $y^*$ is to $z$ in the norm defined by $\Sigma$ [the triangle inequality applies as follows: let $\hat{y}$ be the point on the boundary of $B$ that is on the line segment between $z$ and $y^*$ and suppose $\hat{y}$ is in the $i$th $\epsilon$--ball; then, the distance between $z$ and $y_i$ is less than $\epsilon$ plus the distance between $z$ and $\hat{y}$, which is less than 1 plus the distance between $z$ and $\hat{y}$, which is the distance between $z$ and $y^*$].
Hence, for all $z \notin B$, 
$k_{y^*}(z) < \max\,\{ k_{y_i}(z) \mid i=1, \ldots, n\}$.
Because \eqref{toshow} holds for each $i$, this implies that \eqref{toshow} holds for $y^*$ when we restrict the integrals to the complement of $B$.  Thus, it holds for the integrals over $\R^n$.
\end{proof}

%
%

\begin{proof}[Proof of Lemma A.2]
From \citet[Theorem 5.10]{Villani:2009:OptimalTransportOld}, there exists $\hat \Phi\in \Pi(L,F)$, satisfying 
\begin{align*}
&\int v'y\hat \Phi(dy,dv)=\sup_{\Phi\in \Pi(L,F)} \int v'y\Phi(dy,dv)\\
&=\inf_{\psi\in L^1(L)}\int \psi(y)\,\D L(y)+\int \psi^*(v)\,\D F(v)=\int \Gamma(y) \, \D L(y)+\int \Gamma^*(v)\,\D F(v).
\end{align*}
There exists a disintegration of the measure $\hat \Phi$ on $F$, meaning there exists a family of probability measures $\hat \phi$ such that 
 $\hat \Phi(\D y,\D v)=\hat \phi(\D y,v) \D F(v)$, and
  $v\in \R^n\mapsto \hat \phi(A,v)$ is measurable for any Borel measurable subset $A$ of $\R^n$.
Thanks to \citet[Theorem 5.30]{Villani:2009:OptimalTransportOld}, we also know that $\hat \Phi$ is supported on 
$\{(y,\nabla \Gamma(y)):y\in \R^n\}$. Thus, we also have that for $F$ almost every $v\in \R^n$, the measure $\hat \phi(\D y,v)$ is supported on $(\nabla \Gamma)^{-1}(v)=\{y\in \R^n : \nabla \Gamma(y)=v\}$.

If $F$ is absolutely continuous, then $\nabla \Gamma$ is bijective,  and its inverse transports $F$ to $L$ 
\citep[][Proposition 3.1]{Brenier:1991:PolarFactorizationMonotone}. Therefore, we can take  $f(u,v)=(\nabla \Gamma)^{-1}( v)$.
If $\nabla \Gamma$ fails to be bijective, let $U$ be any random variable with distribution $G$. 
For each $v\in \R^n$, we use Brenier's theorem for the transport of $G$ to the probability measure $\hat \phi(\D y,v)$ to obtain the existence of a function $f_v:\R^n\mapsto \R^n$ so that $f_v(U)$ has distribution $\hat \phi(\D y,v)$. Set $f(u,v) = f_v(u)$.  Since $\hat \phi(\D y,v)$ is supported in $(\nabla \Gamma)^{-1}(v)$ we have that $\mu\left(\nabla \Gamma(f(u,v)=v\right)=1$. Moreover, for each Borel $A$, 
\begin{multline*}
G \otimes F (\{(u,v) \mid f(u,v) \in A\}) = \int G(f_v(U) \in A) \,\D F(v) = \int \hat \phi (A,v) \,\D F(v) \\ = \int \hat \Phi(A,\D v) = \int_A \D L\,.
\end{multline*}
The last equality follows from the fact that the first marginal of $\hat \Phi$ is $L$.
\end{proof}

%
%

\begin{proof}[Proof of Lemma~\ref{lem_dGamma}]
Let $z \mapsto k(t,y,z)$ be the normal $(y,(T-t)\Sigma)$ density function.  The definition of $\Gamma(t,y)$ is
$\Gamma(t,y) = \int \Gamma(z) k(t,y,z)\,\D z$.
Derivatives of $\Gamma(\cdot,\cdot)$ of all order exist and are equal to integrals of $\Gamma(z)$ multiplied by the corresponding derivatives of $k$.  See, for example, \citet[Problem 4.3.1]{Karatzas-Shreve}.  Thus, $\Gamma(t,y)$ has the smoothness claimed.
An equivalent definition of $\Gamma(t,y)$ is
$$\Gamma(t,y) = \int \Gamma(y+z)k(t,0,z)\,\D z\,.$$
We wish to show that we can pass differentiation with respect to $y$ through the integral in this formulation, producing
$$\nabla \Gamma (t,y) = \int \nabla \Gamma(y+z)k(t,0,z)\,\D z \equiv \mye[\nabla \Gamma(y+Z_T-Z_t)] \equiv H(t,y) \,.$$
This holds if, for each $y$, there exists $\epsilon>0$ such that the family of functions
$$\{z \mapsto \nabla \Gamma(y'+z)k(t,0,z) \mid \|y'-y\|<\epsilon\}$$
is uniformly integrable.  Noting that $k(t,0,z) = k(t,y',y'+z)$ and making the change of variables $u=y'+z$, we see that this is equivalent to the family of functions
$$\{u \mapsto \nabla \Gamma(u)k(t,y',u) \mid \|y-y'\|<\epsilon\}$$
being uniformly integrable.  This can be established following reasoning similar to that used in the proof of Lemma~A.1, first bounding $k(t,y',u)$ by a multiple of $k(t,y,u)$ for $u$ in a ball around $y$ and then bounding $k(t,y',u)$ by $\max_{i=1,\ldots,n} k(t,y_i,u)$ for a finite collection $y_1,\ldots,y_n$, for $u$ outside the ball.  
\end{proof}

%
%

\begin{proof}[Proof of Lemma~\ref{gmarkov}]
By the martingale representation theorem and \citet[Proposition 5.1 (iii)]{barrieu2013monotone}, the integrability of $\tilde \xi$ and $\xi^{-1}$ implies that 
there exists a couple of processes $(\rho_t,\gamma_t)$ adapted to the filtration generated by $Z$ so that $\int_0^t \gamma_s'\,\D Z_s$ is a  $\nu$--martingale with $\mye^\nu[\int_0^T |\gamma_s|^2\D s]<\infty$,
\begin{align}\label{eq:rho}
    \rho_t=\log (\tilde \xi)+\frac{1}{2}\int_t^T\gamma_s'\Sigma \gamma_s \D s-\int_t^T \gamma_s'\D Z_s,
\end{align}
and $\xi_t = \E^{\rho_t}$ is a uniformly integrable $\nu$--martingale with $\D \xi/\xi = \gamma'\,\D Z$.
By Girsanov's theorem,
$
\D \hat Z := \D Z - \Sigma \gamma\,\D t
$
defines a Brownian motion under $\hat\nu$.  

It remains to show that $\gamma_t$ depends on the path of $Z$ only via $Z_t$, so, with an abuse of notation, $\gamma_t = \gamma(t,Z_t)$ for a measurable function $\gamma$.
Define the optimal control problem under weak formulation on $[0,{T/2}]$, 
\begin{align}\label{eq:optphi}
    V(t,z)=\sup_{(x)} \mye^{ \nu_{x}}\left[\phi({T/2},Z_{{T/2}})+\int_t^{{T/2}}\a\tr\left(\Sigma \nabla^2\Gamma(s,Z_s)\right)-\frac{1}{2}x_s'\Sigma x_s \D s\right]
\end{align}
where $x=(x_s)$ is an $\cF^Z$-progressively measurable process with $\mye^\nu\int_t^{{T/2}}\|x_s\|^2\D s<\infty$ and under $\nu_x$ the process $Z$ satisfy for $s\in [t,{T/2}]$, $ Z_s=z+\int_t^s \Sigma x_r \D r+ Z^{x}_s-Z^{x}_t$ for some $(0,\Sigma)$-Brownian motion $Z^{x}$ under $\nu_{x}$. 
The definition of $\xi$ and the definition \eqref{defphi} of $\phi$ imply
$$\xi_t = \exp\left(\phi(t,Z_t) - \phi(0,0) + \alpha \int_0^t \tr(\Sigma \nabla^2\Gamma(s,Z_s))\,\D s \right)\,.$$
In combination with $\D \xi/\xi = \gamma'\,\D Z$, this implies
$$\gamma_t'\,\D Z_t = \frac{\D \xi_t}{\xi_t} = \D \phi(t,Z_t) + \alpha \tr(\Sigma \nabla^2\Gamma(t,Z_t))\,\D t + \frac{1}{2}\gamma_t'\Sigma \gamma_t\,\D t\,.$$
Therefore, for $t<T/2$,
\begin{align*}
    \phi(t,Z_t)=\phi({T/2},Z_{{T/2}})+\int_t^{{T/2}}\left\{\a\tr\left(\Sigma \nabla^2\Gamma(s,Z_s)\right) +\frac{1}{2}\gamma_s'\Sigma \gamma_s \right\}\,\D s-\int_t^{{T/2}} \gamma_s'\,\D Z_s,
\end{align*}
Since 
$$\frac{1}{2}\gamma_s'\Sigma \gamma_s= \sup_{x\in \R^n} \left\{x' \Sigma \gamma_s-\frac{1}{2} x'\Sigma x \right\}$$
with the supremum being achieved at $\gamma_s$, 
we have for all $(x_s)$ as above, 
\begin{align*}
    \phi(t,Z_t)&=\phi({{T/2}},Z_{{T/2}})+\int_t^{{T/2}}\left\{\a\tr\left(\Sigma \nabla^2\Gamma(s,Z_s)\right) +\frac{1}{2}\gamma_s'\Sigma \gamma_s \right\}\,\D s\\
    &-\int_t^{{T/2}} \gamma_s'(\D Z_s-\Sigma \gamma_s \D s)\\
&\geq\phi({{T/2}},Z_{{T/2}})+\int_t^{{T/2}}\left\{\a\tr\left(\Sigma \nabla^2\Gamma(s,Z_s)\right) +\frac{1}{2}x_s'\Sigma x_s \right\}\,\D s\\
    &-\int_t^{{T/2}} \gamma_s'(\D Z_s-\Sigma x_s \D s)\,.
\end{align*}
Since under $\hat \nu$, $\D Z_t-\Sigma\gamma_t \D t$ is a $(0,\Sigma)$-Brownian motion, we have
\begin{align*}
    \phi(t,z)&= \mye^{\hat \nu}\left[\phi({T/2},Z_{{T/2}})+\int_t^{{T/2}}\a\tr\left(\Sigma \nabla^2\Gamma(s,Z_s)\right)-\frac{1}{2}\gamma_s'\Sigma \gamma_s \D s|Z_t=z\right]\\
    &=\sup_{(x)}\mye^{ \nu_x}\left[\phi({T/2},Z_{{T/2}})+\int_t^{{T/2}}\a\tr\left(\Sigma \nabla^2\Gamma(s,Z_s)\right)-\frac{1}{2}x_s'\Sigma x_s \D s\right]. 
\end{align*}
Thus, $\phi(t,z)=V(t,z)$. 
Thanks to the continuity of $\phi$, the nonnegativity of $\tr\left(\Sigma \nabla^2\Gamma\right)$, and the convexity of $x\mapsto x'\Sigma x$, the optimization problem \eqref{eq:optphi} satisfies all the assumption of \citet[Theorem 5.15]{haussmann1990existence}. Thus, the optimal control $\gamma_t$ can be written as feedback control $\gamma(t,Z_t)$ on $[0,{T/2}]$. We can repeat the procedure on the intervals $[(2^n-1)T/2^n,(2^{n+1}-1)T/2^{n+1}]$ to obtain $\gamma(t,z)$ on $(0,T)$. 

\end{proof}

%
\subsection{Numerical Methods for Section~\ref{s:options}}\label{app:numerics}
%

To solve the risk-neutral model, we need to compute the transport map $\nabla \Gamma (y)$ and the pricing rule $H(t,y) = \nabla \Gamma (t,y)$, which is the conditional expectation of $\nabla \Gamma (Z_T)$. In the risk-averse case, we also need to compute the vector of prices of risk $\nabla \phi(t,y)$. Due to the fact that the underlying space for order imbalances is two dimensional, our methods are based on discretization of this space. Higher dimensional problems would have required more sophisticated tools.

\subsubsection{Transport Map and Potential}
The transport map can be written explicitly for any pair of one-dimensional distributions and in the Gaussian-to-Gaussian case in any dimension. The map may occasionally be constructed geometrically in special cases, such as in \citet{Back:1993:AsymmetricInformationOptions}. Thankfully, where an explicit optimal transport map is not known, numerical estimation is possible. Many algorithms exist, and efficient implementations are widely available. As observed in \citet{Back:1993:AsymmetricInformationOptions}, with options we expect discontinuities of the transport map $\nabla \Gamma(y)$. After exploring alternative methods including the Sinkhorn algorithm \citep{Cuturi:2013:SinkhornDistancesLightspeed},  input convex neural networks  \citep{makkuva2020optimal}, we chose to use classical linear programming approach for the computation of the transport map, because this method handles the discontinuities of the transport map better than alternatives. This methodology is based on generating a grid of size $N\times N$ (for $N\sim 151$) in the order imbalance space and a grid of size $M$ (with $M\sim 1001$) for the price of the underlying so that denoting $y_{i,j}\in \R^2$ and $v_k\in \R$ the grid points, for appropriate choice of $a$ and $b$, the measures
$$\sum_{i=1,j=1}^N a_{i,j} \delta_{y_{i,j}},\, \sum_{k=1}^M b_{k} \delta_{(v_k,(v_k-K)^+)}$$ 
approximate the two dimensional distributions of $Z_T$ and $\tilde v$. 
Then, the optimization problem with constraint 
$$\inf_{\pi} \sum_{i,j=1}^N\sum_{k=1}^M \pi_{i,j,k} \left(|y_{i,j,1}-v_k|^2+|y_{i,j,2}-(v_i-K)^+|^2\right)$$
is solved,\footnote{The POT (Python Optimal Transport) library contains an implementation of this procedure.} where $\pi$ is a probability distribution that has to satisfy positivity constraints and marginal distribution constraints given by $a_{i,j}$ and $b_k$. Thanks to Brenier's theorem the optimizer $\pi^*$ is supported on the graph of $\nabla \Gamma$. 

In order to obtain the potential $\Gamma$ from the values of $\nabla \Gamma(y_{i,j})$, we train a feed-forward neural network whose gradient at point $y_{i,j}$ is approximately $\nabla \Gamma(y_{i,j})$. This training step is done by minimizing the loss function
\begin{align}\label{eq:minpot}
\min_{f} \sum_{i=1,j=1}^N\sum_{k=1}^2\left|\frac{f(y_{i,j}+\e e_k)-f(y_{i,j}-\e e_k)}{2\e}- \pa_k \Gamma(y_{i,j})\right|^2+Conv(f)
\end{align}
where the minimization problem is over the parameters of the feed-forward neural network $f$ and $(e_k)_{k=1,2}$ is the canonical basis of $\R^2$. The term $Conv(f)$ is a cost for lack of convexity of the neural network that is defined in \citep{makkuva2020optimal}. The minimizer $f^*$ of this problem is an approximation of the potential function $\Gamma(y)$ and in this low dimensional setting we can plot $\nabla f$ and compare it to $\nabla \Gamma(y_{i,j})$.  If needed, we can compute $\mye[\Gamma(Z_T)]$ by a Monte Carlo scheme and subtract it from $\Gamma$ to obtain the normalization of $\Gamma$ in Theorem \ref{thm_brenier}.

\subsubsection{Backward integration of PDEs}
For time step $\Delta t>0$ fixed, the functions $\Gamma(t,y)$, $\nabla \Gamma(t,y)$ and $\E^{\phi(t,y)}$
satisfy the dynamic programming principles
$$\Gamma(t,y)= \mye\left[\Gamma(t+\Delta t,y+Z_{\Delta t})\right],\,\nabla\Gamma(t,y)= \mye\left[\nabla\Gamma(t+\Delta t,y+Z_{\Delta t})\right]$$
and 
\begin{align*}\E^{\phi(t,y)}&= \mye\left[\E^{\phi(t+\Delta t,y+Z_{\Delta t})+\alpha \tr\left(\Sigma \int_t^{t+\Delta t} \nabla^2 \Gamma(u,Z_u)\,\D u  \right) }\right]\\
&\sim \left(1+\alpha \Delta t \tr\left(\Sigma  \nabla^2 \Gamma(t,y)  \right)\right)\mye\left[\E^{\phi(t+\Delta t,y+Z_{\Delta t}) }\right]
\end{align*}
where the approximation is valid for small $\Delta t$. We choose $\Delta t\sim 1/100$. Then, using the transition densities of $Z_{\Delta t}$ on the grid $y_{i,j}$, we use the dynamic programming principles to compute these functions on the space-time grid $t=k \Delta t$
and $y=y_{i,j}$ backward starting from the final conditions 
$$\Gamma(T,y_{i,j})=\Gamma(y_{i,j}),\,\nabla \Gamma(T,y_{i,j})=\nabla \Gamma(y_{i,j}),\mbox{ and }\E^{\phi(T,y_{i,j})}=\E^{-\a y_{i,j}'\nabla \Gamma(y_{i,j})+\a \Gamma(y_{i,j})}.$$

\subsubsection{Computing the Physical Distribution}
We can take the logarithm of $\E^{\phi(k\Delta t,y_{i,j})}$ and differentiate numerically to obtain $\nabla \phi(k \Delta t,y_{i,j})$. Then, we simulate $Y$ under the physical measure as \eqref{newdY2} via an Euler scheme
$$Y_{k+1}=Y_k+\Sigma \nabla \phi(k\Delta t,Y_k)\,\Delta t + \Delta Z_{k+1}$$
where $\Delta Z_{k+1} = Z_{(k+1)\Delta t}-Z_{k\Delta t}$ and $\nabla \phi(k\Delta t,Y_k)$ is evaluated at $y_{i,j}$ closest to $Y_k$. The empirical distribution of the resulting random vector $\nabla \Gamma(Y_N)$ is approximately $\hat F$. 

$$E^P (\int (v-P_t)^2dt)^{1/2}=E^Q\xi (\int (v-P_t)^2dt)^{1/2}\leq (E^Q\xi^2)^{1/2}(E^Q (\int (v-P_t)^2dt))^{1/2}$$

%
%

\subsection{Example}\label{example:ralog}
Suppose there is a single risky asset, and it has a lognormal distribution under $F$ that is `winsorized' at some $v^*$, meaning that the probability mass above $v^*$ is shifted to $v^*$:  $\log \tv = m + \sigma_v (\tilde x \wedge x^*)$ for  constants $m$ and $\sigma_v$, where $\tilde x$ is a standard normal and $x^* = (\log  v^*-m)/\sigma_v$.   Because of the truncation, this example satisfies condition (i) of Theorem~\ref{thm:riskaverse2}.  Adopt the notation $\sigma_z^2 = \Sigma$ and $\lambda = \sigma_v/\sqrt{T}\sigma_z$.  The transport map is
$ \Gamma' (y) = \E^{m+ \lambda (y \wedge y^*)}$, where $y^* = \sqrt{T}\sigma_zx^*$.  
The Brenier potential is
\begin{align*}
    \Gamma(y) & = A + \frac{\E^m}{\lambda} \begin{cases}
\E^{\lambda y} & \text{if $y \le y^*$}\,,\\
\E^{\lambda y*} + \lambda (y-y^*)\E^{\lambda y*} & \text{if $y > y^*$}\,,
\end{cases}\\
& = A +\frac{\E^m}{\lambda}\bigg[\E^{\lambda y} - (\E^{\lambda y}-\E^{\lambda y^*})^+\bigg]  + \E^{m+\lambda y^*}( y-\ y^*)^+ \,,
\end{align*}
where the constant $A$ is determined by normalization.  
Thus, 
\begin{multline}
    \Gamma(t,y)= A+\frac{\E^m}{\lambda}\left[\E^{\lambda y+(T-t)\sigma_v^2/2T} -  C_{BS}\left(\E^{\lambda y+(T-t)\sigma_v^2/2T},\E^{\lambda y^*},\sigma_v/\sqrt{T},T-t\right)\right]\\
    +\E^{m+\lambda y*} \left[(y-y^*)N\left(\frac{y-y^*}{\sigma\sqrt{T-t}}\right)+\sigma\sqrt{T-t}N'\left(\frac{y-y^*}{\sigma_z\sqrt{T-t}}\right)\right]\,,
\end{multline}
where $N$ denotes the standard normal distribution function, $N'$ denotes the standard normal density function, and 
$C_{BS}(S,K,\sigma_v,\tau)$ is the Black-Scholes price of a call option with strike $K$ and time-to-maturity $\tau$ on an underlying with price $S$ and volatility $\sigma_v$.
The equilibrium price is
\begin{multline*}
   P_t = \frac{\partial\Gamma}{\partial y}(t,Y_t) = \E^{m+\lambda Y_t+(T-t)\sigma_v^2/2T} \left[1 - \Delta_{BS}\left(\E^{\lambda Y_t+(T-t)\sigma_v^2/2T},\E^{\lambda y^*},\sigma_v/\sqrt{T},T-t\right)\right]\\
    +\E^{m+\lambda y*} N\left(\frac{Y_t-y^*}{\sigma_z\sqrt{T-t}}\right)\,,
\end{multline*}
where $\Delta_{BS}(S,K,\sigma_v,\tau)$ denotes the call option delta.  Kyle's lambda is
\begin{align*}
   \frac{\partial^2\Gamma}{\partial y^2}(t,y) &= \lambda\E^{m+\lambda y+(T-t)\sigma_v^2/2T} \left[1 - \Delta_{BS}\left(\E^{\lambda y+(T-t)\sigma_v^2/2T},\E^{\lambda y^*},\sigma_v/\sqrt{T},T-t\right)\right]\\
   &- \lambda\E^{m}\E^{2\lambda y+(T-t)\sigma_v^2/T}  \gamma_{BS}\left(\E^{\lambda y+(T-t)\sigma_v^2/2T},\E^{\lambda y^*},\sigma_v/\sqrt{T},T-t\right)\\
    &+\frac{\E^{m+\lambda y*}}{\sigma_z\sqrt{T-t}} N'\left(\frac{y-y^*}{\sigma_z\sqrt{T-t}}\right)\,,
\end{align*}
where $\gamma_{BS}$ denotes the call option gamma.
To compute the physical dynamics with risk aversion, we need to compute $\partial \phi/\partial y$, which we can do by solving the PDE \eqref{pde}.  
We then simulate the SDE
\begin{align}\label{eq:sde1}
\D Y_t = \sigma_z^2 \frac{\partial \phi}{\partial y}(t,Y_t)\,\D t + \D \hat Y_t \,,
\end{align}
generating $\hat Y$ as a $(0,\sigma_z^2)$--Brownian motion.  The physical distribution of $\tv$ is the distribution of $\E^{m+\lambda (Y_T \wedge y^*)}$ when $Y$ is simulated in this way (because $\tv = P_T = \E^{m+\lambda (Y_T\wedge y^*)}$ in equilibrium).  

Figure~A.1 illustrates the model.   The figure indicates that the risk-neutral variance of $\tv$ is higher than the physical variance.  We will establish this analytically for normal distributions in Section~\ref{s:normal}.  It occurs because the risk-neutral distribution overweights the tails of the distribution, where the market makers suffer losses to the informed trader.  In Panel~(a) of Figure~A.1, market makers start with a positive endowment of the asset.  Consequently, the risk-neutral distribution overweights low realizations of $\tv$, producing a risk-neutral mean that is lower than the physical mean, implying a positive risk premium.  The risk premium is negative in Panel~(b) where the endowment is negative.  We establish this relationship between the market makers' endowments and asset risk premia analytically in Section~\ref{s:normal} for normal distributions.  

%
%

\begin{figure}\label{fig:32}
\begin{center}
\includegraphics[scale=0.6]{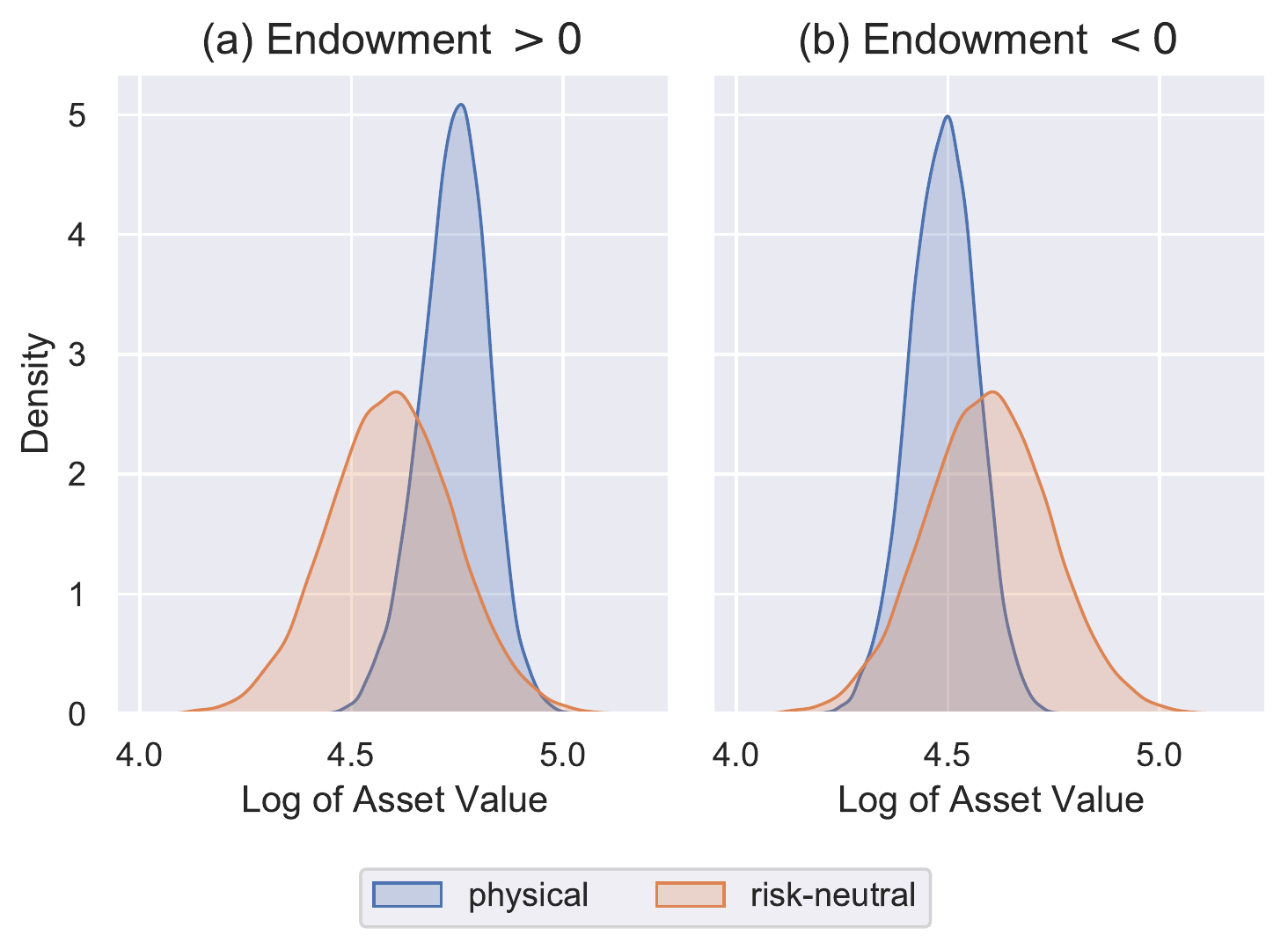}
\end{center}
\caption{\textbf{Physical and Risk-Neutral Distributions in a Truncated Lognormal Model} \newline
In both panels, the risk-neutral distribution is lognormal with mean $\,=100$ and standard deviation $\,=15$ winsorized at $\,160$.  Also, $\sigma_z\sqrt{T}=1$.  The endowment of market makers is $+1$ in Panel (a) and $-1$ in Panel (b).  Risk aversion is $\alpha=0.2$.}
\end{figure}

We can follow the same procedure to solve a lognormal model without truncation, though we have been unable to verify that the non-truncated model satisfies the assumptions of Theorem~\ref{thm:riskaverse}.  Without truncation, the potential is 
$\Gamma(y) = \E^{m+\lambda y}/\lambda-\E^{m+\sigma_v^2/2}/\lambda$,  and
$$\Gamma(t,y) = \frac{1}{\lambda}\E^{m+\lambda y + (T-t)\sigma_v^2/2T}-\E^{m+\sigma_v^2/2}/\lambda\,.$$
The transport map is $ \Gamma' (y) = \E^{m+\lambda y}$, and
the price is
$$P_t = \Gamma_y(t,Y_t) = \E^{m+\lambda Y_t + (T-t)\sigma_v^2/2T}\,.$$
Thus, $\D P_t/P_t = \lambda\,\D Y_t$.

\end{appendix}

\end{document}